\begin{document}


\title{Clustering Stream Data by Exploring the Evolution of Density Mountain}
\vspace{-0.1in}



\numberofauthors{1} 

\author{
%
%
\alignauthor
Shufeng Gong, Yanfeng Zhang, Ge Yu\\
      \affaddr{Northeast University, Shenyang, China}\\
       \email{gongsf@stumail.neu.edu.cn, \{zhangyf, yuge\}@mail.neu.edu.cn}
}
\date{30 July 1999}
\vspace{-0.1in}
\maketitle

\hyphenpenalty=5000

\begin{abstract}
Stream clustering is a fundamental problem in many streaming data analysis applications. Comparing to classical batch-mode clustering, there are two key challenges in stream clustering: (i) Given that input data are changing continuously, how to incrementally update clustering results efficiently? (ii) Given that clusters continuously evolve with the evolution of data, how to capture the cluster evolution activities? Unfortunately, most of existing stream clustering algorithms can neither update the cluster result in real time nor track the evolution of clusters.

In this paper, we propose an stream clustering algorithm \textit{EDMStream} by exploring the \textbf{E}volution of \textbf{D}ensity \textbf{M}ountain. The \emph{density mountain} is used to abstract the data distribution, the changes of which indicate data distribution evolution. We track the evolution of clusters by monitoring the changes of density mountains. We further provide efficient data structures and filtering schemes to ensure the update of density mountains in real time, which makes online clustering possible. The experimental results on synthetic and real datasets show that, comparing to the state-of-the-art stream clustering algorithms, e.g., D-Stream, DenStream, DBSTREAM and MR-Stream, our algorithm can response to a cluster update much faster (say 7-15x faster than the best of the competitors) and at the same time achieve comparable cluster quality. Furthermore, \textit{EDMStream} can successfully capture the cluster evolution activities.
\end{abstract}

\vspace{-0.1in}
\keywords{Streaming data; stream clustering; density mountain}
\newtheorem{theorem}{theorem}
\newtheorem{lemma}{lemma}
\newtheorem{proposition}{proposition}

\newcommand{\oldcolor}[1]{\textcolor{blue}{{}#1}}
\newcommand{\revisioncolor}[1]{\textcolor{red}{{}#1}}

\section{Introduction}\label{sec:intro}
Recent advances in both hardware and software have resulted in a large amount of data, such as sensor data, stock transition data, news, tweets and network flow data etc. A kind of such data that continuously and rapidly grow over time are referred to as data streams \cite{aggarwal2007data}. Clustering stream data is one of the most fundamental problems in many streaming data analysis applications. Basically, it groups streaming data on the basis of their similarity, where data evolves over time and arrives in an unbounded stream.

Discovering of the patterns hidden in streams is substantial and essential for understanding and further utilizing these data, and there are large number of efforts contribute it, such as \cite{begum2014rare, cao2014interactive, huang2015streaming}. Take the news recommendation system as an example. The news recommendation system aims to present the news that will interest users. The news are clustered according to their similarities, so that the news in the same cluster as that a user has visited is recommended to the user. As the news are generated continuously, the news can be treated as a news stream. Furthermore, the news clusters are evolving as fresh news coming out and outdated news fading out. In order to make a timely recommendation, it is crucial to capture the cluster evolution and update the news clusters in realtime. In fact, stream clustering is widely used in a broad range of applications, including network intrusion detection, weather monitoring, and web site analysis. Due to its great importance, stream clustering has attracted many research efforts \cite{aggarwal2003framework, Cao2006Density, Chen2007Density, Isaksson2012SOStream, Silva2014Data, Zhang2014Data}.

Comparing to classical batch-mode clustering methods \cite{rodriguez2014clustering, Gong2016EDDPC, Zhang2016Efficient}, there are two additional key challenges in stream clustering. First, stream data are supposed to arrive in a high speed. In order to reflect changes of the underlying stream data, stream clustering algorithms are required to \emph{update clustering results quickly and frequently}. Second, multiple clusters might merge into a large cluster, and a single cluster might be split into multiple small clusters over time. In order to capture the cluster evolution activities, stream clustering algorithms are required to \emph{have the ability of tracking cluster evolution}.

For the first challenge, most existing solutions \cite{Cao2006Density,Chen2007Density} summarize data points in stream using summary structures (e.g., micro-clusters \cite{aggarwal2003framework,Cao2006Density}, grids\cite{Chen2007Density}) and update these summarizations upon receiving new points. Using the summary structures can reduce processing overhead. Then an offline batch-mode clustering algorithm is periodically performed on these summaries to update clustering result. However, these stream clustering algorithms are not designed for incremental update and are still very expensive to offline update clustering results. For the second challenge, they leverage an additional offline cluster evolution detecting procedure (e.g., MONIC \cite{Spiliopoulou2006MONIC} and MEC \cite{Oliveira2010MEC}). Due to the expensive offline clustering and offline tracking step, the existing solutions can neither update the clustering result in real time nor monitor the evolution of clusters in real time. Thus, a stream clustering algorithm that can update clustering result and monitor the cluster evolution in real time is desired.


In this paper, we propose a density-based stream clustering algorithm \textit{EDMStream}. We rely on the first assumption that cluster centers are surrounded by neighbors with lower local density\footnote{Points with lower local density means that these points are in a low density region.}. Then we can draw the density distribution of points as shown in Fig. \ref{mersplit}(a), which is referred to as \emph{density mountain}. The cluster center is at the mountain peak and the borderline points are at the mountain foot. Note that this is an illustrative figure and the points are in a \textit{1-dimension} space. In general, the density mountain should be drawn in a multi-dimensional plot. We rely on the second assumption that the center point has a relatively large distance from any other higher density points. As shown in Fig. \ref{mersplit}(b), there are two clusters corresponding to two density mountains, and there is a valley between two mountains. The right density mountain's peak has a relatively large distance to the higher density points, since the higher density points are located on left (higher) density mountain, while other points on the way up to the density peak are with relatively small distance to the higher density points. As a result, a wide density valley appears between two density mountains, and the nearest distance to higher density point (labeled as $\delta$ in Fig. \ref{mersplit}(b)) plays a key role in identifying clusters. The cluster evolution of data stream can be detected as long as the distance to the nearest higher density point is large enough or small enough.

To quickly update clustering result, we first summarize a set of close points as a \emph{cluster-cell} to reduce computation and maintenance cost. We then propose an efficient hierarchical tree-like structure \emph{Dependency-Tree (DP-Tree)} to abstract the density mountains. The DP-Tree maintains the relationships between cluster-cells (i.e., a cluster-cell and its nearest higher density cluster-cell) and implies the relationship between clusters. Our algorithm can quickly return the update clustering result by efficiently updating the DP-Tree structure. Meanwhile, by tracking the update of the DP-Tree structure, we can also track the cluster evolution activities.

\begin{figure}
  \centering
    \includegraphics[width=3in]{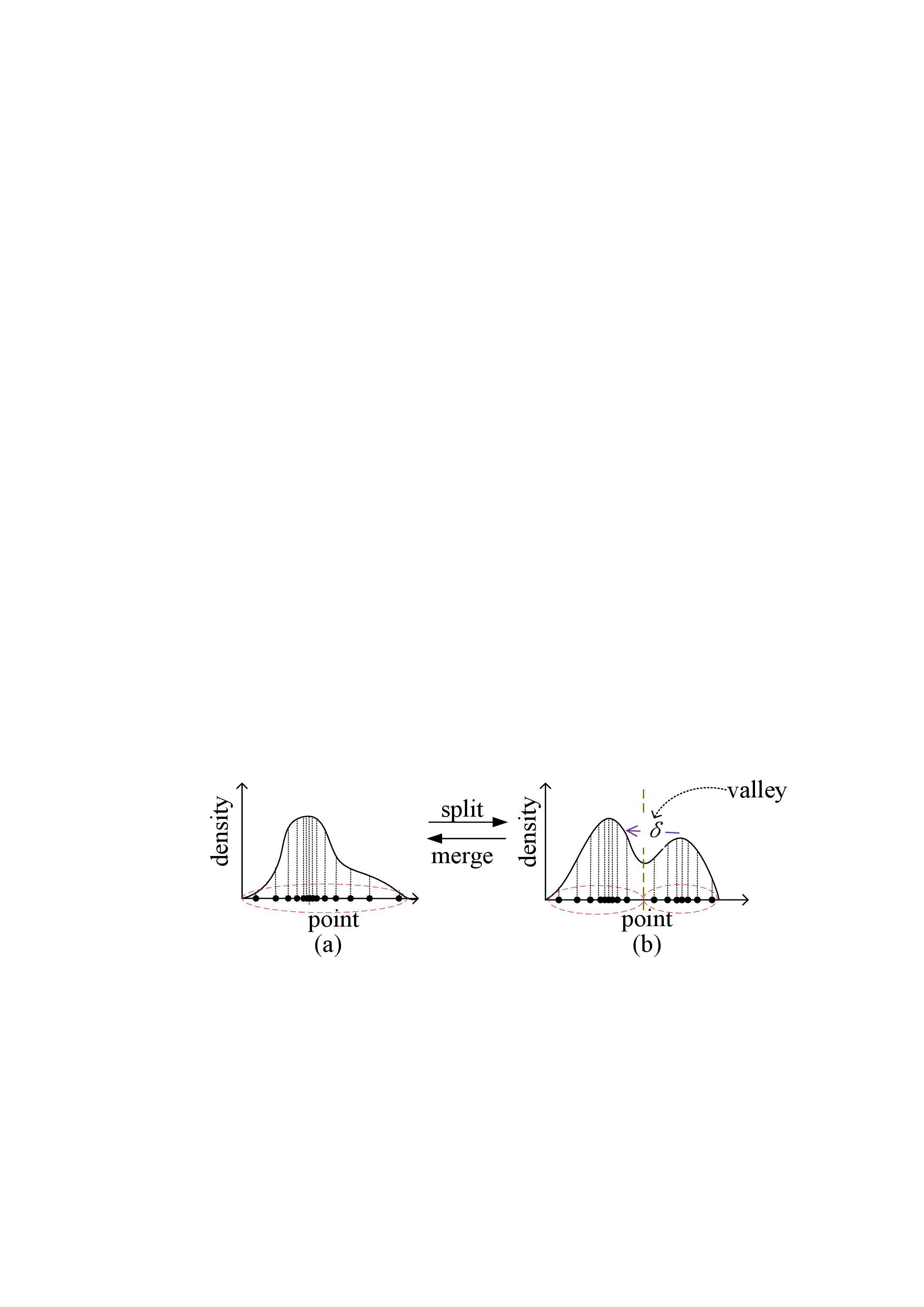}\\
    \vspace{-0.1in}
  \caption{The shape of density mountain changes as the (1-dimension) data distribution evolves.}\label{mersplit}
  \vspace{-0.1in}
\end{figure}


Another feature that distinguishes our algorithm from other existing solutions is the ability of adjusting itself and adapting to data distribution changes. Through a user-interaction step in the initialization phase, it can learn user preference to cluster granularity. In terms of the user preference, it can dynamically adjust the cluster separation strategy for the evolving data stream. This can greatly improve the quality of clustering result as shown in our experimental results.

We summarize the contributions of this paper as follows.

\begin{list}{\labelitemi}{\setlength{\leftmargin}{5mm}\setlength{\itemindent}{0mm}
\setlength{\topsep}{0.5mm}\setlength{\itemsep}{2mm}\setlength{\parsep}{0mm}}


\item \textbf{(Effectiveness on Cluster Evolution Tracking)} We propose a novel cluster evolution detection strategy by monitoring density mountains. Comparing to existing solutions, it can not only achieve comparable cluster quality but also tell how the clusters evolve.
\vspace{-0.02in}
\item \textbf{(Efficiency)} We design a highly efficient data structure Dependency-Tree to maintain the states of density mountains. By using two filtering strategies, a large amount of unnecessary tree update operations are avoided. It improves the performance a lot for cluster result updates and helps monitor cluster evolution efficiently.
\vspace{-0.02in}
\item \textbf{(Adaptability)} We provide an automatic adjusting strategy. It learns the user preference through an initial user-interaction step and automatically updates the algorithm parameters according to data evolution, which makes our clustering algorithm self-adaptive to data distribution changes.
\vspace{-0.02in}
\item We perform extensive experiments to evaluate \textit{EDMStream}. We compare \emph{EDMStream} with the state-of-art stream clustering algorithms, including D-Stream \cite{Chen2007Density}, DenStream \cite{Cao2006Density}, DBSTREAM \cite{hahsler2016clustering} and MR-Stream \cite{Wan2009Density}. Our results show that \emph{EDMStream} outperforms these algorithms on both effectiveness and efficiency. \emph{EDMStream} only takes 7-23$\mu$s for each cluster update. It exhibits 7-15x speedup over the other algorithms. We also demonstrate its adaptability to automatically adjust key parameters according to data evolution. Additionally, we also introduce a news recommendation use case and show its ability to track cluster evolution.
\end{list}

The remainder of this paper is organized as follows. Sec. \ref{sec:DP2} reviews the Density Peaks Clustering and proposes our Dependency-Tree structure. Sec. \ref{sec:streamcluster} presents our definitions and related concepts of stream clustering. In Sec. \ref{sec:edmstream}, we introduce \textit{EDMStream} algorithm. Sec. \ref{sec:adaptive} discusses \textit{EDMStream}'s self-adaptive strategy. Experimental results are shown in Sec. \ref{sec:expr}. Sec. \ref{sec:relate} reviews the related work and Sec. \ref{sec:conclu} concludes this paper.

\newcommand{\Paragraph}[1]{\smallskip\noindent{\bf #1.}}
\newcommand{\change}[1]{\textcolor{OliveGreen}{#1}}

\section{DP Clustering and DP-Tree}
\label{sec:DP2}
\vspace{-0.07in}
\subsection{DP Clustering}
\label{sec:DP}
Density Peaks (\textit{\textbf{DP}}) Clustering \cite{rodriguez2014clustering} is a novel clustering algorithm recently proposed by Rodriguez and Laio \cite{rodriguez2014clustering}. The algorithm is based on two observations: (i) cluster centers are often surrounded by neighbors with lower local densities, and (ii) they are at a relatively large distance from any points with higher local densities. Correspondingly, DP computes two metrics for every data point: (i) its \emph{local density} $\rho$ and (ii) its distance $\delta$ from other points with higher density. DP uses the two metrics to locate density peaks, which are the cluster centers.

\begin{figure}
\center
    \subfloat[Plane view]{\fbox{\includegraphics[width = 1.3in]{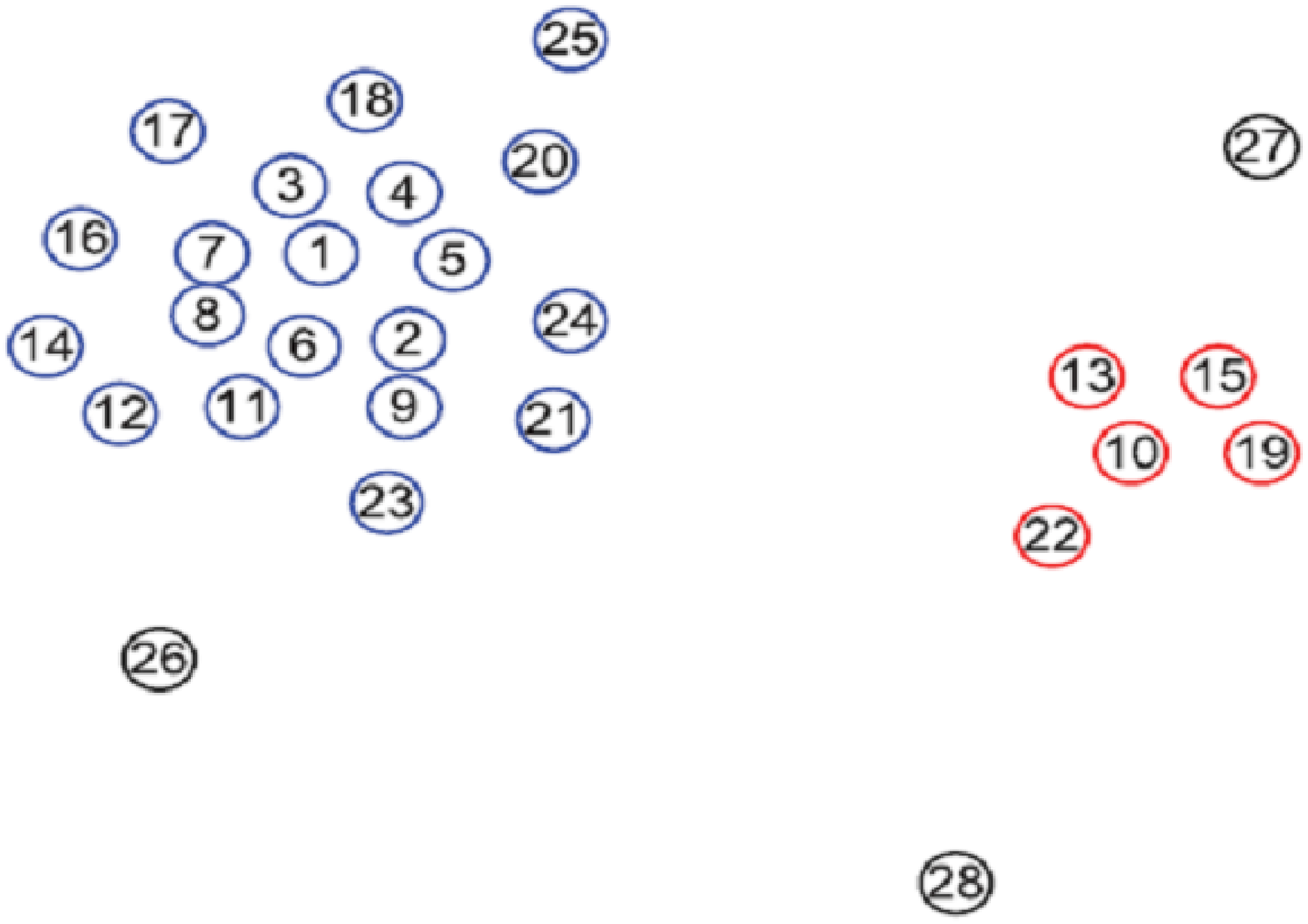}\label{dpa}}}
    \hspace{0.05in}
    \subfloat[Decision graph]{\fbox{\includegraphics[width = 1.3in]{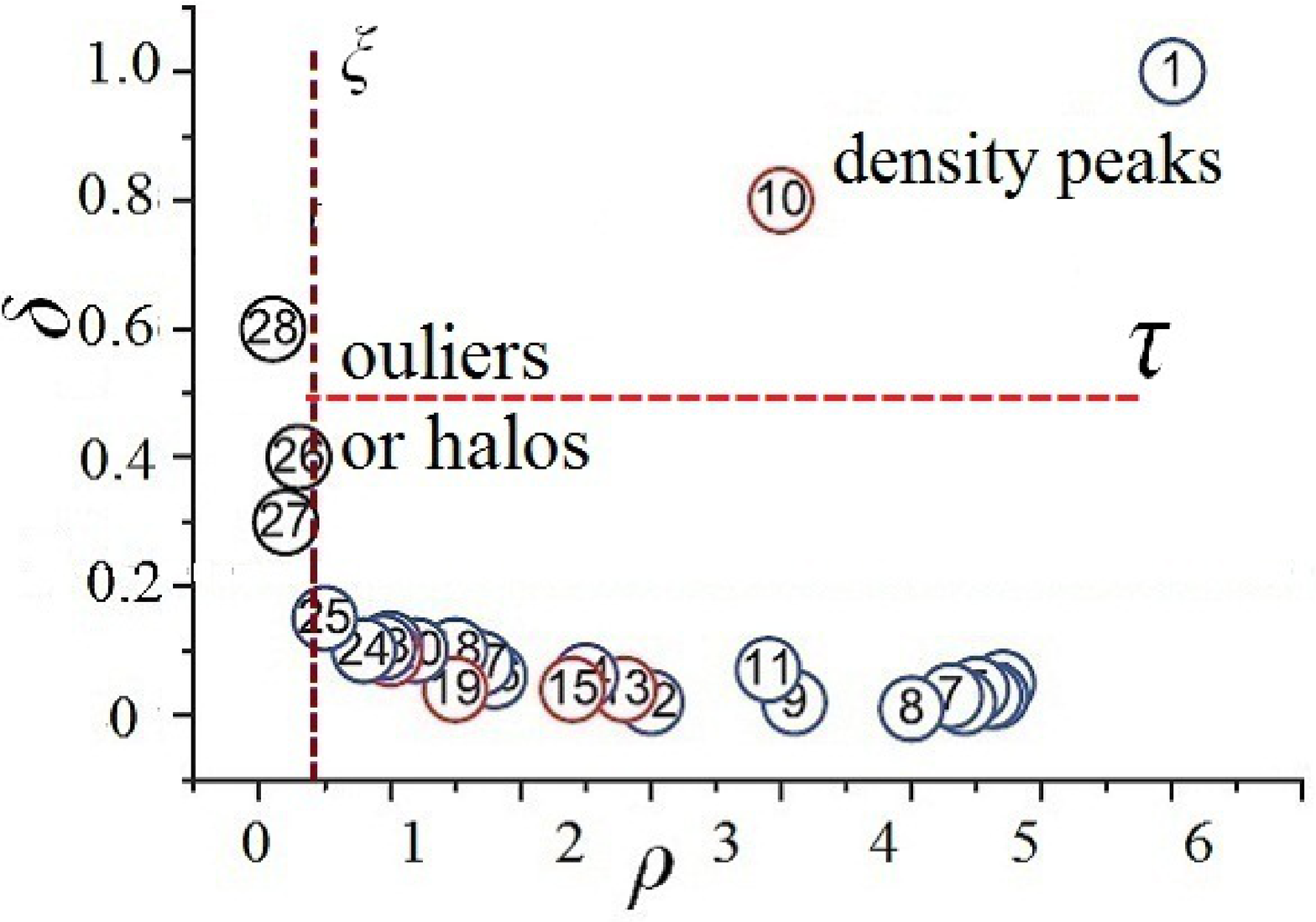}\label{dpb}}}
    \vspace{-0.1in}
\caption{Density Peaks Clustering  \protect\cite{rodriguez2014clustering}}\label{fig:DPClustering}
\vspace{-0.2in}
\end{figure}

The \emph{\textbf{local density}} $\rho_i$ of data point $p_i$ is the number of points whose distance to $p_i$ is smaller than $d_c$.
\vspace{-0.07in}
\begin{equation}\label{eq:rho}
\vspace{-0.07in}
  \rho_i=|\{p_j\big| |p_i,p_j| <d_c\}|
\end{equation}
\noindent where $|p_i,p_j|$ is the distance\footnote{In this paper, the distance means the Euclidean distance unless particularly mentioned.} from point $p_i$ to point $p_j$, and $d_c$ is called the cutoff distance. We use the density value to distinguish \emph{\textbf{outliers} whose density is no bigger than a predefined value $\xi$ (i.e., $\rho_i\leq\xi$).}

The \emph{\textbf{dependent distance}} $\delta_i$ of point $p_i$ is computed as
\begin{equation}\label{eq:delta}
\vspace{-0.05in}
  \delta_i=\min_{j:\rho_j>\rho_i}(|p_i,p_j|)
  \vspace{-0.05in}
\end{equation}
\noindent It is the minimum distance from point $p_i$ to any other point whose local density is higher than that of point $p_i$. Suppose point $p_j$ is point $p_i$'s the nearest neighbor\footnote{If multiple higher density nearest neighbors' distances are equal, we randomly pick one among them.} with higher density, i.e., $p_j=argmin_{j:\rho_j>\rho_i}(|p_i,p_j|)$. We say that point $p_i$ is \emph{\textbf{dependent}} on point $p_j$.

Let us think more about the local density $\rho$ and dependent distance $\delta$. A point with small $\rho_i$ is likely to be outliers no matter how large its $\delta_i$ is. Next, we focus on the points with relatively large $\rho_i$ to study the effect of $\delta_i$. Small $\delta_i$ implies that point $p_i$ is surrounded by at least one higher density neighbor. Anomalously large $\delta_i$ implies that point $p_i$ is far from another dense area and point $p_i$ itself could be the density peak of its own region, since it has no higher density neighbor. $\delta_i$ is much larger than the typical nearest neighbor distance only for points that are local or global maxima in the density. Thus, cluster centers are recognized as points for which the value of $\delta_i$ is anomalously large as well as large $\rho_i$. This is also illustrated in Fig. \ref{mersplit}, where the points in the same density mountain have relatively small dependent distance except for the density peak (i.e., cluster center).

If the dependent distance from a point $p_i$ to its dependency $p_j$ is no bigger than $\tau$ (i.e., $\delta_i\leq\tau$), we say it is \emph{\textbf{strongly dependent}}, otherwise it is \emph{\textbf{weakly dependent}}. For a set of points $\{p_1,p_2,\ldots,p_n\}$, there exist a strongly dependent chain such that point $p_i$ ($1\leq i\leq n-1$) is strongly dependent on $p_{i+1}$, where the end point $p_n$ is not strongly dependent on any other point (might be weakly dependent on other point). We call point $p_n$ as any $p_i$'s \emph{\textbf{strongly dependent root}}. Point $p_i$ ($1\leq i\leq j-1$) is \emph{\textbf{dependency-reachable to}} any $p_j$ ($i+1\leq j\leq n$). Then, a cluster in DP algorithm can be defined as follows:

\newdef{definition}{\textbf{Definition}}
\vspace{-0.05in}
\begin{definition}\label{def:cluster}
\emph{(\textbf{Cluster}) Let $P$ be a set of points. A cluster $C$ is a non-empty subset of $P$ such that:}
\begin{itemize}
\vspace{-0.05in}
  \item \emph{(Maximality) If a point $p\in C$, then any non-outlier point $q$ that is dependency-reachable to $p$ also belongs to $C$.}
      \vspace{-0.05in}
  \item \emph{(Traceability) For any points $p_1,p_2,\ldots\in C$, they have the same strongly dependent root, which is the density peak in $C$.}
      \vspace{-0.05in}
\end{itemize}
\end{definition}

Fig. \ref{fig:DPClustering} illustrates the process of Density Peaks Clustering through a concrete example. Fig. \ref{dpa} shows the distribution of a set of 2-D data points. Each point $p_i$ is depicted on a \emph{decision graph} by using ($\rho_i$, $\delta_i$) as its x-y coordinate as shown in Fig. \ref{dpb}. By observing the decision graph, the \emph{density peaks} can be identified in the top right region since they are with relatively large $\rho_i$ and large $\delta_i$ (i.e., $\rho_i>\xi$ and $\delta_i>\tau$). The \emph{outliers} or \emph{halos}\footnote{The cluster halos are the points that locate at the borders of clusters.} can be identified in the left region whose $\rho_i\leq\xi$. Since each point is only dependent on a single point, for each point there is a dependent chain ending at a density peak. After the density peaks (as cluster representatives) have been found, each remaining point is assigned to the same cluster as its dependent point.



\subsection{Dependency Tree (DP-Tree)}
\label{sec:DP-Tree}

To abstract the DP clustering, we propose a tree-like structure,  \emph{Dependency Tree}, which can track the correlations between points and between clusters. As mentioned in Sec. \ref{sec:DP}, the clustering process is achieved by tracking the dependency chain. The point-point dependency relationship implies the point-cluster correlations. In order to support online stream processing, an efficient data structure is desired to maintain the dependency relationship. Since each point is only dependent on a single point (except for the absolute density peak with the highest density), the point-point dependencies can be abstracted by a tree-like structure, which is denoted as \emph{Dependency Tree} (\textbf{\emph{DP-Tree}}). Fig. \ref{fig:dptree-point} shows an illustrative DP-Tree for the points shown in Fig. \ref{dpa}.


\begin{figure}
  \centering
    \includegraphics[width=3in]{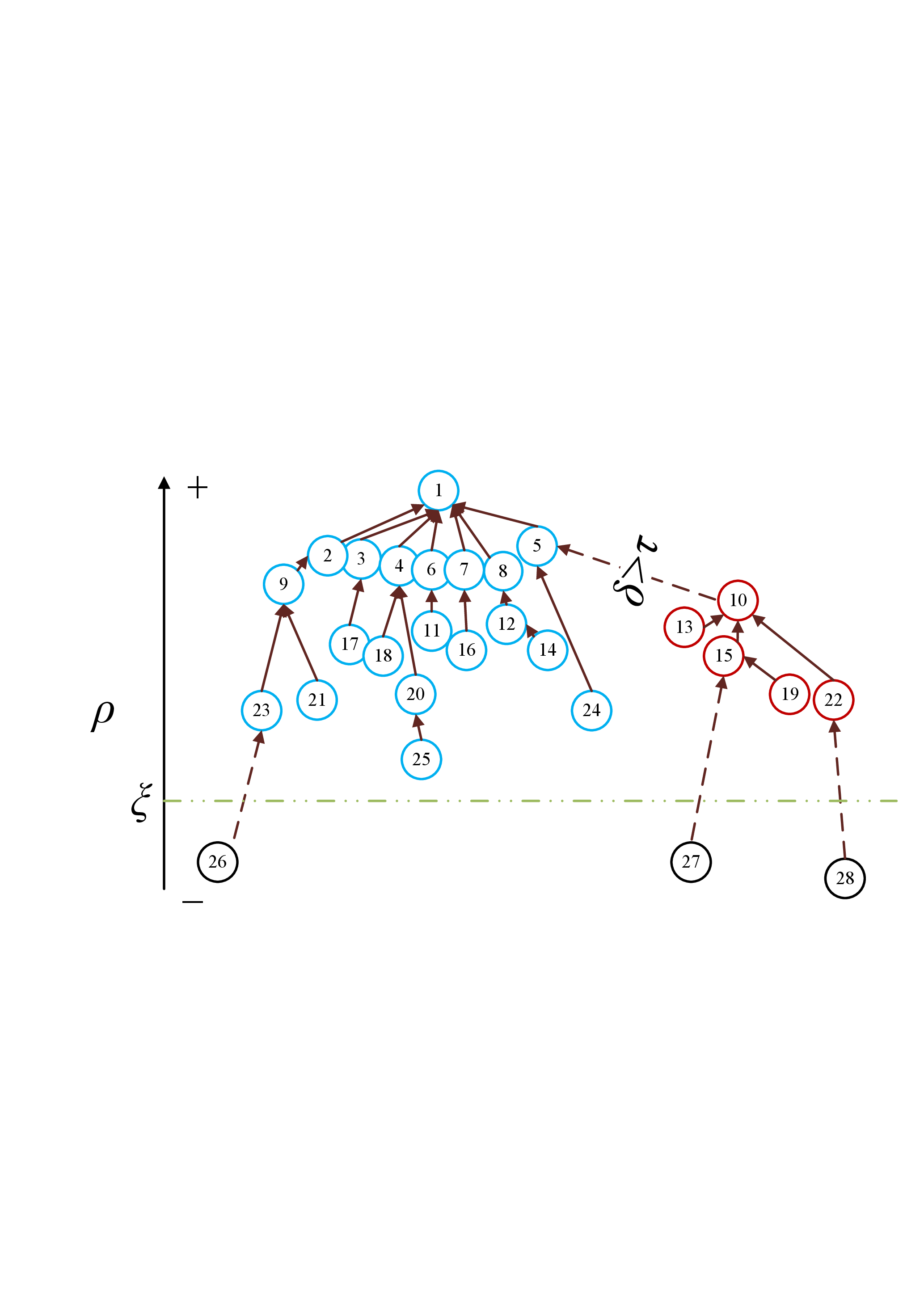}\\
    \vspace{-0.08in}
  \caption{An illustrative example of DP-Tree. The dependency relationship is captured by setting arrows between points. The length of arrows indicates the dependent distance ($\delta$). The solid arrows indicate strongly dependent relationship. The dashed arrows indicate weakly dependent relationship.
  The points residing at higher levels of DP-Tree are with higher densities ($\rho$). The root of DP-Tree is the absolute density peak with the highest local density.}
  \label{fig:dptree-point}
  \vspace{-0.1in}
\end{figure}

Let us first divide the DP-Tree into two parts, i.e., the upper part (in which each node's density is larger than $\xi$) and the lower part (in which each node's density is smaller than or equal to $\xi$). The nodes that belong to the lower part are simply recognized as outliers. In the upper part of DP-Tree, for a given subtree $T_i$ in DP-Tree, if all the links in the subtree are strongly dependent, $T_i$ is a \emph{\textbf{strongly dependent subtree}}. If there is no other strongly dependent subtree $T_j$ such that $T_i$ is a subtree of $T_j$, we say $T_i$ is a \textbf{\emph{Maximal Strongly Dependent SubTree (MSDSubTree)}}. Given the definition of MSDSubTree and the definition of cluster in Def. \ref{def:cluster}, the clustering based on DP-Tree is defined as follows

\begin{definition}\label{def:clusterdptree}
\vspace{-0.05in}
\emph{(\textbf{Clustering based on DP-Tree}) The clustering based on DP-Tree is to find all the MSDSubTrees. Every MSDSubTree corresponds to a cluster. The root of a MSDSubTree is the cluster center of that cluster. }

\vspace{-0.05in}
\end{definition}

The DP-Tree structure is highly efficient for maintaining volatile clusters. It can quickly response to a cluster update query and can be used for tracking cluster evolutions due to its \emph{hierarchical} structure. Once a new point arrives, it is directly linked to its dependent point in terms of its local density. More importantly, the new point may affect its nearby points and change their cluster assignments. Many affected points can share the same predecessor, and they belong to the same cluster as their predecessor. We only need to change a pointer (the predecessor's pointer to another MSDSubTree) to complete this update, which greatly saves the maintaining cost.

\vspace{-0.05in}
\subsection{DP Clustering vs. DBSCAN}
\label{sec:dp:vsdbscan}

Due to the fact that the idea of DP clustering is pretty similar to that of DBSCAN, it is necessary to highlight the difference between them. The cluster defined in DBSCAN satisfies two criteria: \emph{maximality} and \emph{connectivity} \cite{ester1996density, gan2015dbscan}, while the cluster defined in DP clustering satisfies \emph{maximality} and \emph{traceability} (Def. \ref{def:cluster}). The maximality defined in both the two algorithms depicts the ``reachable'' property between two points, where the reachable property in both algorithms relies on point density information. However, the connectivity in DBSCAN depicts the density-connected property which is \emph{symmetric}, while the traceability in DP depicts the density-dependent property which is \emph{non-symmetric}. Therefore, in DBSCAN, the density-connected relationship between points can be abstracted as an undirected graph, and each connected component of the graph constitutes a preliminary cluster \cite{gan2017dynamic}. The clustering in DBSCAN is to find all the connected components from the density-connected graph. While in DP clustering, the density-dependent relationship between points can be abstracted with a tree-like structure (DP-Tree), and each MSDSubTree in the DP-Tree constitutes a cluster. The clustering in DP is to find all the MSDSubTrees from the DP-Tree.


\begin{figure}[tp]

  \centering
  \includegraphics[width=2.5in]{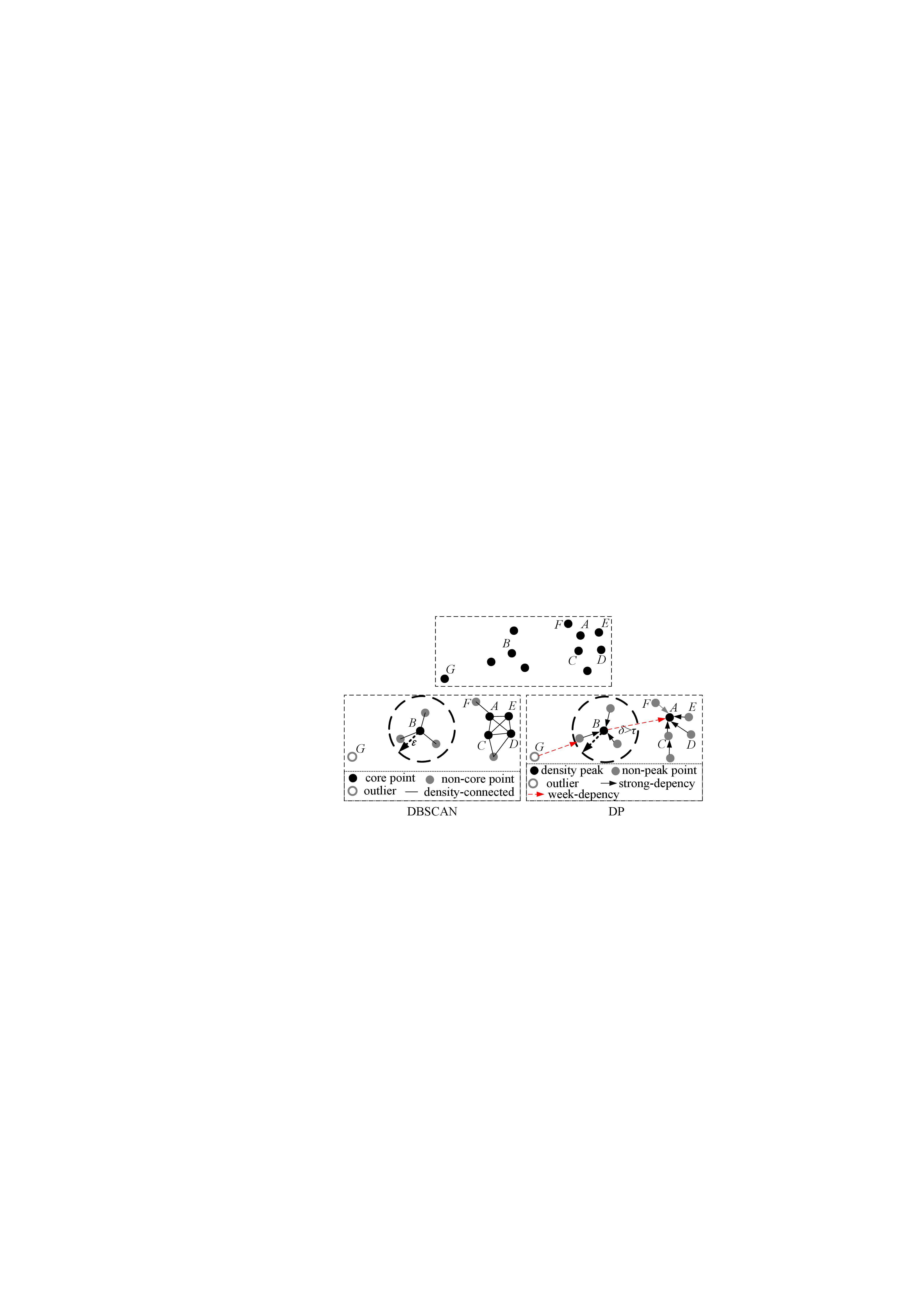}\\
  \vspace{-0.08in}
  \caption{DBSCAN vs. DP}\label{fig:DBSCAN-DP}
  \vspace{-0.25in}
\end{figure}

Fig. \ref{fig:DBSCAN-DP} provides an example to demonstrate the difference between DBSCAN and DP. The data distribution is shown on the upper side of the figure. DBSCAN first identifies the high density points as core points (e.g., points \textit{A-E} are core points because their densities are higher than a threshold), and connects these core points if they are closer than a threshold $\varepsilon$ (e.g., points \textit{A, C, D, E} are connected with each other, and point \textit{B} is not connected to the other core points because \textit{B} is far away from them). For each non-core point, it is connected to only one core point if they are close enough (e.g., point \textit{F}), otherwise is treated as an outlier (e.g., point \textit{G}). DBSCAN constructs such an undirected graph and aims to find all the connected components, each connected component corresponding to a cluster.

DP creates dependency connection from each point to its nearest neighbor with higher density (e.g., \textit{B-F} all depends on \textit{A} since \textit{A} is their nearest higher density neighbor). Different from DBSCAN's density-connected undirected graph, in DP each point depends only on a single point, and the dependency connection is directed. Thus, DP constructs a dependency tree structure (DP-Tree) rather than an undirected graph. In addition, DP distinguishes the weakly dependency connections whose distances are longer than a threshold $\tau$ (e.g., the dependency connection from \textit{B} to \textit{A}), and uses these weakly dependent connections to separate the DP-Tree into multiple subtrees (e.g., the subtree rooted from \textit{B} is separated from the subtree rooted from \textit{A} because their connection is weakly dependent). Each MSDSubTree corresponds to a cluster, and the root of each subtree is recognized as the density peak (e.g., points \textit{A} and \textit{B}). The points with extremely low density are recognized as outliers (e.g., point \textit{G}).

\vspace{-0.08in}
\section{Problem Statement}
\label{sec:streamcluster}

We aim to discover the potential clusters existing in data stream based on the two observations. 1) Dense areas are separated from each other by sparse areas; 2) Recent arrival data points play more important role in cluster representation than outdated data points. In this section, in terms of timeliness and unlimitedness of streams, we introduce the basic conceptions that will be used in stream clustering.

\vspace{-0.05in}
\subsection{Basic Conceptions}
\label{sec:streamcluster:stream}

\Paragraph{Data Stream} A data stream $S$ is a sequence of data points with timestamp information $p_1^{t_1}$, $p_2^{t_2}$, ..., $p^{t_N}_N$, i.e., $S^N=\{{p^{t_i}_i}\}^N_{i=1}$, which is potentially unbounded ($N\rightarrow\infty$). Each data point is described by a $d$-dimensional attribute vector with its arrival timestamp $t_i$.


\Paragraph{Decay Model} In most cases, the recent information from a stream reflects the emerging of new trends, e.g., weather monitoring and stock trade. The importance(freshness) of data should be decayed over time, so that the evolving characteristics of the stream can be captured. A common solution is to weight data points with an exponential time-dependent decay function \cite{Chen2007Density, Cao2006Density, Kranen2011The}. The freshness of point $i$ at time $t$ is
\vspace{-0.1in}
\begin{equation}\label{eq:freshness}
    f^{t}_i = a^{\lambda({t}-{t_i})}.
\end{equation}

\noindent This is a widely used decay function in many stream clustering algorithms \cite{Cao2006Density, Isaksson2012SOStream, Chen2007Density}. The parameter \textit{a} and $\lambda$ control the form of decay function. The higher the absolute value of $\lambda$ is, the faster the algorithm ``forgets'' old data. In this paper, we choose $a=0.998$, $\lambda = 1$ such that $f^{t}_i$ is in the range $(0,1]$. Suppose $\{p_j|t_j<t, |p_i,p_j|<d_c\}$ is a set of existing points whose distances to $p_i$ are smaller than $d_c$ before time point $t$. Point $p_i$'s local density at time $t$ is the sum of nearby points' $f^{t}_j$ rather than the number of nearby points as depicted in Equation \protect(\ref{eq:rho}).
\vspace{-0.05in}
\begin{equation}\label{eq:streamdensity}
    \rho_i^t=\sum_{p_j:t_j<t, |p_i,p_j|<d_c}f^{t}_j
    \vspace{-0.07in}
\end{equation}


The decay model implies that 1) if no new nearby point arrives the point density is decreasing over time and 2) all stream points are decaying at the same pace. In other words, we have a decay function $\mathcal{D}^t()$ applied on the current stream $S^n$ at any time $t$ to decay the streamed points.
\begin{equation}\label{eq:decay}
    \mathcal{D}^t(S^n)=\{f^{t}_1,f^{t}_2,\ldots,f^{t}_{n}\}
    \vspace{-0.05in}
\end{equation}

\Paragraph{Stream Clustering} Under the decay model, stream clustering is defined as follows.

\begin{definition}
\emph{(\textbf{Stream Clustering})} Given a data stream $S^N$ and their decayed freshness $\mathcal{D}^t(S^N)$, stream clustering $\mathcal{C}^t()$ returns a set of disjoint clusters at any time $t_1$, $t_2$, $\ldots$, $t_N$. That is, for any $n$ ($1\leq n\leq N$),  we have $\mathcal{C}^{t_n}\big(S^n, \mathcal{D}^{t_n}(S^n)\big)$ = $\{C^{t_n}_1$, $C^{t_n}_2$,$\ldots$, $C^{t_n}_{k^{t_n}}$, $C^{t_n}_o\}$, where $C^{t_n}_i$ $(1\leq i\leq k^{t_n})$ is a subset of $S^n$ at time $t_n$, $C^{t_n}_o$ is the set of outliers at time $t_n$, $k^{t_n}$ is the number of clusters at time $t_n$, $S^n=C^{t_n}_1\cup C^{t_n}_2\cup\ldots\cup C^{t_n}_{k^{t_n}}\cup C^{t_n}_o$, $C^{t_n}_i\cap C^{t_n}_{j}=\emptyset$, and $C^{t_n}_i\cap C^{t_n}_o=\emptyset$ for any $i$ and $j$.
\end{definition}
\vspace{-0.05in}
\Paragraph{Cluster Evolution} The clusters evolve continuously, i.e., $\mathcal{C}^{t_n}\big(S^n, \mathcal{D}^{t_n}(S^n)\big)\neq \mathcal{C}^{t_{n+1}}\big(S^{n+1}, \mathcal{D}^{t_{n+1}}(S^{n+1})\big)$. Specifically, the number of clusters may change, and the point-to-cluster assignment may change. By referring to the previous work \cite{Spiliopoulou2006MONIC, Oliveira2010MEC}, we define five types of evolutions which are summarized in Table \ref{tab:defevolution}.

\begin{table}[htb]
\vspace{-0.1in}
    \caption{Cluster evolution types}
    \vspace{-0.1in}
    \label{tab:defevolution}
    \centering
    \footnotesize
    \begin{tabular}{|l|l|}
    \hline
    \textbf{Type} & \textbf{Mathematical Notation} \\
    \hline
    \textit{Emerge} & $\emptyset\rightarrow C^{t_{n+1}}_i$\\
    \hline
    \textit{Disappear} & $C^{t_n}_i \rightarrow \emptyset$ \\
    \hline
    \textit{Split} & $C^{t_n}_i \rightarrow \{ C^{t_{n+1}}_{i_1},\ldots, C^{t_{n+1}}_{i_x} \}$ \\
    \hline
    \textit{Merge} & $\{ C^{t_{n}}_{i_1},\ldots, C^{t_{n}}_{i_x} \} \rightarrow C^{t_{n+1}}_i$ \\
    \hline
     \multirow{7}{*}{\textit{Adjust}} & \quad$C^{t_n}_i\rightarrow C^{t_{n+1}}_i$, $C^{t_n}_j\rightarrow C^{t_{n+1}}_j$ \\
         & 1. $C^{t_{n+1}}_i=C^{t_n}_i\backslash\{p_1,\ldots,p_l\}$, \\
         & \quad $C^{t_{n+1}}_j=C^{t_n}_j\cup\{p_1,\ldots,p_l\}$ \\
         & 2. $C^{t_{n+1}}_o=C^{t_n}_o\backslash\{p_1,\ldots,p_l\}$, \\
         & \quad $C^{t_{n+1}}_i=C^{t_n}_i\cup\{p_1,\ldots,p_l\}$ \\
         & 3. $C^{t_{n+1}}_i=C^{t_n}_i\backslash\{p_1,\ldots,p_l\}$, \\
         & \quad $C^{t_{n+1}}_o=C^{t_n}_o\cup\{p_1,\ldots,p_l\}$ \\
    \hline
   \end{tabular}
   \vspace{-0.15in}
\end{table}

The \textbf{\emph{emerge}} evolution means a new cluster's birth. The \textbf{\emph{disappear}} evolution means an old cluster's death. The \textbf{\emph{split}} evolution means that a cluster is split into two or more clusters. The \textbf{\emph{merge}} evolution means that two or more clusters merge into one cluster. The \textbf{\emph{adjust}} evolution happens when 1) some points move from one cluster to another cluster; 2) some outliers become reachable and are merged to a cluster; 3) some marginal points in a cluster become outliers. The first four types will change the number of clusters, while the last one only changes the point-to-cluster assignments. Note that, the first four evolutions might occur along with cluster adjustment, and the three kinds of adjustment might occur concurrently.

\vspace{-0.05in}
\subsection{Stream Data Summarization}
\label{sec:streamcluster:sum}

If stream data are massive or even unlimited, it is not possible to store all data in main memory. Therefore, it is necessary to summarize stream data in an efficient way. We summarize a set of close points as a \textbf{\emph{cluster-cell}} so as to reduce the memory/computation cost. The cluster-cell is formally defined as follows.

\begin{definition}
\vspace{-0.1in}
\emph{(\textbf{cluster-cell}) A cluster-cell $c$ summarizes a group of close points, which can be described by a three-tuple $\{s_c, \rho^t_c, \delta^t_c\}$ at time $t$. }
\begin{itemize}
  \item \emph{$s_c$ is the \textbf{\emph{seed point}} of a cluster-cell $c$. The cluster-cell $c$ seeded by $s_c$ summarizes a set of points whose distance to $s_c$ is less than the distance to any other seed point and is less than or equal to a predefined radius $r$, i.e., $P_c=\{p_i:s_c=\underset{s_k\in S_{seed}}{\arg\min}(|p_i,s_k|),|p_i,s_c|\leq r\}$ where $S_{seed}$ is the set containing all seed points.}
\vspace{-0.1in}
  \item \emph{$\rho^t_c$ is the summarization of all cluster-cell points' \textbf{\emph{timely density}} (abbrv. density) at time $t$, which is defined as follows.}
\vspace{-0.1in}
\begin{equation}
\label{eq:timedensity}
    \rho^t_c=\sum_{p_i\in P_c}f{^t_{i}}.
    \vspace{-0.1in}
\end{equation}

\emph{where $f{^{t}_{i}}$ is the freshness of $p_{i}$ at time $t$ defined in Equation (\ref{eq:freshness}).}
  \item \emph{$\delta^t_c$ is the \textbf{\emph{dependent distance}} from $s_c$ to its nearest cluster-cell seed point with higher cluster-cell density. Similar to Equation (\ref{eq:delta}), $\delta^t_c$ is defined as follows.}
\begin{equation}\label{eq:celldelta}
\vspace{-0.1in}
    \delta^t_c=\min_{c':\rho^t_{c'}>\rho^t_c}(|s_c, s_{c'}|).
    \vspace{-0.05in}
\end{equation}
\end{itemize}

\end{definition}

We take cluster-cell as the basic processing unit instead of point. In other words, we will operate on a DP-Tree where each node is a cluster-cell instead of a data point. A new arrival point is not directly inserted to the DP-Tree but could cause generating a new cluster-cell or increasing an existing cluster-cell's density. Both can lead to DP-Tree's update. On the other hand, the decaying of points will lead to the decaying of cluster-cells, which can also lead to DP-Tree's update. By using cluster-cell, we can approximately obtain the timely density of local regions and significantly reduce the memory/computation cost.


\vspace{-0.05in}
\subsection{Basic Ideas}\label{subsec:idea}
\vspace{-0.05in}
\Paragraph{Stream Clustering using DP-Tree} In the context of DP-Tree, stream clustering is simply to find all MSDSubTrees from a \emph{\textbf{dynamic}} DP-Tree. The DP-Tree is dynamic since new arrival points and decay model may cause tree structure's update.

\Paragraph{Evolution Tracking using DP-Tree} In addition, cluster evolution can be tracked by monitoring how the DP-Tree changes. $\romannumeral1$)Cluster \textit{emergence/disappearance} can be tracked by finding new generated/disappeared MSDSubTrees. $\romannumeral2$)Cluster \textit{split} can be tracked when an MSDSubTree is split into multiple MSDSubTrees (one or more dependent links become longer than $\tau$). $\romannumeral3$)Cluster \textit{merging} can be tracked when multiple MSDSubTrees merge into one MSDSubTree (one or more dependent links become shorter than $\tau$). Cluster adjustment can be tracked by that 1) multiple cluster-cells from an MSDSubTree are relinked to other MSDSubTrees; 2) multiple cluster-cells' densities become larger than $\xi$ and they are included in the MSDSubTrees that have their dependencies; 3) multiple cluster-cells' densities becomes smaller than $\xi$ and they are removed from their original MSDSubTrees.

\Paragraph{Our Goal} To sum up, we aim to design an algorithm that can efficiently maintain and monitor the dynamic DP-Tree and return the MSDSubTrees quickly upon any change.



\vspace{-0.08in}
\section{{\secit EDMStream}}
\label{sec:edmstream}
In this section we propose \emph{EDMStream} for clustering streaming data.

\subsection{Algorithm Overview}\label{sec:overview}
\emph{EDMStream} distinguishes itself from other existing stream clustering algorithms on the ability of updating clusters in real time and tracking cluster evolution. We briefly overview the \textit{EDMStream} algorithm in the following.




\begin{figure}
  \centering
  \includegraphics[width=2.5in]{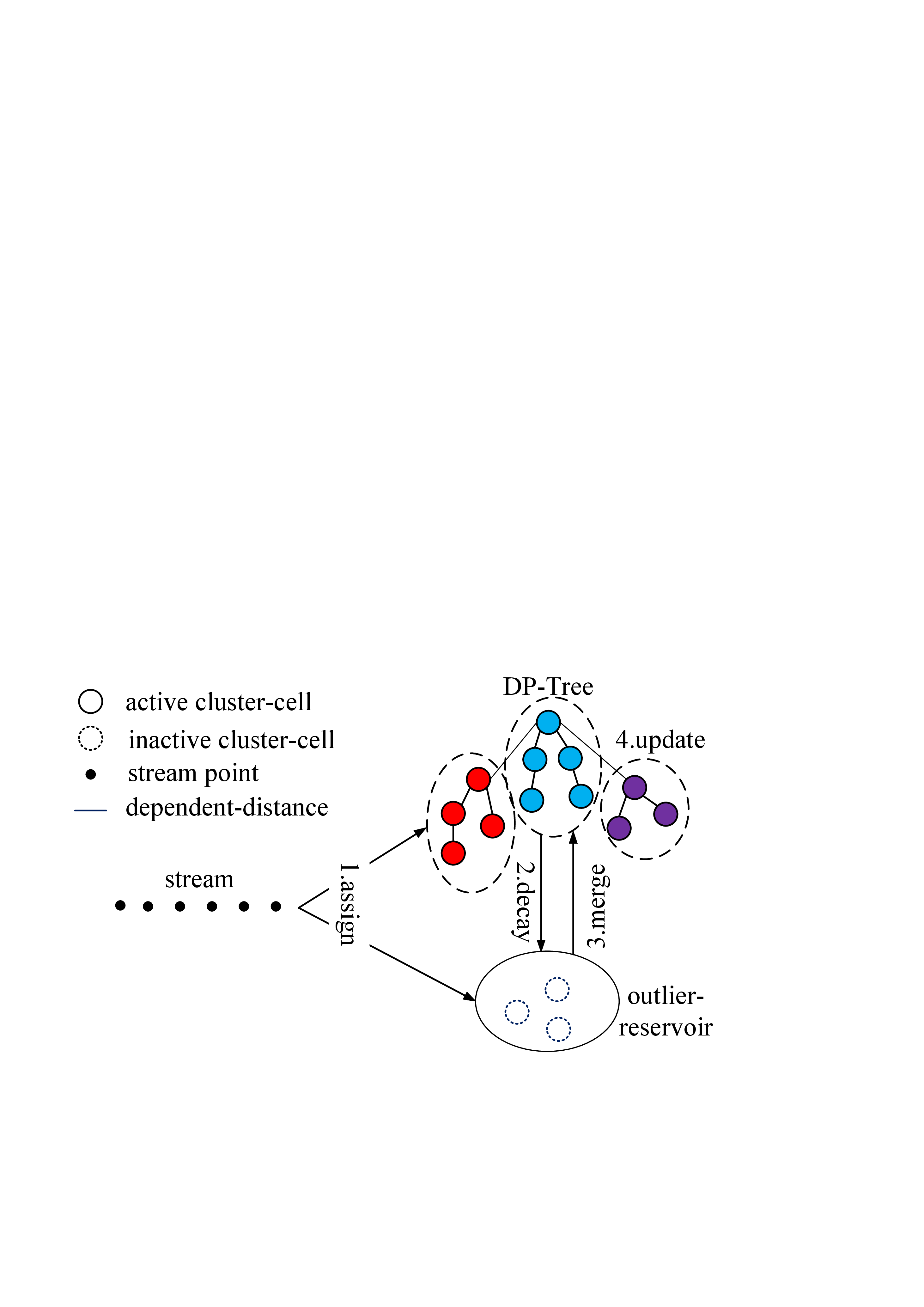}\\
  \vspace{-0.08in}
  \caption{\textit{EDMStream} Overview. }
  \label{process}
  \vspace{-0.2in}
\end{figure}

\Paragraph{Storage Structures} As shown in Fig. \ref{process}, two key storage structures are designed in \textit{EDMStream}.

1) DP-Tree. \emph{DP-Tree} is the data structure for abstracting density mountain. Each node in DP-Tree is a cluster-cell rather than a single point, which is for saving memory space and computation time as mentioned in Sec. \ref{sec:streamcluster:sum}.

2) Outlier Reservoir. Due to the unlimitedness of stream, it is desirable to limit the size of DP-Tree in order to reduce maintenance overhead. On the other hand, due to the evolution of stream, the role of clusters and outliers may change. The outliers may form a new cluster since they absorbs new arrival points. The old cluster may decay to outliers as they have not absorbed points for long time. Therefore, we use an \emph{outlier reservoir} for caching the cluster-cells with relatively low timely-density (i.e., temporal outliers), which are temporally not considered for clustering. Note that, a cluster-cell is moved to the outlier reservoir either because it contains only a few points (i.e., low local density) or because the contained points are outdated (low timely-density). The cluster-cells in the outlier reservoir are possible to absorb new points and be inserted to DP-Tree for clustering again.

\Paragraph{Key Operations} \textit{EDMStream} relies on four key operations as shown in Fig. \ref{process}.

1) New point assignment. A new point from stream is assigned to an existing cluster-cell (in DP-Tree or outlier reservoir) or forms a new outlier cluster-cell. A point $p_i$ is assigned to a cluster-cell $c$ if both the two conditions are satisfied: 1) the distance to the cluster seed $s_c$ is smaller than or equal to $r$, i.e., $|p_i,s_c|\leq r$; 2) $s_c$ is the closest cluster seed, i.e., $s_c=\underset{s_k\in S_{seed}}{\arg\min}(|p_i,s_k|)$
where $S_{seed}$ is the set of all existing cluster-cell seeds. If no such cluster-cell exists, a new cluster-cell seeded by $p_i$ is created and cached in outlier reservoir due to its low density.

2) Dependencies update. Due to the fading property, the densities of cluster-cells decrease over time. In addition, some of the cluster-cells might absorb new points. As a result, their densities increase, and their dependencies may change such that the DP-Tree is updated. We will present the details of dependencies update in Sec. \ref{sec:stream:update}.

3) Cluster-cell emergence (DP-Tree insertion). The timely-density of an existing cluster-cell in outlier reservoir increases after it absorbs a new point. It might be inserted into the DP-Tree for clustering once its density is large enough.

4) Cluster-cell decay (DP-Tree deletion). The timely-density of cluster-cells decays as the freshness of points is fading. If the density of a cluster-cell in DP-Tree is low enough, it might be temporally moved to the outlier reservoir. We will present the details of cluster-cell emergence/decay in Sec. \ref{sec:stream:emergedecay}.

\Paragraph{Cluster Evolution Tracking} The operations 2), 3), and 4) lead to cluster evolution. As described in Sec. \ref{subsec:idea}, we can track the evolution by monitoring the update of DP-Tree structure, including the changes of dependent distance (which will trigger the split or merge of MSDSubTrees), the insertion/deletion of cluster-cell nodes, and the movements of cluster-cell nodes between MSDSubTrees. The DP-Tree update operation along with the update time are then recorded for future queries.

\Paragraph{Initialization} Initially, a number of cluster-cells that absorb incoming points are cached in memory. Once the size of cached cluster-cells exceeds a predefined threshold, the density and the dependent distance of each cached cluster-cell are calculated in terms of Equation (\ref{eq:timedensity}) and Equation (\ref{eq:celldelta}) respectively. In the meantime, the dependencies of the cached cluster-cells are retrieved, which are used for initializing the DP-Tree structure. Furthermore, given $\tau$ a primary clustering result can be obtained as mentioned in Sec. \ref{sec:DP-Tree}. Next, we propose the techniques for efficiently updating the DP-Tree as new points coming.

\vspace{-0.05in}
\subsection{Dependencies Update (DP-Tree Update)}
\label{sec:stream:update}

The densities of all cluster-cells gradually decay as time goes by. But if cluster-cells absorb points, their densities should be increased. Moreover, the dependency relationship should also be updated accordingly.

\Paragraph{Densities Update} The decayed density of summary structures has been well studied in the literature \cite{Cao2006Density, Chen2007Density, ester1996density, Isaksson2012SOStream, Silva2014Data, Wan2009Density}. Based on our time decay model, if a cluster-cell absorbs a points from $t_{j}$ to $t_{j+1}$, its density is updated as follows. The proof can be referred to \cite{zhang1996birch}.
\begin{equation}\label{eq:update}
\vspace{-0.1in}
    \rho^{t_{j+1}}_c=a^{\lambda(t_{j+1}-t_j)}\rho^{t_j}_c+1.
\end{equation}

\Paragraph{Dependencies Update}
The changes of densities may cause the dependency changes. Suppose the density $\rho_c$ of cluster-cell $c$ increases and it becomes larger than the density of $c$'s original dependent cluster-cell. This means that $c$'s dependent cluster-cell and its dependent distance $\delta_c$ should be updated according to Equation (\ref{eq:celldelta}). It is also possible that $c$ becomes the new dependency of other some cluster-cells, whose dependent distances should also be updated accordingly. Every time $c$ absorbs new point, a large number of cluster-cells are involved in dependencies update. This can result in great computational burden, which poses challenge to real-time stream processing.

Let $\rho_c^{t_j}$ be the density of cluster-cell $c$ at time $t_j$. Let $F_c^{t_j}=\{c'|\rho_c^{t_j}<\rho_{c'}^{t_j}\}$ be the set of cluster-cells whose density are higher than $c$ at time $t_j$. In the DP-Tree point of view, $F_c^{t_j}$ is the set of cluster-cells that are at higher levels than $c$ at time $t_j$. We define $D_c^{t_j}$ as the \emph{\textbf{dependent cluster-cell}} of $c$ at time $t_j$.
\vspace{-0.1in}
\begin{equation}\label{eq:depcluster}
D_c^{t_j}=\arg\min_{c':c'\in F_c^{t_j}}|s_c,s_{c'}|.
\vspace{-0.1in}
\end{equation}

Since all cluster-cells' densities decay at the same rate, the order of their densities will not change from $t_j$ to $t_{j+1}$ except for the cluster-cell $c'$ that absorbs new point. For each $c$, we just need to judge whether the updated $c'$ newly appears in $F_c^{t_{j+1}}$ with respect to $F_c^{t_j}$. If so, $c$'s dependency update is required, otherwise can be avoided. This is because that, according to (\ref{eq:depcluster}), as long as set $F_c$ is consistent, $c$'s dependency $D_c$ will not change. From $c'$'s perspective, only the nodes whose density are previously higher than or equal to $\rho_{c'}$ ($\rho_c^{t_j}\geq\rho_{c'}^{t_j}$) but currently lower than $\rho_{c'}$ ($\rho_c^{t_{j+1}} < \rho_{c'}^{t_{j+1}}$) are necessary to update dependencies. In the DP-Tree point of view, we are trying to find out the nodes that are previously are at higher levels than $c'$ but now at lower levels, and only update their dependencies. Therefore, in order to reduce update cost, we propose our first density filtering scheme through the following theorem.

\vspace{-0.1in}
\begin{theorem}\label{theo:upnum}
\textbf{(Density Filter)} Suppose another cluster-cell $c'$ absorbs a point at time $t_{j+1}$.
\begin{equation}
\vspace{-0.05in}
\begin{aligned}
\text{If } \rho_c^{t_j}<\rho_{c'}^{t_j} &\text{ or } \rho_c^{t_{j+1}}\geq\rho_{c'}^{t_{j+1}}, \text{then}\\
     D_c^{t_j}&=D_c^{t_{j+1}}.\nonumber
\end{aligned}
\vspace{-0.05in}
\end{equation}
That is, cluster-cell $c$'s dependency will not change and it is not necessary to update $c$'s dependencies.
\end{theorem}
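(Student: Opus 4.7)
The plan is to reduce everything to the observation that the set $F_c^{t_j}$ is invariant between $t_j$ and $t_{j+1}$ under either hypothesis, together with the fact that seed positions (and hence the distances $|s_c, s_{c'}|$) do not change with time, so the $\arg\min$ defining $D_c$ is forced to be the same.

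The first step I would take is to exploit the uniform decay. By Equation (\ref{eq:update}), the only cluster-cell whose density update deviates from the common exponential factor $a^{\lambda(t_{j+1}-t_j)}$ is $c'$, which gets an extra $+1$ from absorbing a point; every other cluster-cell, including $c$, simply multiplies by $a^{\lambda(t_{j+1}-t_j)}$. So for any two cluster-cells neither of which is $c'$, their density ordering relative to each other (and relative to $c$) is preserved from $t_j$ to $t_{j+1}$. Consequently the only membership in $F_c$ that can possibly flip is that of $c'$ itself.

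The second step is a short two-case check of the hypotheses. If $\rho_c^{t_j}<\rho_{c'}^{t_j}$, then $c'\in F_c^{t_j}$; applying $a^{\lambda(t_{j+1}-t_j)}$ to both sides and then adding the $+1$ to the right-hand side gives $\rho_{c'}^{t_{j+1}}>\rho_c^{t_{j+1}}$, so $c'\in F_c^{t_{j+1}}$ as well. Conversely, if $\rho_c^{t_{j+1}}\geq\rho_{c'}^{t_{j+1}}$, then $c'\notin F_c^{t_{j+1}}$; writing $\rho_c^{t_{j+1}}=a^{\lambda(t_{j+1}-t_j)}\rho_c^{t_j}$ and $\rho_{c'}^{t_{j+1}}=a^{\lambda(t_{j+1}-t_j)}\rho_{c'}^{t_j}+1$, the inequality implies $\rho_c^{t_j}>\rho_{c'}^{t_j}$, so $c'\notin F_c^{t_j}$ either. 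In both cases $F_c^{t_j}=F_c^{t_{j+1}}$.

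The third step is to conclude. Because a cluster-cell's seed point $s_c$ is a fixed location attached to the cluster-cell, the pairwise distances $|s_c,s_{c'}|$ are time-independent; thus $D_c^{t}=\arg\min_{c'\in F_c^{t}}|s_c,s_{c'}|$ depends on $t$ only through $F_c^{t}$, and we just showed that this set is unchanged, so $D_c^{t_j}=D_c^{t_{j+1}}$. I do not expect a serious obstacle: the argument is really just the uniform-decay observation plus a two-line case split. The only mildly delicate point is handling strict versus non-strict inequalities consistently with the definition of $F_c$ (which uses strict $<$), and noting that since only $c'$'s density departs from the common decay, any ambiguity can involve only $c'$ — which is exactly the case handled above.
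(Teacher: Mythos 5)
Your proposal is correct and follows essentially the same route as the paper's own proof: the uniform-decay observation that only $c'$ can change membership in $F_c$, the same two-case check showing $c'$ stays in (resp.\ out of) $F_c$ under each hypothesis, and the conclusion that $D_c$ is unchanged because it is determined by $F_c$ and the fixed seed distances. Your version merely makes explicit two points the paper leaves implicit (the density-update formula in Equation~(\ref{eq:update}) and the time-invariance of $|s_c,s_{c'}|$), which is a harmless refinement rather than a different argument.
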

\vspace{-0.1in}

\begin{proof}


If $\rho_c^{t_j}<\rho_{c'}^{t_j}$, then $c'\in F_c^{t_j}$. After $c'$ absorbs new point at $t_{j+1}$ we still have $\rho_c^{t_{j+1}}<\rho_{c'}^{t_{j+1}}$ and $c'\in F_c^{t_{j+1}}$. $c'$ appears in both $F_c^{t_j}$ and $F_c^{t_{j+1}}$, i.e., $F_c^{t_j}=F_c^{t_{j+1}}$. Hence, $D_c^{t_j}=D_c^{t_{j+1}}$.

If $\rho_c^{t_{j+1}}\geq\rho_{c'}^{t_{j+1}}$, then $c'\notin F_c^{t_{j+1}}$. Even after $c'$ absorbs new point at time $t_{j+1}$ $\rho_c^{t_{j+1}}\geq\rho_{c'}^{t_{j+1}}$, so at time $t_j$ $\rho_c^{t_j}>\rho_{c'}^{t_j}$, i.e., $c'\notin F_c^{t_{j}}$. $c'$ neither appears in $F_c^{t_j}$ nor $F_c^{t_{j+1}}$, i.e., $F_c^{t_j}=F_c^{t_{j+1}}$. Hence, $D_c^{t_j}=D_c^{t_{j+1}}$.
\end{proof}

In addition, we exploit the triangle inequality property and propose our second filtering scheme through the following theorem.

\vspace{-0.1 in}
\begin{theorem}\label{theo:upcell}
\textbf{(Triangle Inequality Filter)} Suppose another cluster-cell $c'$ absorbs a point $p$ at time $t_{j+1}$.
\begin{equation}
\vspace{-0.05in}
\begin{aligned}
\text{If } \big||p,s_c|&-|p,s_{c'}|\big|>\delta^{t_j}_c, \text{then}\\
     D_c^{t_j}&=D_c^{t_{j+1}}.\nonumber
\end{aligned}
\vspace{-0.05in}
\end{equation}
That is, cluster-cell $c$'s dependency will not change and it is unnecessary to update $c$'s dependencies.
\end{theorem}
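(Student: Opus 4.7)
The plan is to reduce the claim to the geometric fact that $s_{c'}$ must lie strictly farther from $s_c$ than $c$'s current dependent cluster-cell does, so $c'$ cannot displace it as the nearest higher-density neighbor. Writing $F_c^{t}=\{c'':\rho_c^{t}<\rho_{c''}^{t}\}$ as in Theorem \ref{theo:upnum}, the key observation is that only $c'$'s density is updated between $t_j$ and $t_{j+1}$, while every other cluster-cell's density decays at the same uniform rate. Consequently, the relative ordering of all other densities (and of $c$ itself relative to them) is preserved, so $F_c^{t_{j+1}}$ either equals $F_c^{t_j}$ or equals $F_c^{t_j}\cup\{c'\}$. In the former case the argmin defining $D_c$ in Equation~(\ref{eq:depcluster}) is trivially unchanged, so only the latter case needs work.

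First I will invoke the triangle inequality on the triple $\{s_c,s_{c'},p\}$, which yields
\[
    |s_c,s_{c'}| \;\geq\; \big||p,s_c|-|p,s_{c'}|\big| \;>\; \delta_c^{t_j},
\]
where the second inequality is the hypothesis of the theorem. Since $\delta_c^{t_j}=|s_c,s_{D_c^{t_j}}|$ by the definition of the current dependent cluster-cell, this shows that $s_{c'}$ is strictly farther from $s_c$ than $s_{D_c^{t_j}}$ is.

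Next I will combine these two observations. Suppose $c'\notin F_c^{t_j}$ but $c'\in F_c^{t_{j+1}}$; for $D_c^{t_{j+1}}$ to differ from $D_c^{t_j}$, the new entrant $c'$ would have to minimize the distance to $s_c$ among all members of $F_c^{t_{j+1}}$. But $D_c^{t_j}$ remains in $F_c^{t_{j+1}}$ (its density relative to $c$ is unaffected by $c'$'s update, since both $c$ and $D_c^{t_j}$ decay at the same rate), and by the previous step $D_c^{t_j}$ is strictly closer to $s_c$ than $c'$ is. Hence the argmin in Equation~(\ref{eq:depcluster}) does not change, and $D_c^{t_{j+1}}=D_c^{t_j}$.

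The step I expect to require the most care is justifying why no other cluster-cell besides $c'$ can enter or leave $F_c$ across the time step; this is where the uniform-decay property of the fading model is essential, and once stated cleanly the triangle inequality closes the argument in one line. I also want to be mindful of the strictness in $\delta_c^{t_j}=|s_c,s_{D_c^{t_j}}|$ versus $|s_c,s_{c'}|>\delta_c^{t_j}$ to avoid tie-breaking ambiguity, but the strict inequality supplied by the hypothesis rules ties out automatically.
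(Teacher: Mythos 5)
Your proposal is correct and follows essentially the same route as the paper: apply the triangle inequality to get $|s_c,s_{c'}|\geq\big||p,s_c|-|p,s_{c'}|\big|>\delta_c^{t_j}$, and conclude that $c'$ cannot displace the current dependent cluster-cell, which is the nearest higher-density one. The only difference is that you spell out the uniform-decay bookkeeping (that $F_c$ can change only by possibly gaining $c'$, and that $D_c^{t_j}$ stays in $F_c^{t_{j+1}}$), which the paper leaves implicit here since it is already argued in the discussion around Theorem~\ref{theo:upnum}; this added care, including the remark on strictness, only tightens the same argument.
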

\vspace{-0.1in}

\begin{proof}
In terms of triangle inequality, $|s_c,s_{c'}|>\big||p,s_c|-|p,s_{c'}|\big|$. If $\big||p,s_c|-|p,s_{c'}|\big|>\delta^{t_j}_c$, then $|s_c,s_{c'}|>\delta^{t_j}_c$. According to the definition of dependent cluster-cell, $D_c^{t_j}$ is the \emph{nearest} higher density cluster-cell. Therefore, it is not possible to replace $c$'s original dependent cluster-cell by $c'$. Then we have $D_c^{t_j}=D_c^{t_{j+1}}$.
\end{proof}
\vspace{-0.05in}

According to Theorem \ref{theo:upnum}, we can avoid the dependency update of $c$ if $\rho_c^{t_j}<\rho_{c'}^{t_j}$ or $\rho_c^{t_{j+1}}\geq\rho_{c'}^{t_{j+1}}$. According to Theorem \ref{theo:upcell}, we can further reduce the number of dependency updates if $\big||p,s_c|-|p,s_{c'}|\big|>\delta^{t_j}_c$. Since $|p,s_c|$ and $|p,s_{c'}|$ have already been measured during the point assignment phase, the filtering cost is almost free. Our experimental results in Sec. \ref{sec:uptime} will show that a significant performance improvement is achieved.

\vspace{-0.05in}
\subsection{Cluster-Cells Emergence and Decay}
\label{sec:stream:emergedecay}

As discussed in Sec. \ref{sec:overview}, an outlier reservoir caching low timely-density cluster-cells is designed to maintain the outliers or halos. We only consider the dense cluster-cells for clustering. This is because that dense regions are more representative to reflect stream trends, and losing sight of sparse regions is helpful for distinguishing the true clusters. However, considering that the low density cluster-cells may become dense as they absorb new points, it is not a good idea to delete them immediately. Accordingly, we preserve these low density cluster-cells in outlier reservoir temporarily. We call the cluster-cells residing in DP-Tree as \emph{active} cluster-cells and the ones in outlier-reservoir as \emph{inactive} cluster-cells.

For ease of exposition, we assume a fixed point arrival rate $v$, i.e., $t_{i+1}-t_i$ is equal for any $i$ and $v=\frac{1}{t_{i+1}-t_i}$. By referring to \cite{Cao2006Density,Chen2007Density}, given a decay model with parameters $a$ and $\lambda$, the sum of all data points' freshness $a^{\lambda(t-t_i)}$ for an unbounded data stream is a constant $\frac{v}{1-a^\lambda}$, i.e., $\sum_{i=1}^{n}\big(a^{\lambda(t_n-t_i)}\big)=\frac{v}{1-a^\lambda}$ where $n\rightarrow\infty$. Accordingly, we distinguish active and inactive cluster-cells as follows. A cluster-cell $c$ at time $t$ is active if $\rho^t_c \geq\frac{\beta\cdot{v}}{1-a^\lambda}$ and otherwise inactive. $\beta$ is a tunable parameter that controls the threshold. The larger the value of $\beta$, the less number of active cluster-cells is. Obviously, $\beta$ is less than 1 since a single cluster-cell's density should not exceed the sum of all cluster-cells' densities (which is equal to the sum of all points' freshness), i.e.,  $\frac{\beta\cdot{v}}{1-a^\lambda}\leq\rho^t_c\leq\frac{v}{1-a^\lambda}$. On the other hand, since a new cluster-cell formed by a new arrival point should be considered as inactive, we have $\rho^t_c=1<\frac{\beta\cdot{v}}{1-a^\lambda}$. Thus, we have the range of $\beta$, i.e., $\frac{1-a^\lambda}{v}<\beta<1$.


The active cluster-cell may become inactive and be moved from the DP-Tree to the outlier reservoir. Suppose a cluster-cell $c$ becomes inactive at time $t$, i.e., $\rho_c^t<\frac{\beta\cdot{v}}{1-a^\lambda}$. Due to the fact that the density of cluster-cell $c$'s successors are all lower than $\rho_c^t$, cluster-cell $c$'s successor cluster-cells should also be moved to the outlier reservoir. It is unnecessary to judge their densities or update their dependent distances. On the other hand, the inactive cluster-cells may absorb new arrival points to increase density. Then they may become active cluster-cells and be inserted into the DP-Tree. The DP-Tree insertion leads to the dependencies update. We follow Theorem \ref{theo:upnum} and Theorem \ref{theo:upcell} to reduce the overhead of dependencies update.



\vspace{-0.05in}
\subsection{Memory Space Recycling}\label{subsec:remove}

As new data are continuously being collected, more and more cluster-cells could be created due to the expansion of data space. The maintenance of cluster-cells consumes large memory space. In practice, if data are old enough they can be ignored for clustering since they are invalid for discovering the hidden patterns and the trends in stream. For the sake of recycling memory space, we delete the outdated cluster-cells. If the time of inactive cluster-cell has not absorbed any point is equal to the time for a new active cluster-cell being formed by new arrival points, the inactive cluster-cell can be deleted safely. We call this kind of inactive cluster-cell as \textit{outdated cluster-cells}. We study the time for safely deleting inactive cluster-cells through the following theorem.
\vspace{-0.05in}
\begin{theorem}\label{theo:outdatedtime}
Suppose the speed of stream is fixed as ${v}$. We can safely delete an inactive cluster-cell without any negative impact if an inactive cluster-cell has not absorbed any point for time $\Delta T_{del}$, where
\vspace{-0.05in}
\begin{equation}
\Delta T_{del} > \frac{log_a(1-a^\lambda)-log_a(\beta\cdot v)}{\lambda\cdot v}.
\end{equation}
\end{theorem}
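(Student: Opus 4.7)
The plan is to argue that once the inactive cluster-cell's decayed density falls below that of a freshly created cluster-cell (which is $1$, arising from a single absorbed point), retaining it provides no advantage over simply letting a new cluster-cell be born from the next arriving point. Under this interpretation, ``safe deletion'' amounts to showing that after $\Delta T_{del}$ of inactivity the cluster-cell's density has decayed to a point where a freshly spawned cell would dominate it anyway.

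First, I would use the very definition of inactive: by Sec.~\ref{sec:stream:emergedecay}, at the moment $t$ when cluster-cell $c$ became inactive, we have $\rho_c^{t} < \frac{\beta\cdot v}{1-a^\lambda}$. This gives us the worst case (tightest) starting point, since any strictly smaller starting density can only make deletion safer.

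Next, I would invoke the fading dynamics. Because $c$ absorbs no point during $[t,\,t+\Delta T_{del}]$, the decay formula (Equation~\ref{eq:update} with no ``$+1$'' term, equivalently the freshness decay in Equation~\ref{eq:freshness}) gives
\begin{equation*}
\rho_c^{t+\Delta T_{del}} \;=\; a^{\lambda \Delta T_{del}}\,\rho_c^{t} \;<\; a^{\lambda \Delta T_{del}}\cdot \frac{\beta\cdot v}{1-a^\lambda}.
\end{equation*}
I would then impose the safety criterion $\rho_c^{t+\Delta T_{del}} < 1$: once the surviving density is dominated by that of a brand-new cluster-cell formed by a single fresh arrival, the cell carries no information that a newly-spawned successor could not immediately reproduce, so erasing it is harmless.

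Finally, I would solve this inequality for $\Delta T_{del}$. Substituting the bound above gives $a^{\lambda \Delta T_{del}} < \frac{1-a^\lambda}{\beta\cdot v}$. Because $0<a<1$, the function $\log_a(\cdot)$ is monotonically decreasing, and rearranging yields exactly the bound stated in the theorem (up to the paper's rate normalization that accounts for the stream speed $v$ in the denominator). The only delicate point I expect is justifying the chosen safety threshold: one must argue that ``$<1$'' (rather than, say, $<\frac{\beta v}{1-a^\lambda}$ or some intermediate level) is both necessary and sufficient for correctness — this is where the intuition ``a new cluster-cell could immediately replace it at density $1$'' does the real work, and it is what allows the deletion to cause no negative downstream effect on DP-Tree updates or cluster assignments.
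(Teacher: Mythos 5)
Your calculation reproduces the second half of the paper's argument almost verbatim: start from the worst-case residual density $\frac{\beta\cdot v}{1-a^{\lambda}}$ (a cell that has just dropped below the active threshold), let it decay without absorbing points, require the result to fall below $1$, and solve the logarithmic inequality (with $\log_a$ decreasing since $a<1$). However, the step you explicitly defer --- justifying that ``residual density $<1$'' is the right safety criterion --- is not a side remark in the paper; it is the first half of its proof, and without it the theorem is just a decay-time computation with an arbitrary threshold. The paper closes that gap with a worst-case comparison: assume every subsequent stream point lands in the region of the deleted cell (this minimizes the reactivation time, so it is the adversarial case), let $\Delta T$ be the first time the \emph{retained} cell with residual density $\rho$ would cross the active threshold, and note that a \emph{freshly created} cell fed by the same arrivals is below the threshold at $\Delta T-\Delta t$ with $\Delta t=1/v$. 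Since one arrival increases the density by at most $1$, combining the two inequalities forces $\rho<1$: only a residual of at least $1$ can make the retained cell activate strictly earlier than its fresh replacement, so deletion is harmless exactly when the residual has decayed below $1$. You state the right intuition for this, but your proposal leaves it as an assertion rather than an argument, so as written it does not establish the ``without any negative impact'' claim.

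A second, smaller issue is the exponent. The paper's proof decays the density by $a^{\lambda\cdot v\cdot\Delta T_{del}}$ (decay measured per point arrival, $v\,\Delta T_{del}$ arrivals in the interval), and this is precisely where the $v$ in the denominator of the stated bound comes from: $\frac{\beta\cdot v}{1-a^{\lambda}}\,a^{\lambda v\Delta T_{del}}<1$ gives $\Delta T_{del}>\frac{\log_a(1-a^{\lambda})-\log_a(\beta\cdot v)}{\lambda\cdot v}$. Your version with $a^{\lambda\Delta T_{del}}$ yields a bound without $v$ in the denominator, so the ``rate normalization'' you wave at is not cosmetic --- it has to be stated as part of the decay convention for the derived bound to match the theorem.
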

\vspace{-0.05in}

\begin{theorem}\label{lem:outdatedtime}
Suppose the speed of stream is fixed as ${v}$. We can safely delete an inactive cluster-cell without any negative impact if an inactive cluster-cell has not absorbed any point for time $\Delta T_{del}$, where
\begin{equation}
\Delta T_{del} > \frac{log_a(1-a^\lambda)-log_a(\beta\cdot v)}{\lambda\cdot v}.
\end{equation}
\end{theorem}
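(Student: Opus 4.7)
The plan is to prove that after a cluster-cell has gone without absorbing any new point for the duration $\Delta T_{del}$ specified in the theorem, its decayed density falls strictly below $1$, which is the density of a freshly created cluster-cell that has just absorbed its first point. Once the decayed density is no greater than that of a brand-new cluster-cell, deleting $c$ causes no negative impact: if its region subsequently becomes relevant again, a new cluster-cell will be formed in its place in at least as favorable a state, so keeping versus deleting $c$ is clustering-equivalent.

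First, I would upper-bound the starting density. At the time $t_0$ at which $c$ most recently failed the activity criterion, the definition of an inactive cluster-cell preceding the theorem gives $\rho_c^{t_0} < \frac{\beta v}{1-a^\lambda}$, which is the worst case the deletion deadline must guard against. Second, I would apply the decay dynamics. Equation (8), specialized to the no-absorption case by dropping the ``$+1$'' term, reads $\rho_c^{t_{j+1}} = a^{\lambda(t_{j+1}-t_j)}\rho_c^{t_j}$. Iterated over the $v\,\Delta T_{del}$ inter-arrival intervals that span $\Delta T_{del}$ time units at arrival rate $v$ — the same discrete accounting used earlier to obtain $\sum_i a^{\lambda(t-t_i)} = \frac{v}{1-a^\lambda}$ — this compounds to a decay factor of $a^{\lambda v \Delta T_{del}}$.

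Third, I would impose the safe-deletion criterion by requiring the decayed density to drop below unity:
\[
\frac{\beta v}{1-a^\lambda}\cdot a^{\lambda v \Delta T_{del}} \;<\; 1.
\]
Rearranging, taking $\log_a$ of both sides (and flipping the inequality because $a<1$ makes $\log_a$ strictly decreasing), and dividing through by $\lambda v$ recovers exactly the bound $\Delta T_{del} > \frac{\log_a(1-a^\lambda)-\log_a(\beta v)}{\lambda\cdot v}$ stated in the theorem. A brief closing remark would note that the worst-case starting density $\frac{\beta v}{1-a^\lambda}$ is the only assumption used about $\rho_c^{t_0}$, so the bound applies uniformly to every inactive cluster-cell.

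The main technical care is in the step that converts the per-arrival decay ratio $a^\lambda$ into the per-unit-time decay factor $a^{\lambda v}$, so that the denominator of the final bound carries the factor $v$. This is the same accounting that produces the $v$ in the steady-state total density $\frac{v}{1-a^\lambda}$ and in the activity threshold; once both sides of the inequality are expressed consistently with that convention, the remaining work is elementary algebra and a logarithmic base change. A secondary subtlety is justifying why ``decays below $1$'' is the correct safe-deletion threshold — this is precisely the hypothesis stated in the paragraph preceding the theorem, which identifies an outdated cluster-cell with one whose elapsed idle time equals the time a brand-new cluster-cell would require to reach an equivalent density.
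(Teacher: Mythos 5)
Your quantitative step matches the second half of the paper's proof exactly: take the worst-case residual density $\frac{\beta\cdot v}{1-a^\lambda}$ of a just-decayed inactive cell, decay it by the factor $a^{\lambda\cdot v\cdot\Delta T_{del}}$, require the product to fall below $1$, and solve with $\log_a$ (flipping the inequality since $a<1$). The gap is in what you dismiss as the ``secondary subtlety'': why a residual density below $1$ is the right criterion for harmless deletion. The paper does not assume this --- deriving it is the entire first half of its proof. It supposes the retained inactive cell, absorbing every arriving point, first exceeds the activity threshold at time $\Delta T$, i.e. $\rho\cdot a^{\lambda v\Delta T}+\frac{1-a^{\lambda v\Delta T}}{1-a^\lambda}\geq\frac{\beta v}{1-a^\lambda}$, while a freshly created cell absorbing the same points is still inactive one inter-arrival time earlier, i.e. $\frac{1-a^{\lambda v(\Delta T-\Delta t)}}{1-a^\lambda}<\frac{\beta v}{1-a^\lambda}$ with $\Delta t=1/v$; combining the two inequalities yields $\rho<1$. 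In other words, ``residual below $1$'' is exactly the condition under which the kept cell's head start is worth less than one arriving point, so its activation can precede that of a replacement cell by less than one arrival interval.

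Your substitute justification does not hold as stated. A replacement cell is not ``in at least as favorable a state'': if the old cell is kept and absorbs the next point, its density becomes $\rho\cdot a^{\lambda(\cdot)}+1>1$, strictly larger than the fresh cell's density of $1$, so keep-versus-delete is not clustering-equivalent; the point of the threshold-$1$ derivation is precisely to bound the consequence of that discrepancy. Nor does the paragraph preceding the theorem supply the threshold: it states an idle-time criterion (the idle time equals the time a brand-new cell needs to become active), not a density-below-$1$ criterion, and the two are not obviously interchangeable --- the paper converts the former into the latter through the comparison above, and the resulting bound $\frac{\log_a(1-a^\lambda)-\log_a(\beta v)}{\lambda v}$ differs from the naive ``time for a new cell to activate.'' To complete your argument you would need to reproduce (or replace) that derivation of the $\rho<1$ threshold.
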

\begin{proof}

The density of an existing inactive cluster-cell $c$ is $\rho$. Suppose after a time interval $\Delta T$, it become large enough to be active. The smaller the value of $\rho$ is, the longer time $\Delta T$ is. Since we want to figure out the maximum $\rho$ such that $c$ could be safely deleted, we should consider the case when $\Delta T$ is the minimum to let $c$ become active. It is obvious that $\Delta T$ is the minimum when all the points are absorbed by $c$. Suppose there is a $\rho$ such that, at time $\Delta T$ the density of $c$ exceeds the active threshold, but at time $\Delta T-\Delta t$ the density of $c$ does not exceed the active threshold, where $\Delta t=1/v$ is the time interval between two continuously arrived points. The density of $c$ at time $\Delta T$ is
\begin{equation}
\label{eq:eqeq1}
\rho\cdot a^{\lambda\cdot v\cdot\Delta T}+\frac{1-a^{\lambda\cdot v\cdot\Delta T}}{1-a^\lambda}\geq\frac{\beta\cdot v}{1-a^\lambda}
\end{equation}
where the former part is the density contribution before and the later part is the new density contribution. Recall that the existing inactive cluster-cell is always denser than the newly generated one. The density of the newly generated one is smaller than $c$ and also smaller than the active threshold at time $\Delta T-\Delta t$.
\begin{equation}
\label{eq:eqeq2}
\frac{1-a^{\lambda\cdot v\cdot(\Delta T-\Delta t)}}{1-a^\lambda}<\frac{\beta\cdot v}{1-a^\lambda}
\end{equation}
By applying (\ref{eq:eqeq2}) to (\ref{eq:eqeq1}), we will finally have $\rho<1$. This is to be expected. This is because that the density can be increased at most 1 from time $\Delta T-\Delta t$ to $\Delta T$. If the previous density is larger than 1, the density of the existing inactive cluster-cell must be higher than that of the newly generated cluster-cell.

We then analyze how long it takes to decrease its density to be less than 1 if now new points are absorbed. The extreme case is when a newly decayed cluster-cell who's density is almost $\frac{\beta\cdot v}{1-a^\lambda}$. Suppose the time is $\Delta T_{del}$, we have
\begin{equation}
\frac{\beta\cdot v}{1-a^{\lambda}}\cdot a^{\lambda\cdot v\cdot\Delta T_{del}}<1.
\end{equation}
We then obtain $\Delta T_{del}>\frac{log_a(1-a^\lambda)-log_a(\beta\cdot v)}{\lambda\cdot v}$



\end{proof}

We analyze the theoretical upper bound of the outlier reservoir size in the following. As we have discussed in Sec. \ref{sec:overview}, the inactive cluster-cells might be decayed from DP-Tree or generated from new arrival points. Suppose that every new arrival point in a time interval $\Delta T_{del}$ forms a new inactive cluster-cell, which is the case that results in the maximum number of inactive cluster-cells. Since after time $\Delta T_{del}$ all the previous inactive cluster-cells are removed, the outlier reservoir holds all newly generated inactive cluster-cells, which is $\Delta T_{del} \cdot v$. In addition, the active cluster-cells could decay to inactive ones. Considering that the sum of all cluster-cells' densities is $\frac{v}{1-a^\lambda}$ and the active cluster-cell's density is at least $\frac{\beta\cdot{v}}{1-a^\lambda}$, the total number of active cluster-cells is at most $(\frac{v}{1-a^\lambda})/(\frac{\beta\cdot{v}}{1-a^\lambda})=\frac{1}{\beta}$. Therefore, the outlier reservoir can hold at most $\Delta T_{del} \cdot v +\frac{1}{\beta}$ inactive cluster-cells.

\section{Adaptive Tuning of \Large$\tau$}
\label{sec:adaptive}

The parameter $\tau$ in \emph{EDMStream} controls the degree of cluster separation and cluster granularity. Large $\tau$ tends to result in a less number of large clusters, while small $\tau$ tends to result in much more small clusters. As the data distribution of stream evolves over time, the key parameter $\tau$ should not be set statically but dynamically to adapt to data evolution. When the stream points are loosely distributed, the $\tau$ should be a larger value, and vice verse. The original DP Clustering \cite{rodriguez2014clustering} draws a decision graph (see Fig. \ref{dpb}) to help users determine an appropriate $\tau$ value. However, it is not suitable for stream clustering since it is performed frequently rather than once, which is expensive and causes great inconvenience to users. Even though the user-interaction method does not work, it inspires us to learn the preference of users and propose an automatic tuning approach for $\tau$. Therefore, an adaptive approach for dynamically adjusting $\tau$ is desired.

A common optimization objective in clustering is minimizing the intra-cluster distance and maximizing the inter-cluster distance. Similarly, in DP clustering we aim to minimize the average of relative intra-dependent-distance
$\frac{\sum_{c:\delta_c\leq\tau}\delta_c/\overline\delta}{m}$ and at the same time maximize the average of relative inter-dependent-distance $\frac{\sum_{c:\delta_c>\tau}\delta_c/\overline\delta}{n}$ where $m$ is the number of intra-cluster-cells $m=|\{c:\delta_c\leq\tau\}|$, $n$ is the number of inter-cluster-cells $n=|\{c:\delta_c>\tau\}|$, $\overline\delta$ is the average dependent distance $\overline\delta=\sum_{c}\frac{\delta_c}{n+m}$. With regard to time information, we propose the following evaluation function for stream clustering and aim to minimize it \footnote{It is noticeable that the outliers or halos with small $\rho$ are already excluded for evaluating the objective function in order to reduce noise.}.
\vspace{-0.05in}
\begin{equation}\label{eq:tau}
    \mathcal{F}(\tau^t)=\alpha\cdot\frac{\sum\limits_{c:\delta_c^t>\tau^t}\delta_c^t}{n\cdot\overline\delta}+(1-\alpha)\cdot\frac{m\cdot\overline\delta}{\sum\limits_{c:\delta_c^t\leq\tau^t}\delta_c^t},
\vspace{-0.05in}
\end{equation}
where $0<\alpha<1$ is a balancing parameter that reflects the preference to minimize the average intra-dependent-distance or to maximize the average inter-dependent-distance. $\alpha$ also implicitly reflects the user preference to cluster granularity.

If $\tau^t$ is a larger value, the average intra- and inter-dependent-distances are larger values, so that $\frac{m\cdot\overline\delta}{\sum_{c:\delta_c^t\leq\tau^t}\delta_c^t}$ become extremely large, and we will obtain less number of large clusters. While $\tau^t$ is a smaller value, the average intra- and inter-dependent-distances are smaller values, so that $\frac{\sum_{c:\delta_c^t>\tau^t}\delta_c^t}{n\cdot\overline\delta}$ become extremely large, and it will result in a large number of small clusters. Therefore, a proper $\tau$ should tend to minimize $\mathcal{F}(\tau^t)$.


Another problem is how to choose $\alpha$. We choose $\alpha$ by learning user preference based on their initial selection of cluster centers from decision graph. We adopt the following heuristic to estimate $\alpha$. In the initialization phase after a number of cluster-cells are cached and form an initial DP-Tree, we draw a decision graph according to these existing cluster-cells' $\rho$ and $\delta$ values and let users pick cluster centers. Suppose user chooses a number of density peaks whose dependent distances are at least $\tau^0$.
Given the initial $\tau^0$, we can find the $\alpha=\widehat{a}$ such that for any $\delta \neq \tau^0$, $\mathcal{F}(\widehat{a},\tau^0)<\mathcal{F}(\widehat{a},\delta)$. Given the balance parameter $\alpha$ which reflects user preference, at any time $t$ a $\tau^t$ that minimizes $\mathcal{F}(\tau^t)$ can be determined automatically.

\section{experimental evaluation}
\label{sec:expr}
In this section, we present the experimental evaluation of \textit{EDMStream}. All experiments are conducted on a commodity PC with 3.4GHz Intel Core i3 CPU and 8GB memory.
\vspace{-0.05in}
\subsection{Preparation}
\vspace{-0.05in}
\Paragraph{Datasets}
Our experiments involve two synthetic datasets (SDS and HDS) and four real datasets (NADS \cite{Lichman:2013}, KDDCUP99 \cite{Stolfo1999Cost}, CoverType \cite{Malik1999Estimating, Reiss2012Introducing} and PAMAP2 \cite{Reiss2012Creating}). All the real datasets are with ground truth information. The SDS dataset \cite{pei2006synthetic} is generated with 2-D stream points to visually show cluster results. The HDS stream dataset is generated with various dimensions based on the approach mentioned in \cite{vennam2005syndeca}. Both the synthetic and real datasets are converted into streams by taking the data input order as the order of streaming except for NADS. The NADS is a news stream dataset with time information (there is no dimension information for NADS since the data are short text). The features of these datasets are listed in Table \ref{tab:dataset}.

\Paragraph{Comparison Counterparts}
We implement D-Stream \cite{Chen2007Density}, DenStream \cite{Cao2006Density}, DBSTREAM \cite{hahsler2016clustering} and MR-Stream \cite{Wan2009Density} for comparison\footnote{The source codes of these algorithms are not public available except for DBSTREAM.}, all of which are density-based. The counterparts all use an online component (i.e., data summarization structure) that maps each stream object to a grid \cite{Chen2007Density, Wan2009Density} or microcluster \cite{Cao2006Density, hahsler2016clustering} 
and an offline component that performs a batch-mode classical clustering algorithm to update clustering result.

\Paragraph{Parameters Setup}
In our experiment, we fix the data points arrival rate as 1,000 pt/s unless particularly mentioned. Different algorithms have different decay parameter requirements. In order to make equal decay effect, we carefully set the decay parameters as follows. In \emph{EDMStream} and D-Stream, we set $a=0.998$ and $\lambda = 1$ such that $a^\lambda=0.998$. In MR-Stream, because $a=1.002$ is fixed, so we set $\lambda=-1$ such that $a^\lambda=0.998$. In DenStream and DBSTREAM, they fix $a=2$, therefore we set $\lambda=0.0028$ such that $a^\lambda = 0.998$. We use these parameter settings for achieving the same decay rate. In addition, we set $\beta = 0.0021$ as discussed in Sec. \ref{sec:stream:emergedecay}. For the radius $r$, we refer to the method of choosing $d_c$ in \cite{rodriguez2014clustering}. We will discuss the effect of $r$ in Sec. \ref{subsec:radius}. We set the other parameters of the competitor algorithms by referring to their papers.
\vspace{-0.2in}
\begin{center}
\begin{table}[htb]
    \caption{Datasets}
    \vspace{-0.1in}
    \label{tab:dataset}
    \centering
    \footnotesize
    \begin{tabular}{c|c|c|c|c}
    \hline
    \textbf{data set}&\textbf{instances}& \textbf{dim} &\textbf{clusters}&\textbf{$r$}\\
\hline
SDS & 20,000 & 2 & 2 &  0.3 \\
\hline
\multirow{4}{*}{HDS} & \multirow{5}{*}{100,000} & 10 & 20 & 60 \\
\cline{3-5}
& & $30$ & 20 & 65\\
\cline{3-5}
& & $100$ & 20 & 68\\
\cline{3-5}
& & $300$ & 20 & 70\\
\cline{3-5}
& & $1000$ & 20 & 70 \\
\hline
NADS & 422,937  & - & 7231 & 0.4\\
\hline
KDDCUP99 & 494,021 & 34 & 23 & 100 \\
\hline
CoverType & 581,012 & 54 & 7  & 250\\
\hline
PAMAP2 & 447,000 & 51 & 13 & 5\\
\hline
\end{tabular}
\vspace{-0.2in}
\end{table}
\end{center}

\vspace{-0.05in}
\subsection{Tracking Cluster Evolution}

An important feature of a stream clustering algorithm is the ability to track the evolution of clusters. We first use a synthetic 2-D dataset SDS to visually show the ability of tracking clustering evolution. We then show a use case of news recommendation application to illustrate how to utilize the ability of clustering evolution tracking.
\vspace{-0.05in}
\subsubsection{Synthetic Dataset}

Fig. \ref{6result} shows the data stream with 6 snapshots taken in $t_1=1s$, $t_2=4s$, $t_3=8s$, $t_4=12s$, $t_5=14s$, $t_6=20s$. Various degrees of grey indicate the freshness of data. The darker one is fresher, the lighter one is staler. SDS contains 20,000 points and the point arrival speed is set to 1,000 pt/s, so the stream ends at 20s.

We show the cluster evolution tracking result in Fig. \ref{fig:clusterlifeEDS}. Different color lines indicate different clusters. The length of lines indicates the lifecycle of clusters. Multiple branches split from one line or merging into one means cluster splitting or merging. According to Fig. \ref{6result}, from 1s to 4s, we can see that the shapes and locations of two clusters are evolving, they are moving closer to each other. At 9s, these two clusters merge into one single cluster. At 12s, a new cluster emerges at right hand side, and the left cluster shrinks gradually. At 14s, the left cluster disappears completely, and the right cluster has been split into two different clusters. They are also moving to the opposite directions from each other. Finally at 20s, the two clusters move far away from each other. We can see that Fig. \ref{fig:clusterlifeEDS} successfully captures all the cluster evolution activities.

\begin{figure}[t]
    \captionsetup[subfigure]{farskip = -0.08in}
    \centering
    \subfloat[$t_1=1s$]{\includegraphics[width = 1in]{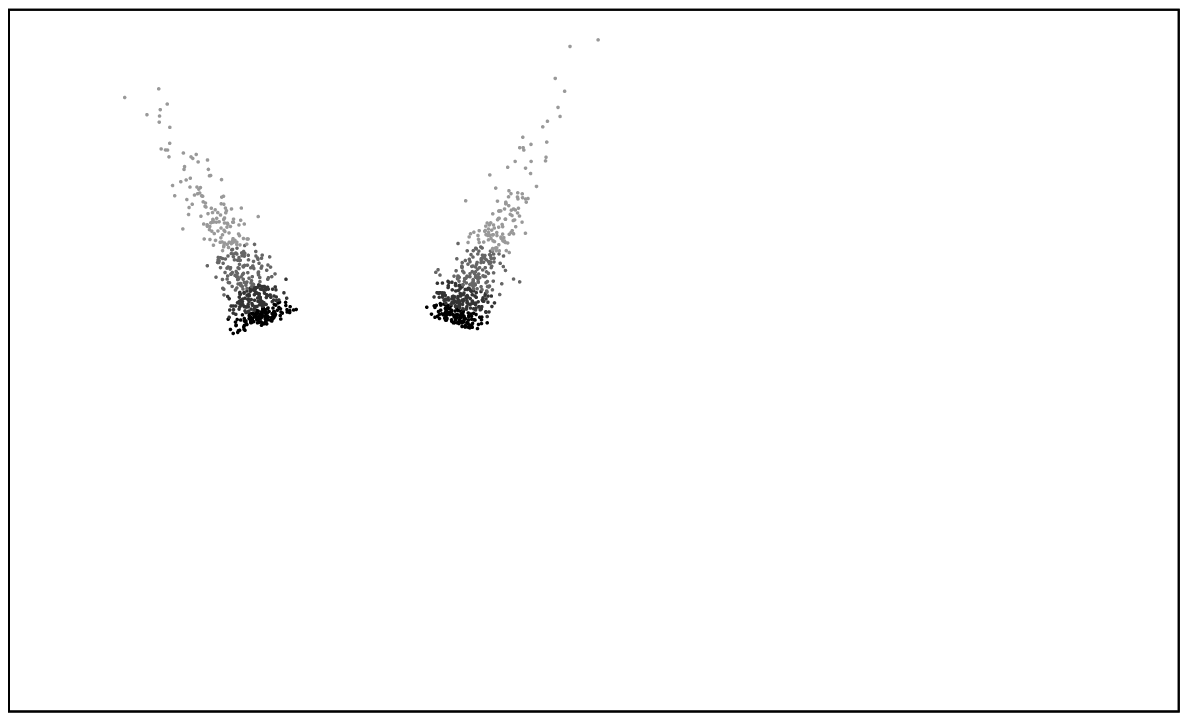}\label{t1}}\ \
    \vspace{0pt}
    \subfloat[$t_2=4s$]{\includegraphics[width = 1in]{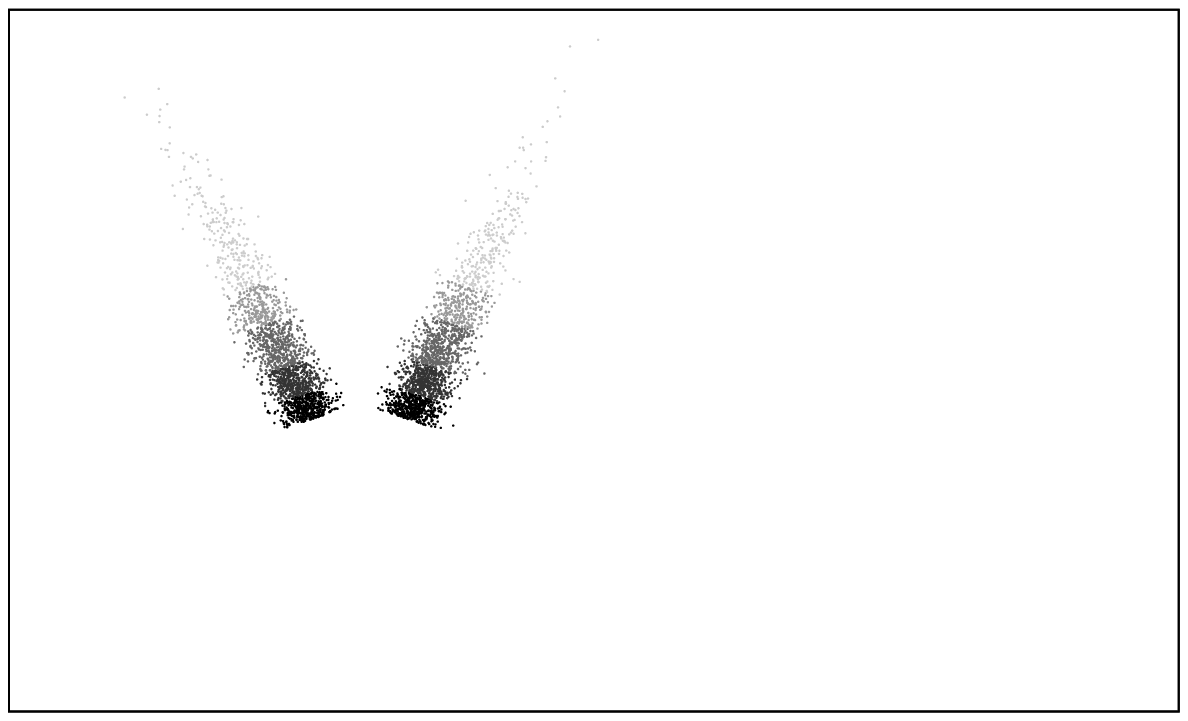}\label{t2}}\ \
    \vspace{0pt}
    \subfloat[$t_3=8s$]{\includegraphics[width = 1in]{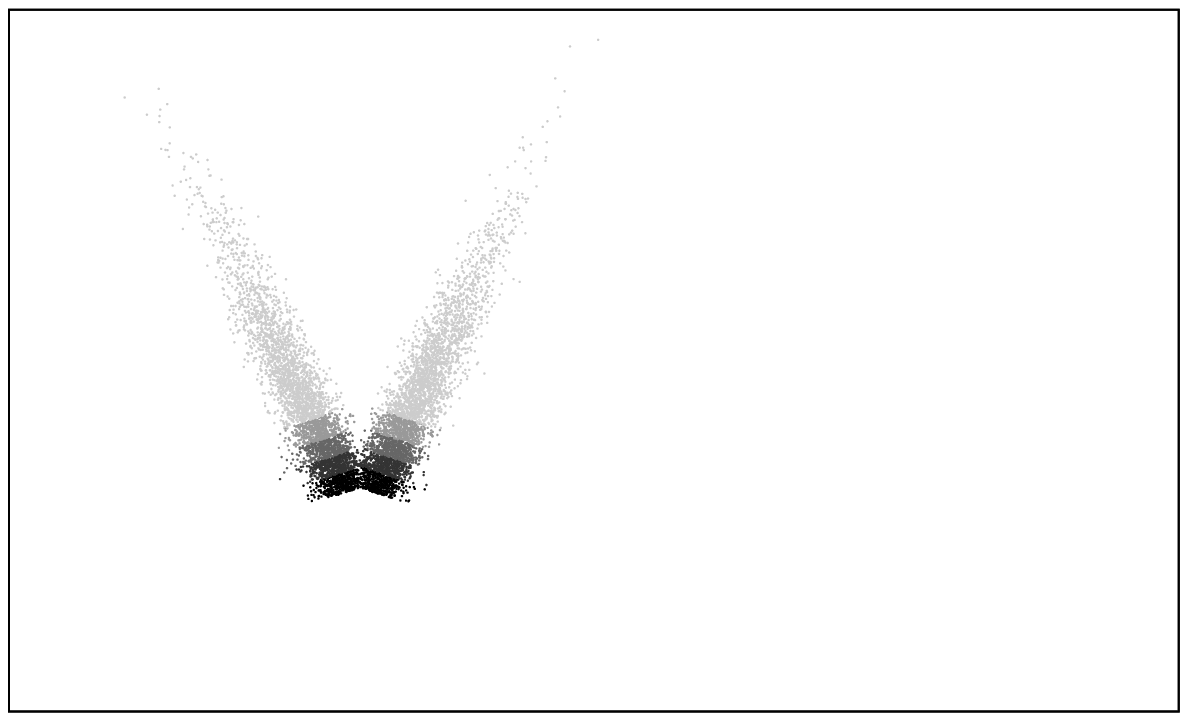}\label{t3}}\\
    \vspace{0pt}
    \subfloat[$t_4=12s$]{\includegraphics[width = 1in]{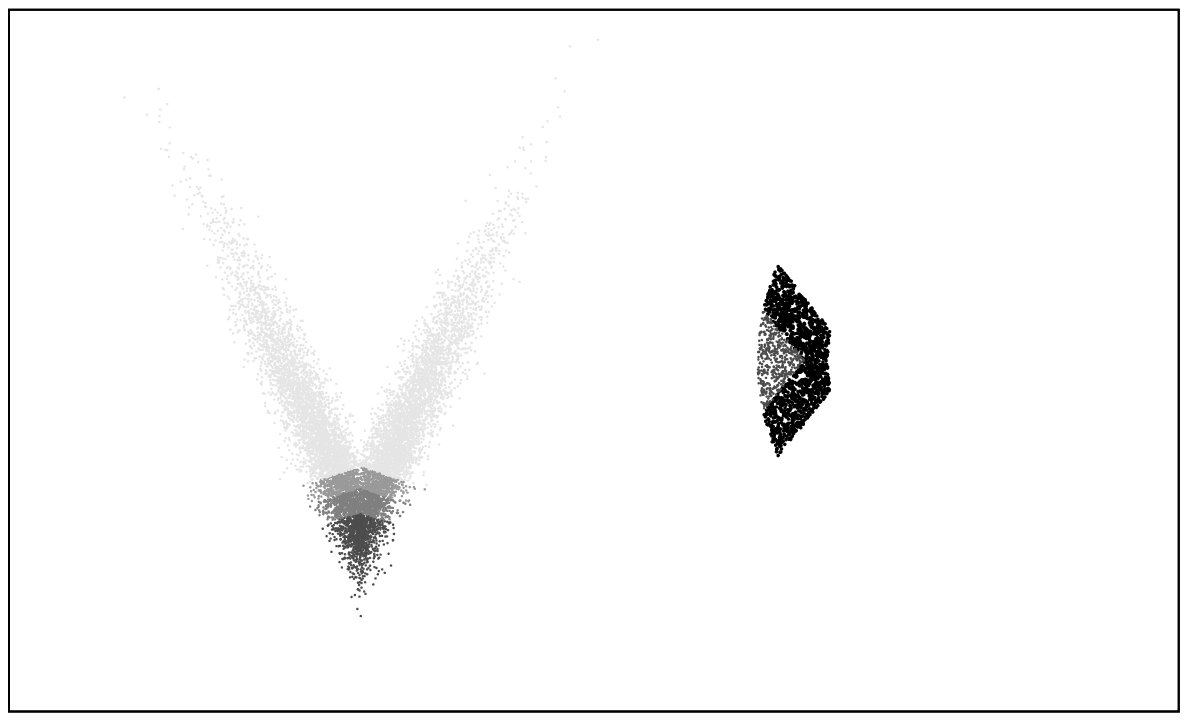}\label{t4}}\ \
    \vspace{0pt}
    \subfloat[$t_5=14s$]{\includegraphics[width = 1in]{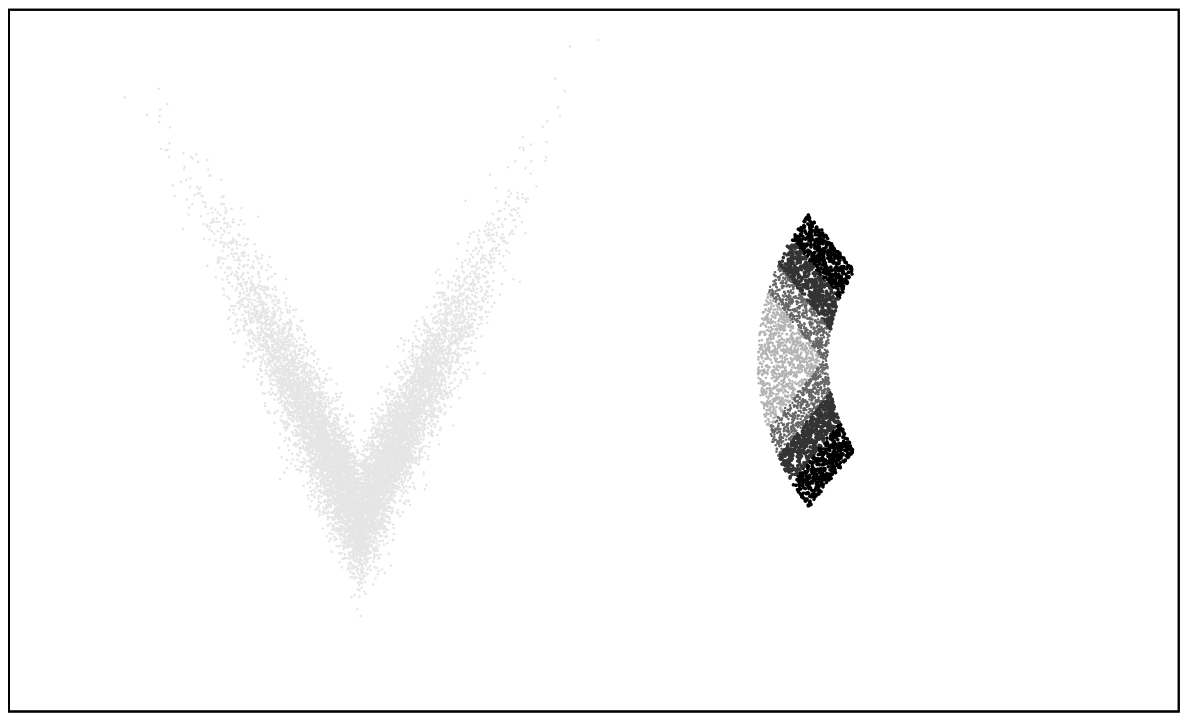}\label{t5}}\ \
    \vspace{0pt}
    \subfloat[$t_6=20s$]{\includegraphics[width = 1in]{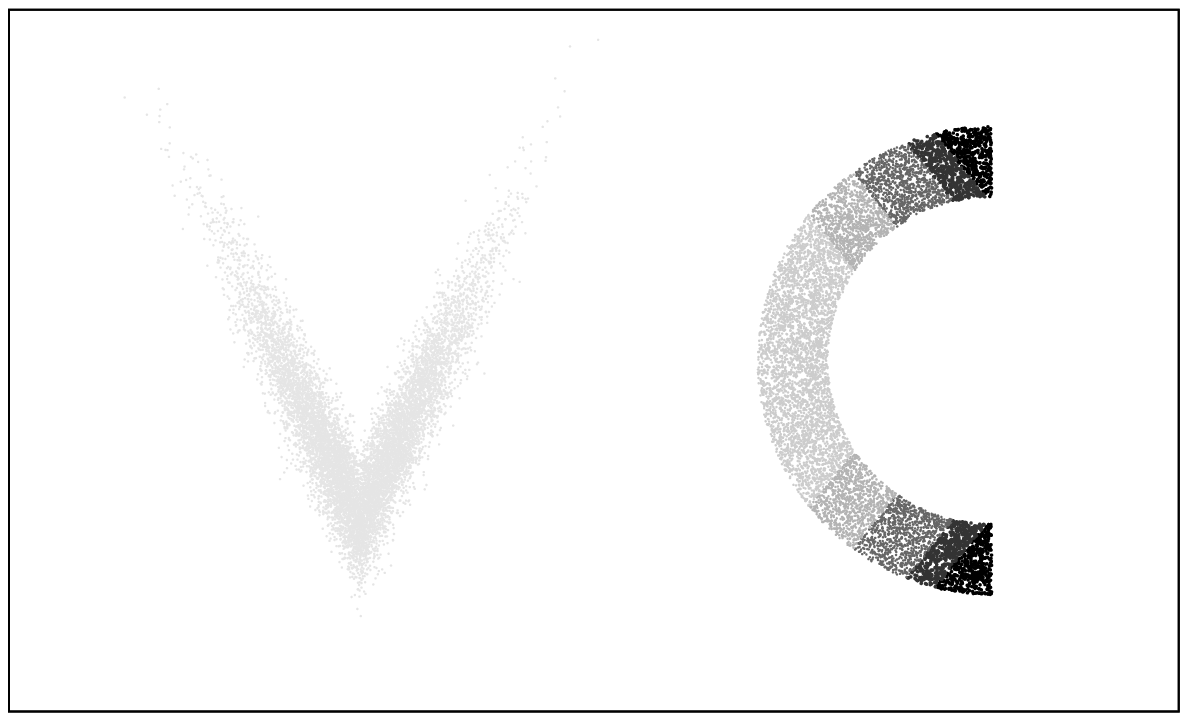}\label{t6}}\\
    \captionsetup[subfigure]{labelformat=empty}
    \vspace{6pt}
    \subfloat[]{\includegraphics[width = 3in]{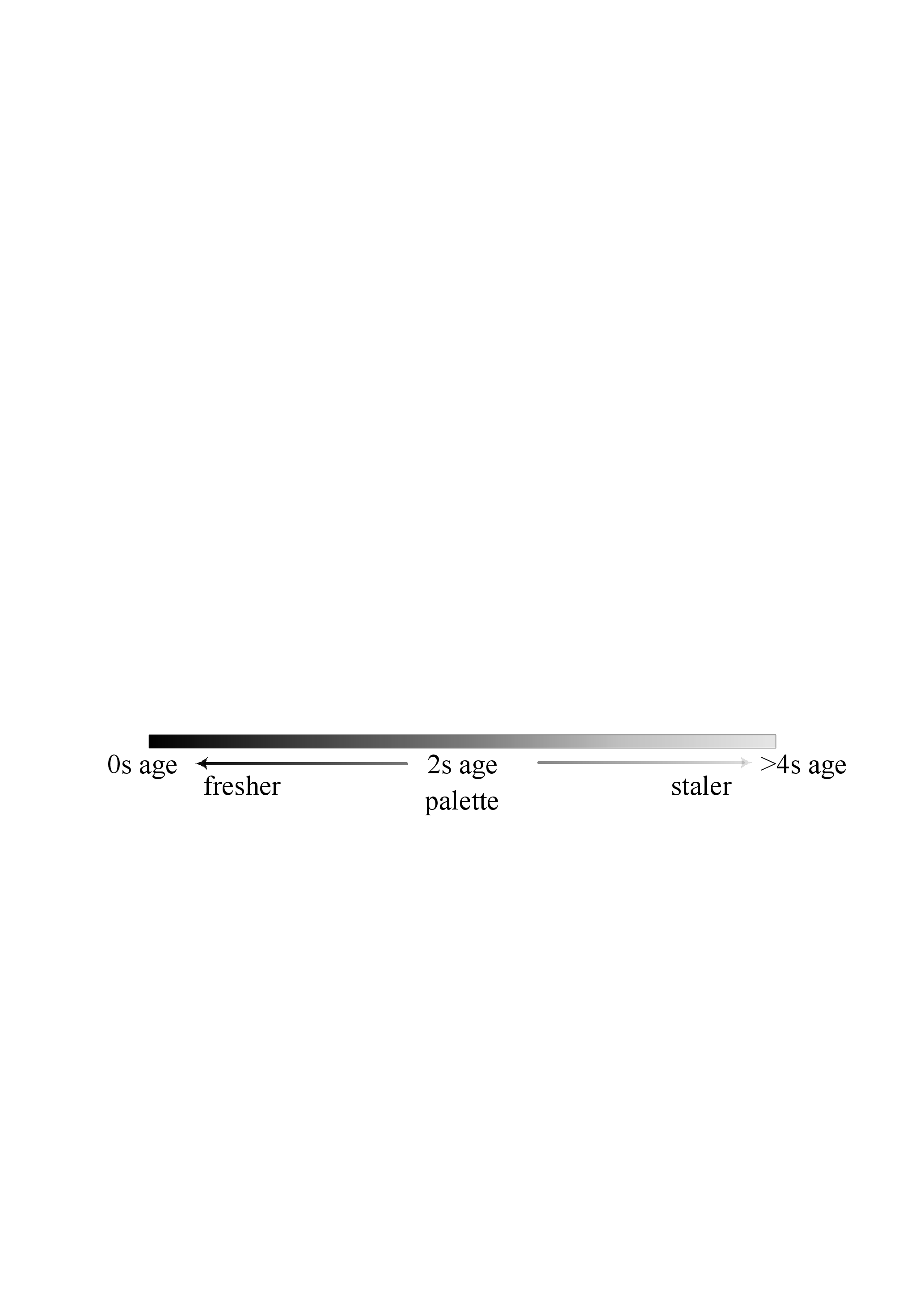}\label{palette}}
    \vspace{-0.25in}
\caption{Data distribution snapshots with time decay information at different time points (SDS)}\label{6result}
\vspace{-0.15in}
\end{figure}

\begin{figure}[t]
  \centering
  \includegraphics[width=2.5in]{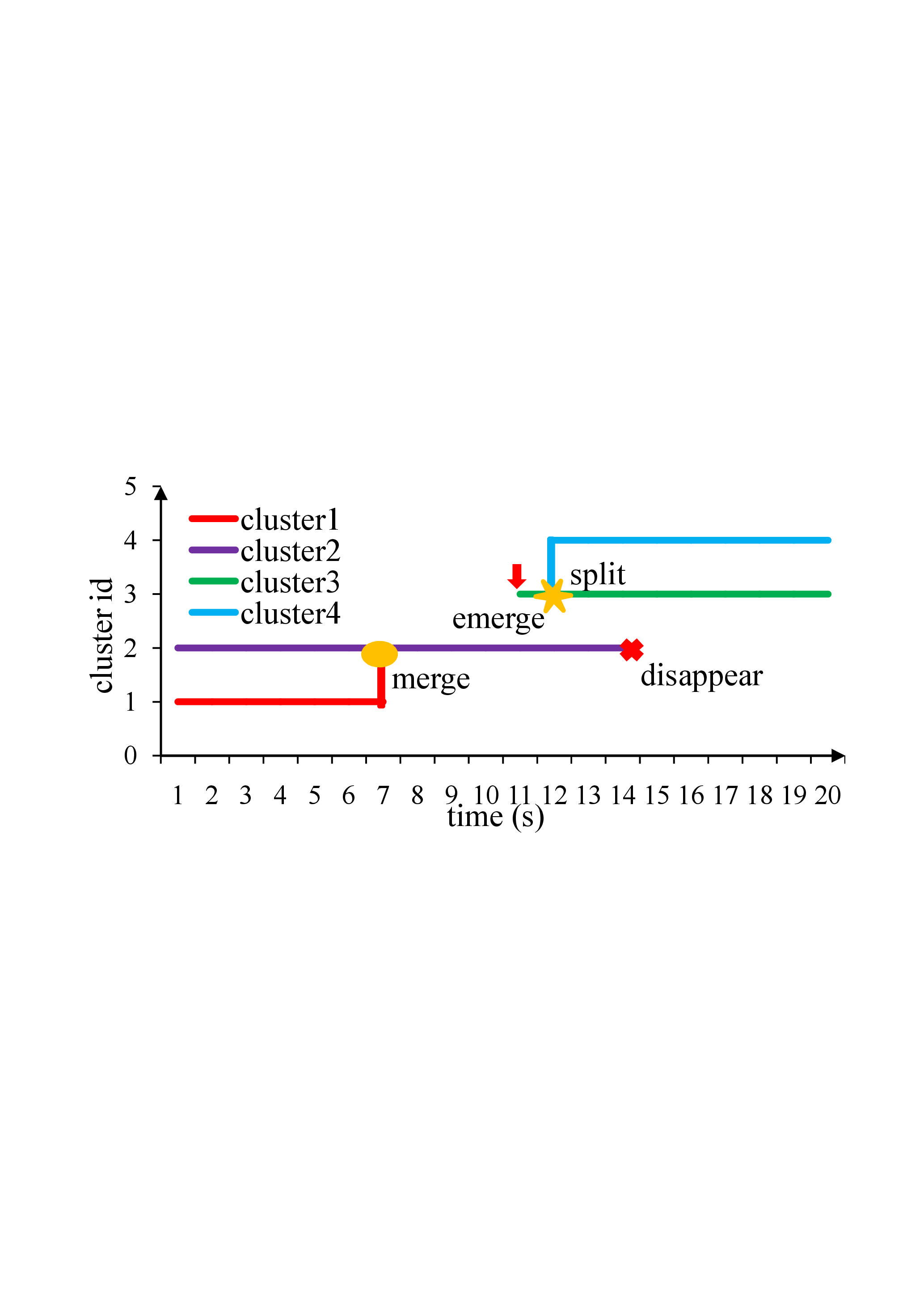}
  \vspace{-0.1in}
  \caption{Cluster evolution activities (SDS)}\label{fig:clusterlifeEDS}
  \vspace{-0.2in}
\end{figure}

\vspace{-0.05in}
\subsubsection{Use Case Study: Monitoring Cluster Evolution in News Recommendation}

\begin{figure}[t]
  \centering
  \includegraphics[width=3in]{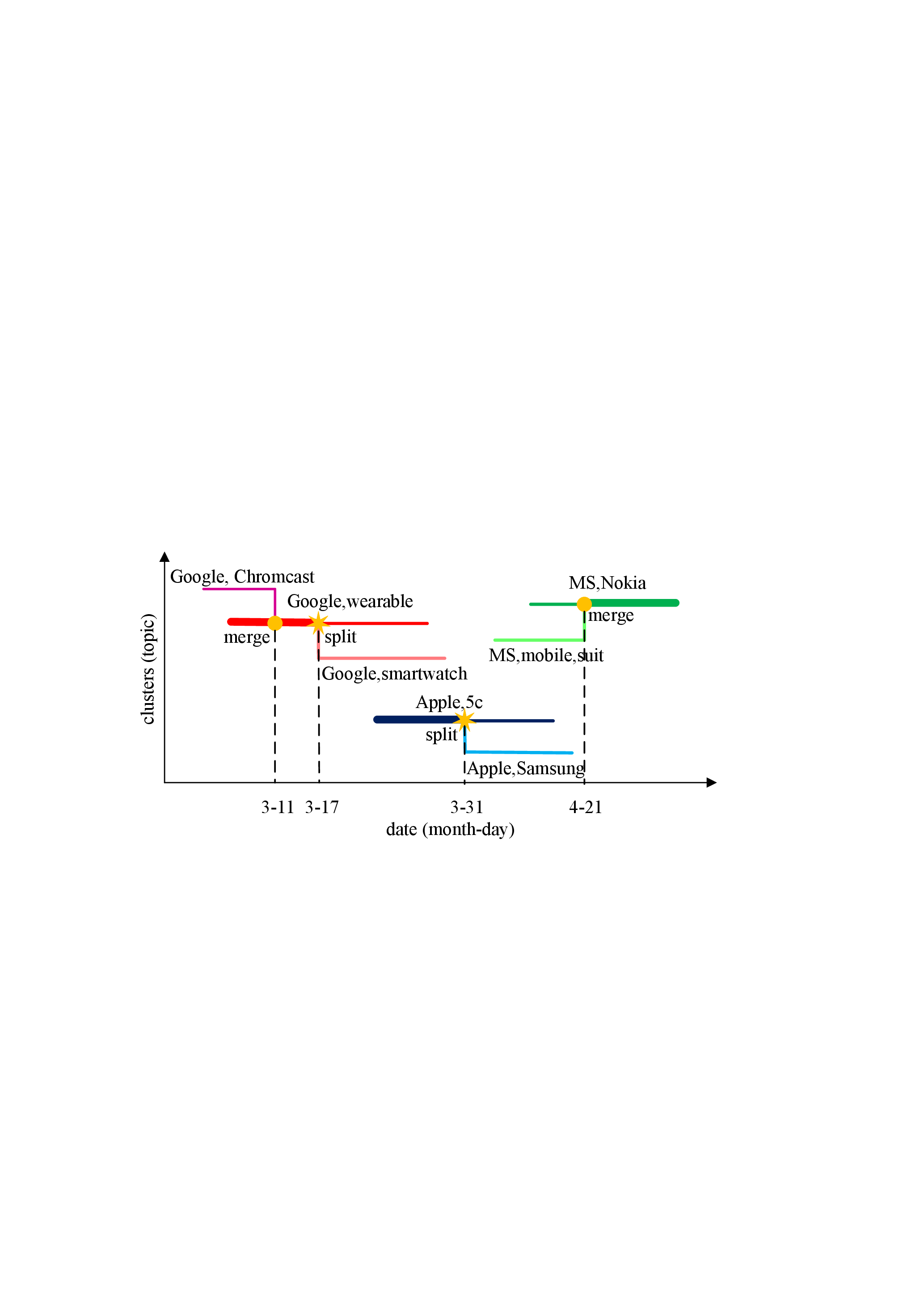}
  \vspace{-0.1in}
  \caption{Cluster evolution activities (NADS)}\label{fig:clusterlifeNews}
  \vspace{-0.2in}
\end{figure}

A real application of stream clustering is news recommendation. The news in the same cluster as that a user has visited is recommended to the user. For the text dataset, the Jaccard distance is used. The density of cluster-cell is computed as defined in Equation \ref{eq:timedensity}. We run \emph{EDMStream} on the NADS news stream and depict the cluster evolution tracking result in Fig. \ref{fig:clusterlifeNews}\footnote{There should be about 7281 clusters at the end of the stream. We only show a few cluster evolution tracking results in the figure.}. We show the tags of each news cluster which indicate the topic. Furthermore, we also show the real events which we think lead to the cluster evolution in Table \ref{tab:cluevent}.

On 3-11, a cluster with tags \{\textit{Google}, \textit{Chromcast}\} merges into another cluster with tags \{\textit{Google}, \textit{wearable}\}. We analyze the reason as follows. The news about Chromcast are not hot anymore, but these news are highly related to another news topic \{\textit{Google}, \textit{wearable}\}. Given that many news about ``Google launches SDK for Android wearables'' come out at that time, these two clusters merge together. Later, Google confirms smartwatch plans. These news firstly are classified into the \{\emph{Google}, \emph{wearables}\} cluster. But with the increase of popularity, they are split from original cluster and form a new cluster \{\emph{Google}, \emph{smartwatch}\} on 3-17. On 3-31, the news cluster \{\emph{Apple}, \emph{Samsung}\} is split from the original cluster \{\emph{Apple}, \emph{5c}\}, because there is an shocked event that Apple and Samsung battle for patent and the growing media focus on it instead of Apple's iphone 5c. Similarly, on 4-21 the cluster \{\textit{MS}, \textit{mobile}, \textit{suit}\} merges into another cluster with tags \{\textit{MS}, \textit{Nokia}\} since the news about ``Microsoft's acquisition of Nokia'' become popular and the news about ``Microsoft' mobility office suite'' get less and less attention. From this experiment, we can see that our cluster evolution tracking method can successfully identify different types of cluster evolution activities, including cluster emerging, disappearing, splitting and the merging of clusters.


\begin{center}
\vspace{-0.2in}
\begin{table*}[htb]
\vspace{-0.2in}
    \caption{Cluster evolutions and their related events}
    \vspace{-0.1in}
    \label{tab:cluevent}
    \centering
    \footnotesize
    \begin{tabular}{|c|c|c|c|}
\hline
\multicolumn{4}{|c|}{\textbf{Split}} \\
\hline
\textbf{original cluster} & \textbf{cluster 1} & \textbf{cluster 2} & \textbf{event} \\
\hline
Google wearable & Google wearable & Google smartwatch & On 3-17, ``google confirms smartwatch plans unveils android wear'' \\
\hline
Apple 5c & Apple 5c & Apple Samsung & On 3-31, ``apple samsung renew patent battle court'' \\
\hline
\hline
\multicolumn{4}{|c|}{\textbf{Merge}} \\
\hline
\textbf{cluster 1} & \textbf{cluster 2} & \textbf{merged cluster} & \textbf{event} \\
\hline
Google Chromast & Google wearable & Google wearable & On 3-11, ``google exec promises wearables sdk developers'' \\
\hline
MS mobile suit & MS Nokia & MS Nokia & On 4-21, ``msft nok nokia phones renamed microsoft mobile'' \\
\hline
\end{tabular}
\vspace{-0.2in}
\end{table*}
\end{center}

\vspace{-0.05in}
\subsection{Efficiency}

The ability of updating clustering result in real time is crucial for stream clustering. We compare \emph{EDMStream} with the competitor algorithms in terms of efficiency in the following.

\vspace{-0.01in}
\subsubsection{Response Time}
\label{sec:runtime}



\begin{figure*}[t]
    \vspace{-0.3in}
    \captionsetup[subfigure]{skip = 0pt}
    \centering
    \subfloat[KDDCUP99]{\includegraphics[width = 2in]{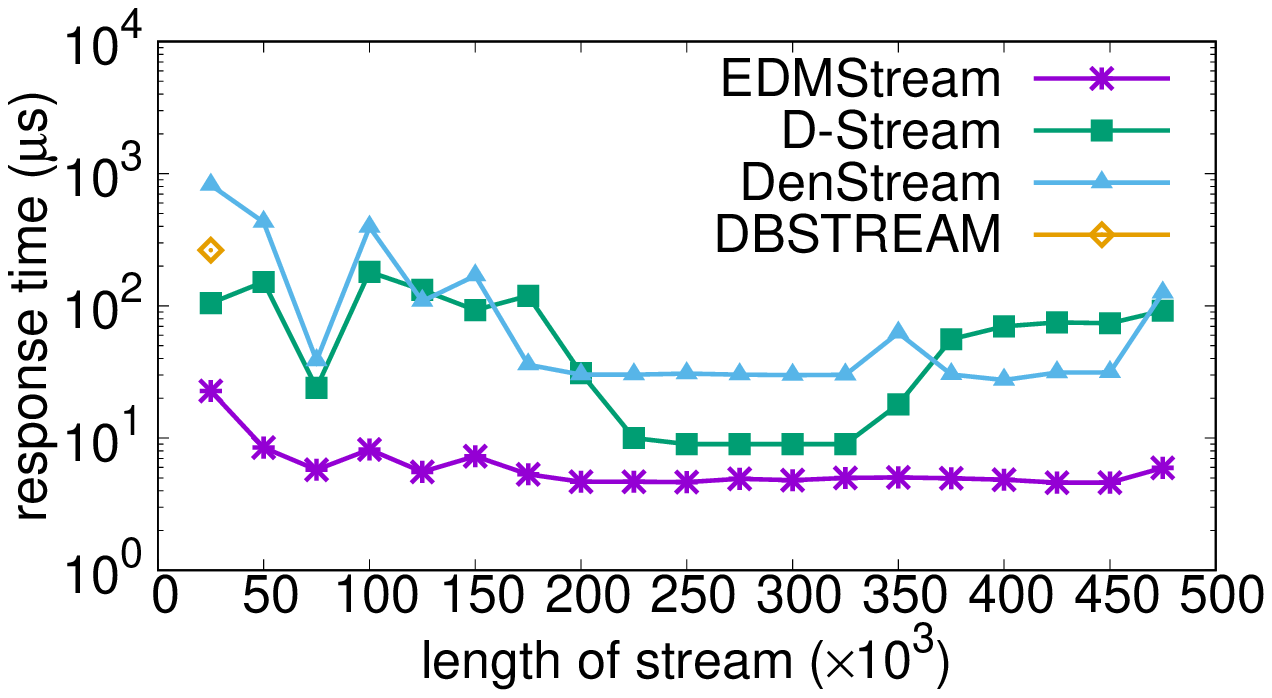}\label{fig:kddcup99runtime}}\ \
    \hspace{2pt}
    \vspace{-0.05in}
    \subfloat[CoverType]{\includegraphics[width = 2in]{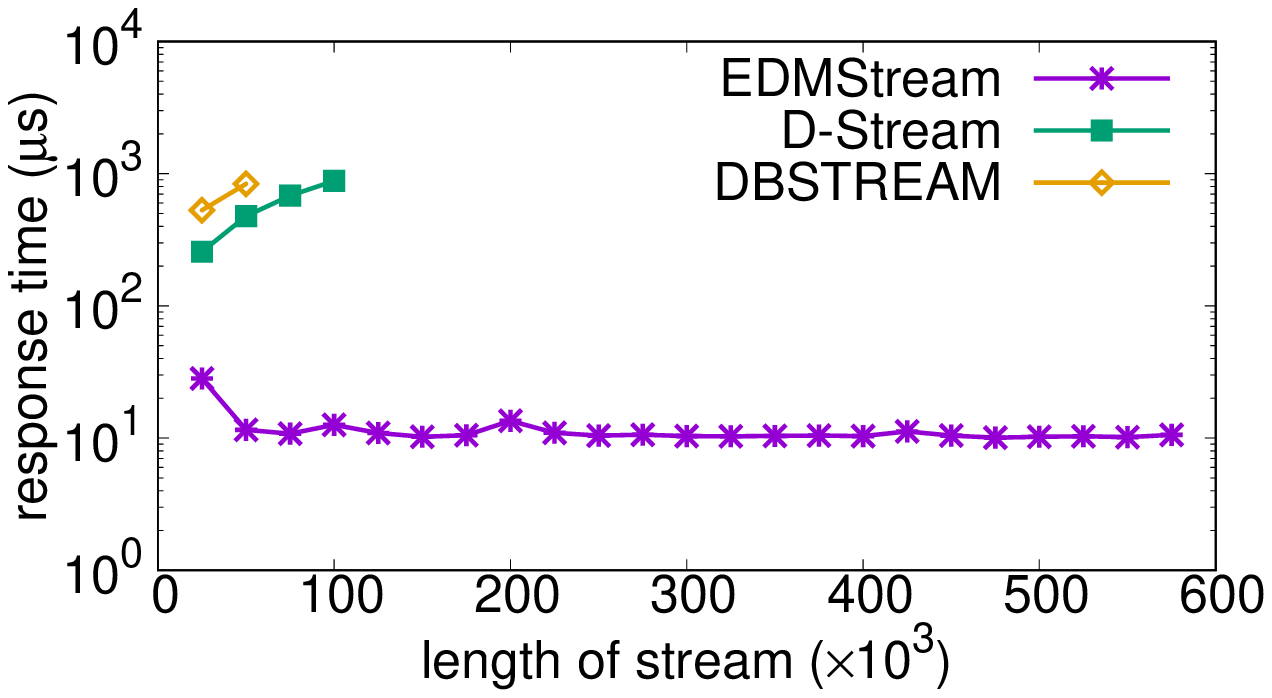}\label{fig:covtyperuntime}}\ \
    \hspace{2pt}
    \vspace{-0.05in}
    \subfloat[PAMAP2]{\includegraphics[width = 2in]{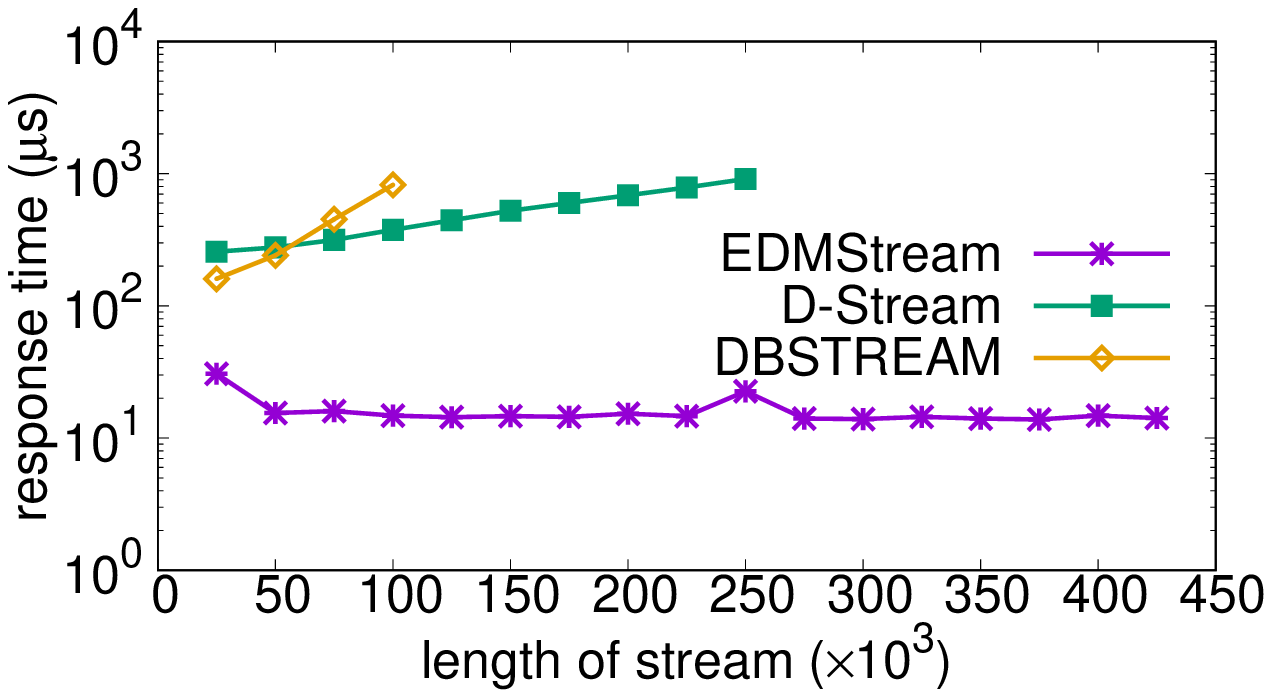}\label{fig:pparuntime}}\ \
    \vspace{-0.05in}
    \caption{The comparison of response time}\label{fig:runtime}
    \vspace{-0.2in}
\end{figure*}

We run \textit{EDMStream} and their competitor algorithms with fixed point arrival rate 1K/s. Fig. \ref{fig:runtime} shows the average response time of different algorithms in a time interval of 25 s. MR-Stream fails to process stream with 1K/s on all streams. DenStream fails on CoverType and PAMAP2 streams. DBSTREAM and D-Stream work well in the beginning but also run out of memory later. Only \emph{EDMStream} is fast enough to process stream with 1K/s point rate. \emph{EDMStream} requires much less response time than others due to the fact that \emph{EDMStream} relies on online and incremental cluster update while the others relies an costly offline clustering step.

\vspace{-0.05in}
\subsubsection{Throughput}\label{sec:throughput}

\begin{figure*}[hbt]
\vspace{-0.5in}
    \centering
    \subfloat[KDDCUP99]{\includegraphics[width = 2in]{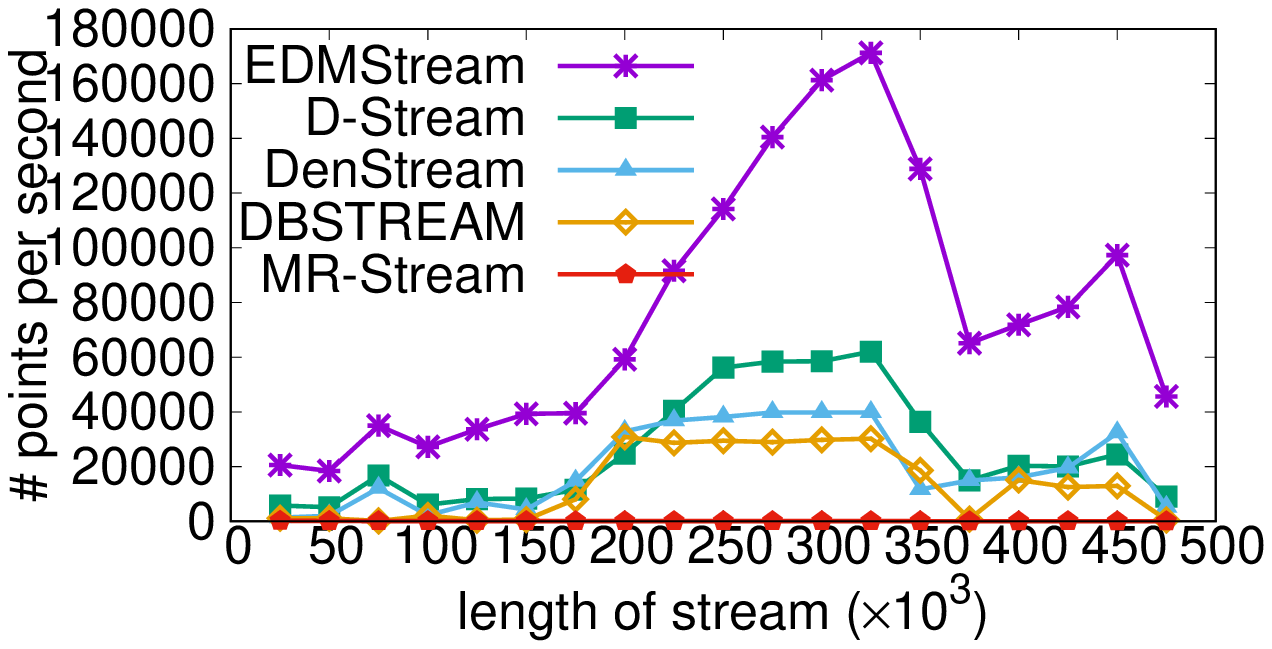}\label{fig:kddcup99thrput}}\ \
    \hspace{2pt}
    \vspace{0pt}
    \subfloat[CoverType]{\includegraphics[width = 2in]{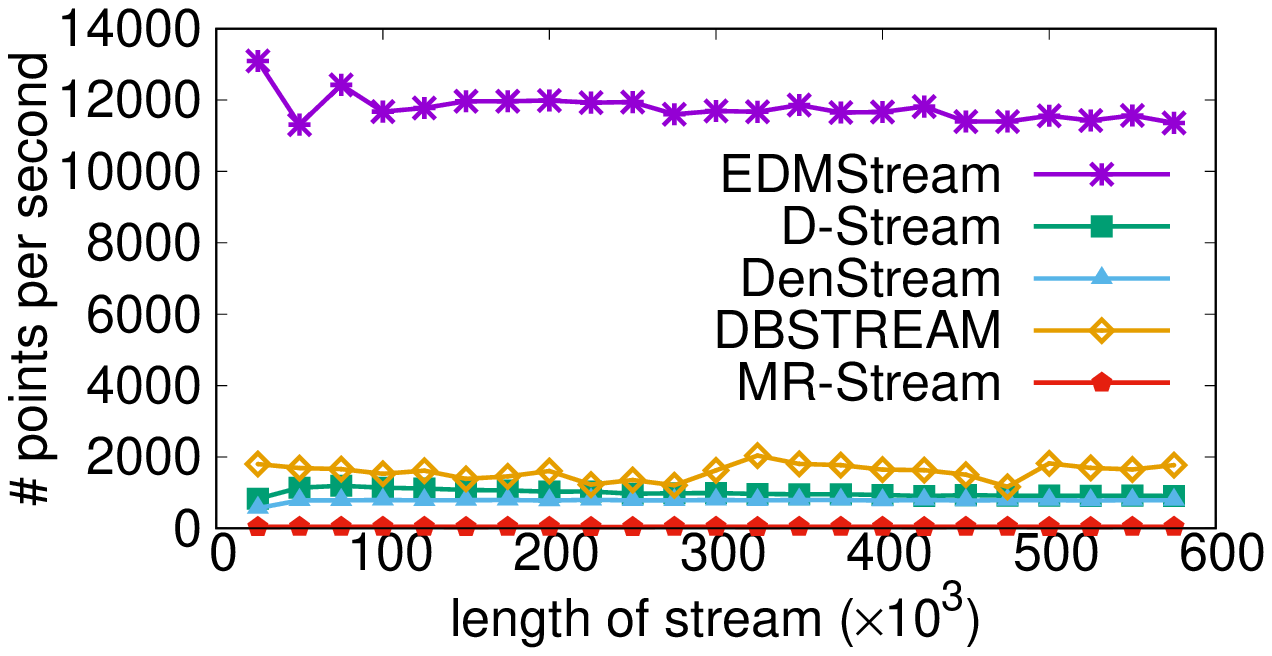}\label{fig:covtypethrput}}\ \
    \hspace{2pt}
    \vspace{0pt}
    \subfloat[PAMAP2]{\includegraphics[width = 2in]{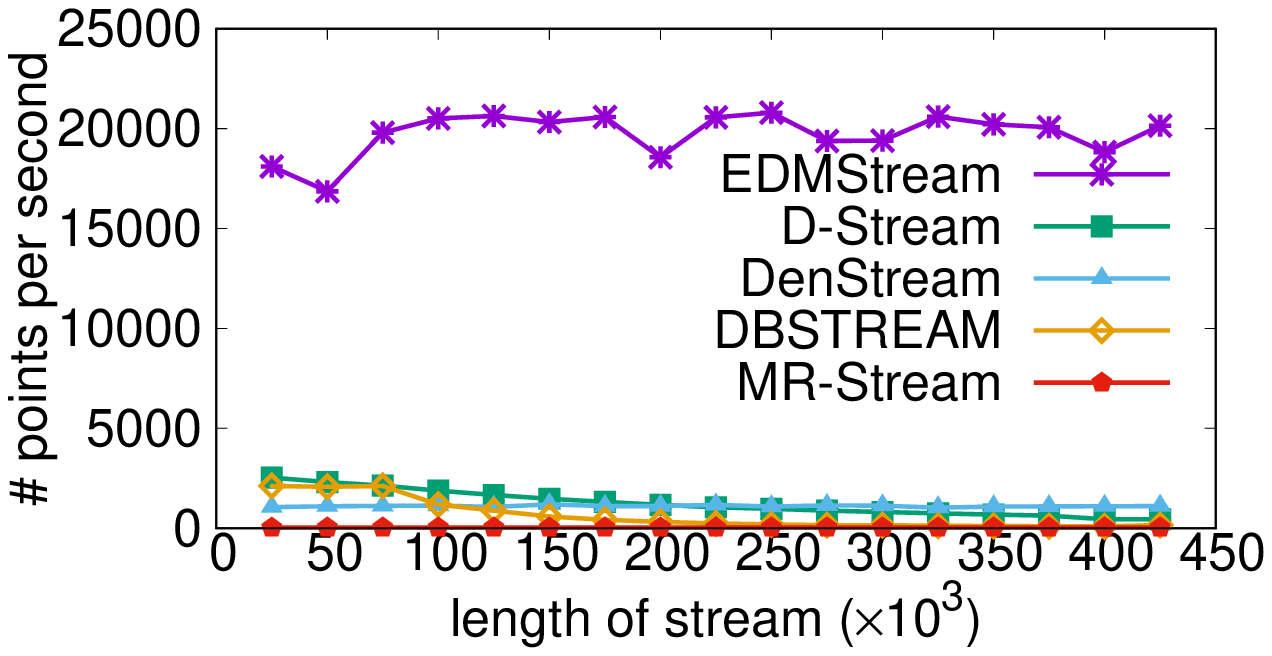}\label{fig:ppathrput}}\ \
    \vspace{-0.1in}
    \caption{Comparison of throughput}\label{fig:throughput}
    \vspace{-0.1in}
\end{figure*}

We run a stress test to see the maximum throughput that we can achieve. We remove the limitation of 1K/s point arrival rate and process as many points as possible. Fig. \ref{fig:throughput} shows the results of our \emph{EDMStream} and its competitors. The throughput of \emph{EDMStream} can be up to 10K-170K points per second, which achieves 7-15x speedups than other algorithms.


\vspace{-0.05in}
\subsubsection{Filtering Strategies}\label{sec:uptime}

The clustering result update in \textit{EDMStream} requires to update all the dependent distances ($\delta^t$) of all cluster-cells, which is the most costly step. We propose two filtering strategies to avoid unnecessary updates (see Sec. \ref{sec:stream:update}). To illustrate the effect of these filtering strategies, we first run \textit{EDMStream} without any filtering (i.e., wf). We then turn on the density filtering scheme described in Lemma \ref{theo:upnum} (i.e., df) and the triangle inequality filtering scheme described Lemma \ref{theo:upcell} (i.e., df + tif). The accumulated time for the dependencies update is depicted in Fig. \ref{fig:uptime}. We can see that our filtering schemes greatly reduce the update time.

\begin{figure*}[t]
\vspace{-0.3in}
    \captionsetup[subfigure]{skip = -0.05in}
    \centering
    \subfloat[KDDCUP99]{\includegraphics[width = 2in]{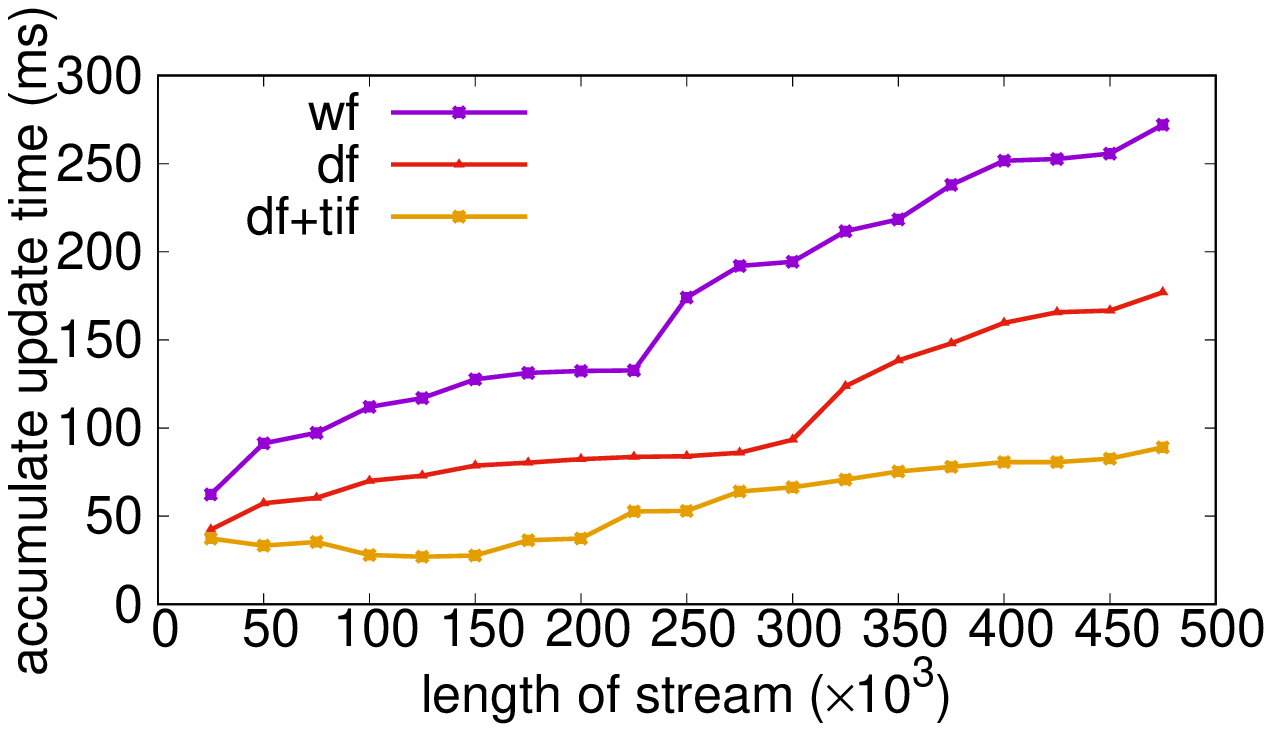}\label{fig:kdd99uptime}}\ \
    \hspace{2pt}
    \subfloat[CoverType]{\includegraphics[width = 2in]{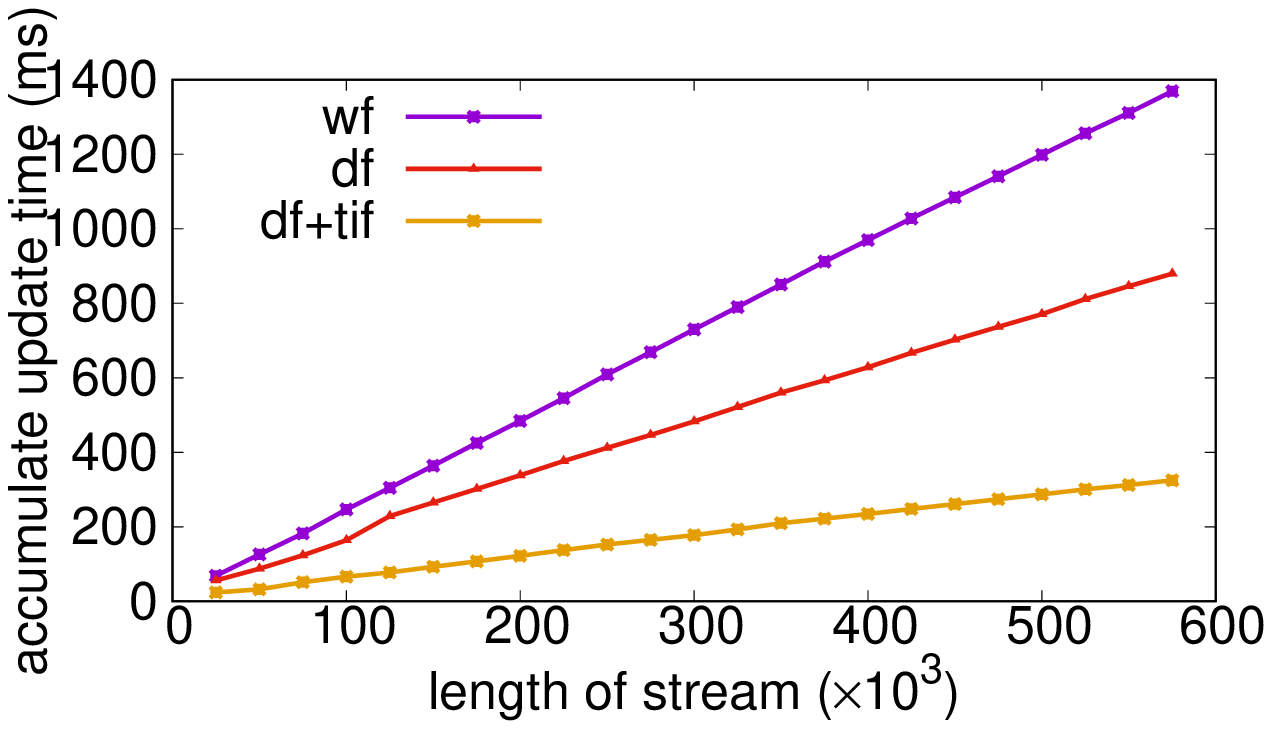}\label{fig:covtypeuptime}}\ \
    \hspace{2pt}
    \subfloat[PAMAP2]{\includegraphics[width = 2in]{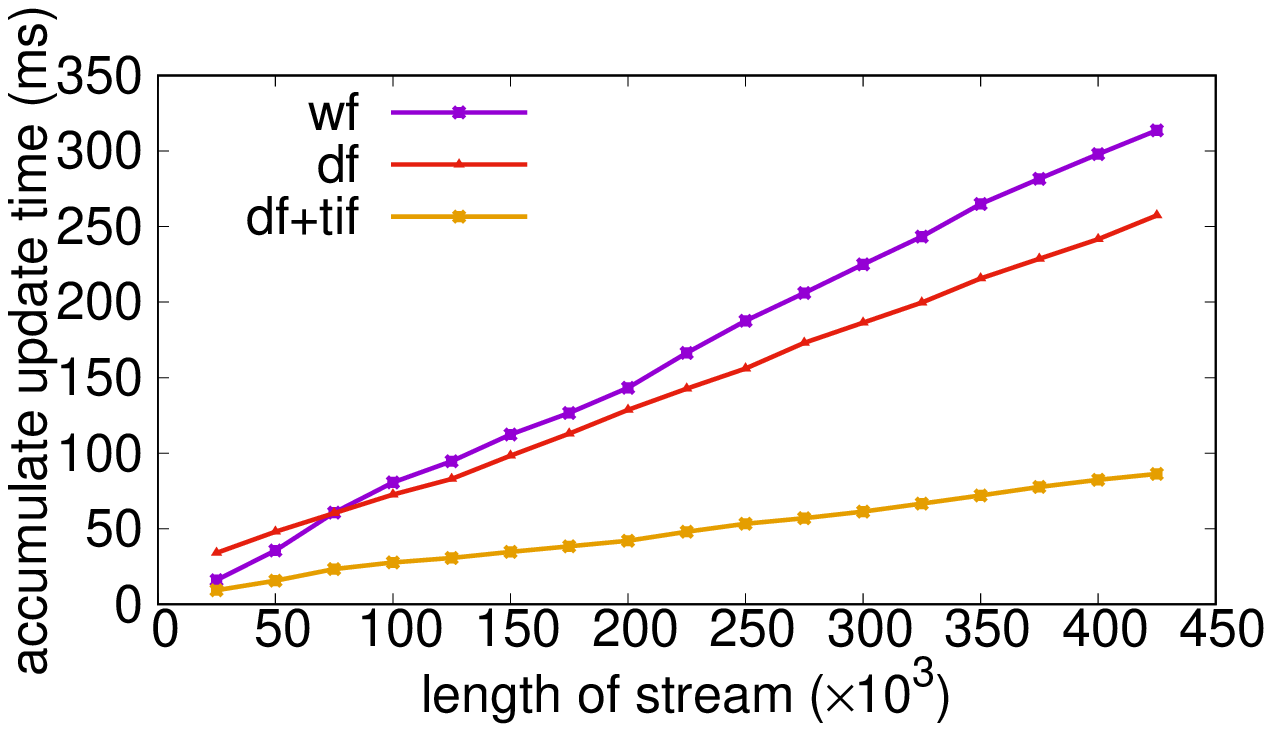}\label{fig:ppauptime}}\ \
    \vspace{-0.08in}
    \caption{Accumulated time for dependencies update}\label{fig:uptime}
    \vspace{-0.2in}
\end{figure*}

\vspace{-0.05in}
\subsubsection{Varying Data Dimensions}\label{sec:dimensition}

\begin{figure}[t]
\vspace{-0.2in}
    \centering
    \includegraphics[width=2.2in, height=1.5in]{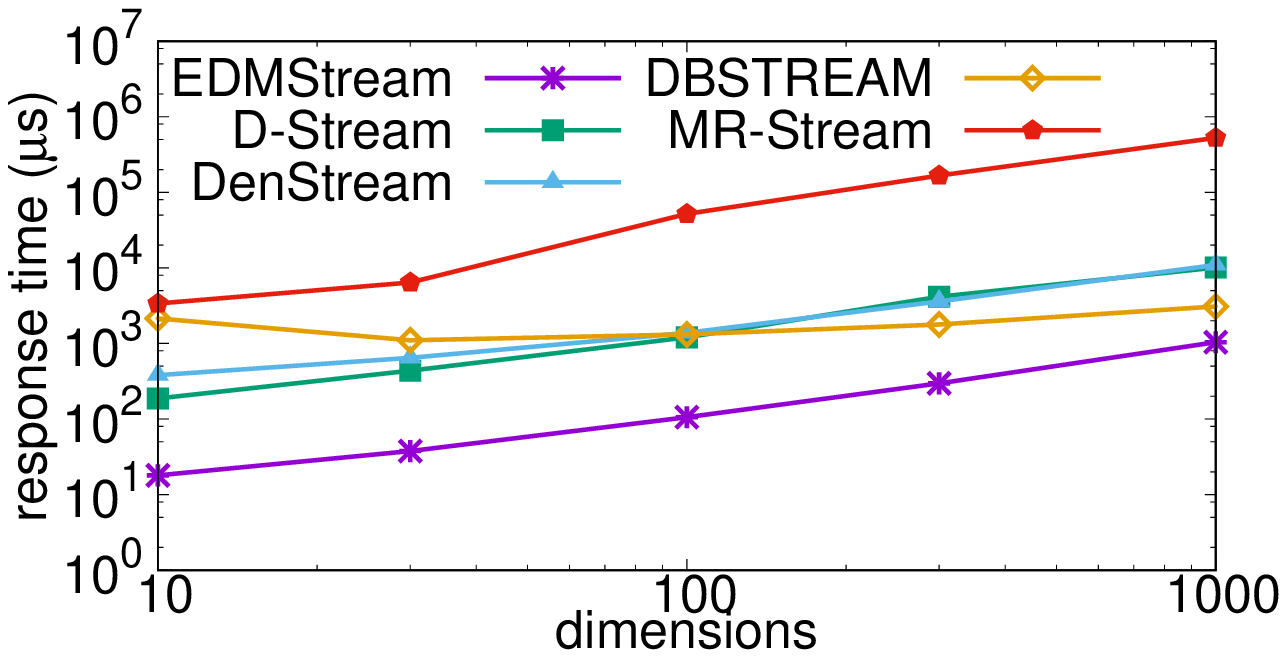}
    \vspace{-0.25in}
    \caption{Response time while varying data dimensions (HDS)}\label{fig:dimension}
    \vspace{-0.25in}
\end{figure}

We evaluate the \emph{EDMStream} on synthetic streams HDS with different data dimensions, 10D, 30D, 100D, 300D, 1000D. In this experiment, we remove the speed limits for preventing some competitors to fail. Fig. \ref{fig:dimension} shows the average response time of \textit{EDMStream} and its competitors on these streams. These algorithms exhibit similar trend when processing various-dimensions data streams. As the dimensionality increases, most algorithms require more time to update clustering result. This is under expectation, since more computation cost is needed for high dimensional data. The reason why DBSTREAM shows abnormal result is that, the performance of DBSTREAM is sensitive to the density of space and it runs faster on low density space. A stream with low dimensionality could lead to relatively high density of space, so there is a tradeoff between extra cost for high density data and the computation cost for high dimensional data.

\vspace{-0.05in}
\subsection{Cluster Quality}\label{sec:EDMandD}

To test the clustering result quality, we take state-of-the-art algorithms D-Stream, DenStream, DBSTREAM and MR-Stream as our competitors. Furthermore, we also evaluate the clustering results quality by varying stream rates.

The commonly used cluster quality evaluation metrics fail to take the evolution of streams and the freshness of instances into account. Take the \textit{F-measure} \cite{org1993Introduction} as an example. The merge or split of clusters retrieved from stream points could be considered as false positive or false negative. In our evaluation, we use a recently proposed \emph{CMM} (\emph{Clustering Mapping Measure}) criterion \cite{Kremer2011An}, which is external criterion taking into account the age of objects. The CMM measure can reflect errors related to emerging, splitting, or moving clusters, which are situations inherent to the streaming context. The CMM measure is a combination of penalties for each one of the following faults: 1) Missed objects $―$ clusters that are constantly moving may eventually “lose” objects, and thus CMM penalizes for these missed objects; 2) Misplaced objects $―$ clusters may eventually overlap over the course of the stream, and thus CMM penalizes for misplaced objects; 3) Noise inclusion $―$ CMM penalizes for noisy objects being inserted into existing clusters. Basically, the CMM value ranges from 0 to 1. Larger CMM value indicates higher cluster quality, while smaller CMM value indicates lower cluster quality.
\vspace{-0.05in}
\subsubsection{Compare with state-of-the-art algorithms}\label{sec:cmmcmpare}

\begin{figure*}[t]
\vspace{-0.4in}
    \centering
    \subfloat[KDDCUP99]{\includegraphics[width = 2in]{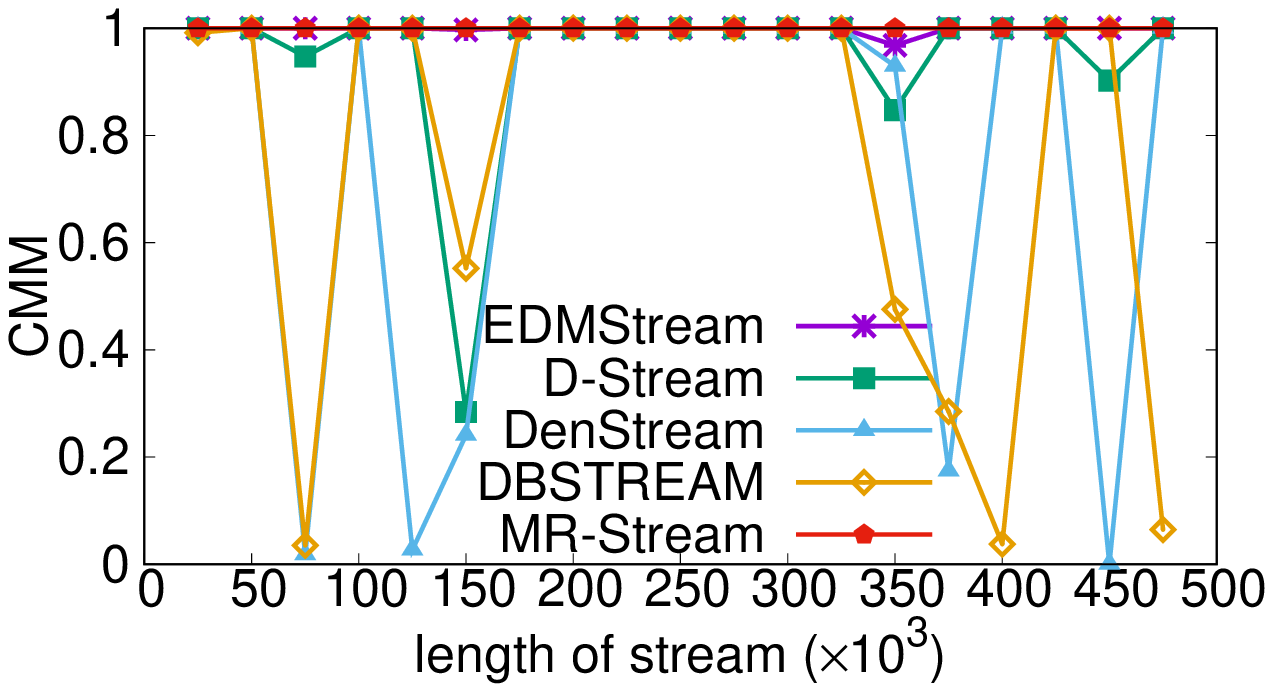}\label{fig:kddcup99result}}\ \
    \hspace{2pt}
    \subfloat[CoverType]{\includegraphics[width = 2in]{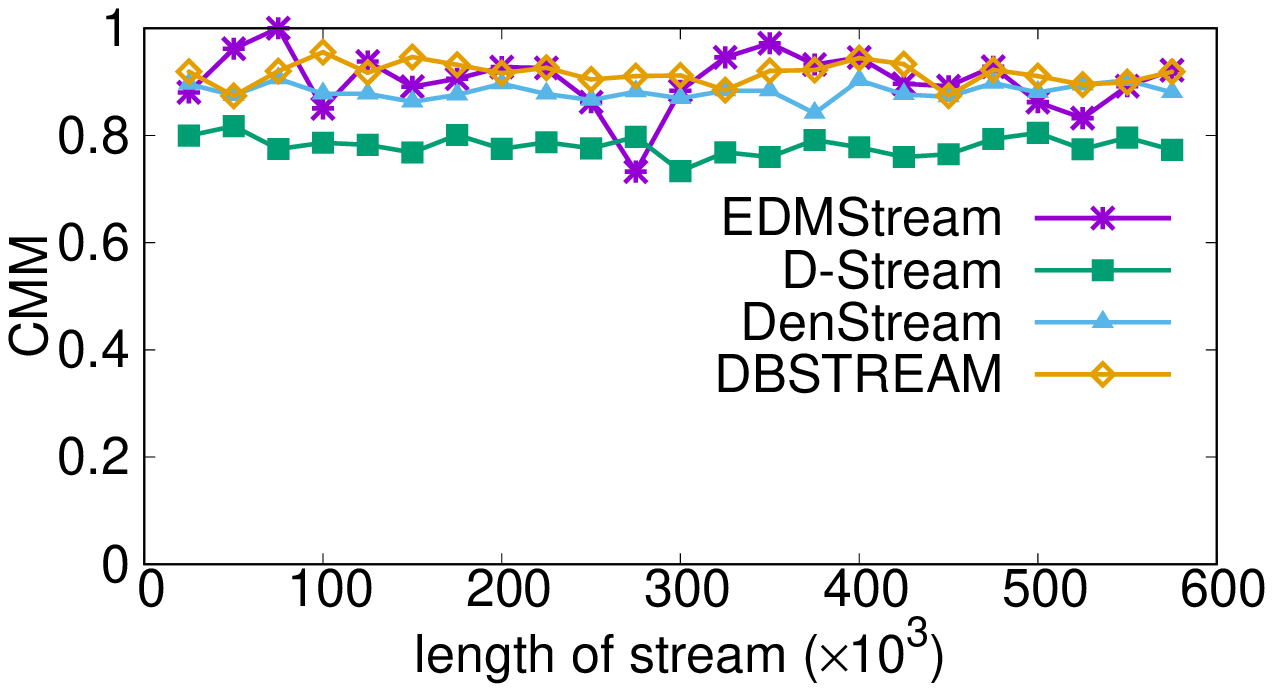}\label{fig:covtyperesult}}\ \
    \hspace{2pt}
    \subfloat[PAMAP2]{\includegraphics[width = 2in]{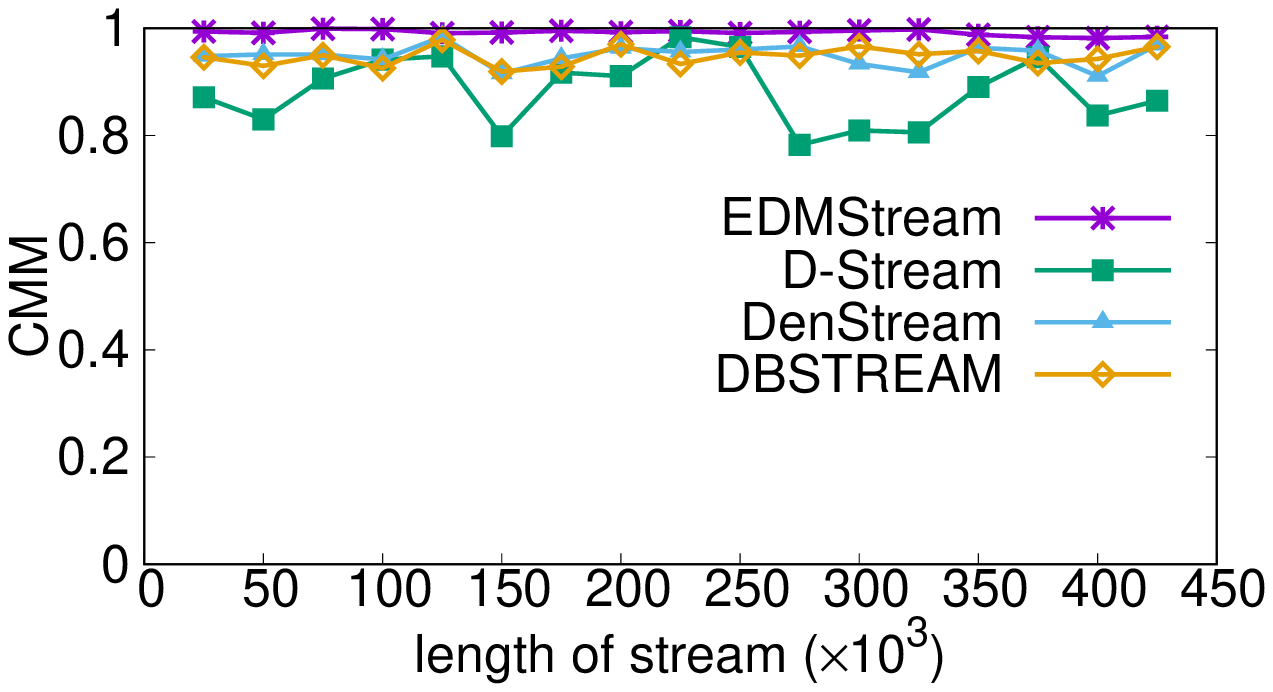}\label{fig:pparesult}}\ \
    \vspace{-0.12in}
    \caption{The comparison of clustering result quality}\label{fig:resultquality}
    \vspace{-0.22in}
\end{figure*}
Since we focus on cluster quality in this experiment, we reduce the stream points rate as much as possible to let competitors run normally. We launch these algorithms on three real datasets and compare their CMM \cite{Kremer2011An,Silva2014Data} metrics. Fig. \ref{fig:resultquality} shows their cluster CMM values over time. Our \textit{EDMStream} has comparable cluster quality with other algorithms. It is notable that MR-Stream treat each point as a cluster on the CoverType and PAMAP2 datasets, we do not consider it. Generally speaking, \textit{EDMStream}, DenStream, and DBSTREAM outperform D-Stream and MR-Stream in terms of CMM metric.
\vspace{-0.05in}
\subsubsection{Varying Stream Rate}\label{sec:cmmrate}

\begin{figure}[tp]
\vspace{-0.1in}
  \centering
   \includegraphics[width=2.2in, height=1.5in]{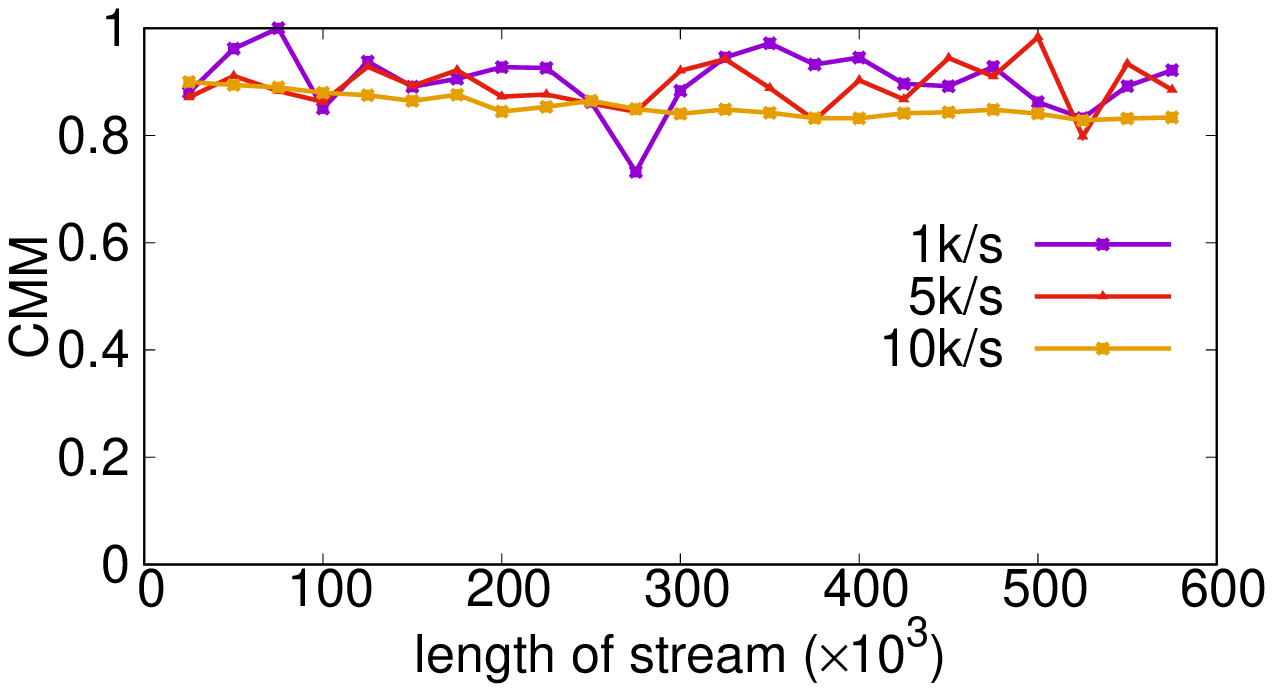}
   \vspace{-0.3in}
  \caption{Cluster quality varying stream rate}\label{fig:cmmrate}
  \vspace{-0.2in}
\end{figure}
We evaluate the cluster quality of \textit{EDMStream} by varying the stream rates (1K/s, 5K/s, 10K/s) on the CoverType dataset. Because the competitors fail with higher stream rate, we just compare CMM metrics of \textit{EDMStream} with different stream rates. The results are shown in Fig. \ref{fig:cmmrate}. We can see that the cluster quality are stable with various stream rates. \textit{EDMStream} is stable even in higher stream rate.




\vspace{-0.05in}
\subsection{Adaptability: Dynamic $\tau$ vs. Static $\tau$}
\label{sec:dynamicandstatic}

\textit{EDMStream} has the ability of adjusting itself to be adapt to the data distribution evolution (see Sec. \ref{sec:adaptive} and Fig. \ref{6result}). Precisely, \textit{EDMStream} can dynamically adjust the setting of the key parameter $\tau$, which controls the cluster separation granularity. In order to show the effectiveness of our dynamical $\tau$ setting strategy, we compare our \textit{dynamic} method with the \textit{static} method, which set $\tau$ as a constant.


\begin{table}[t]
\vspace{-0.05in}
\centering
\caption{The number of clusters changes over time (SDS)}\label{tab:tauquality}
\vspace{-0.1in}
\begin{tabular}{|c|c|c|c|c|c|c|c|c|c|c|} \hline
$t$ (s) & 1 & 2 & 3 & 4 & 5 & 6 & 7 & 8 & 9 & 10\\ \hline\hline
dynamic $\tau$& 2 & 2 & 2 & \color{blue}{2} & \color{blue}{2} & \color{blue}{2} & 1 & 1 & 1 & 1\\ \hline
static $\tau$  & 2 & 2 & 2 & \color{red}{1} & \color{red}{1} & \color{red}{1} & 1 & 1 & 1 & 1\\ \hline
\end{tabular}
\vspace{-0.1in}
\end{table}

\begin{figure}[t]
 \vspace{-0.1in}
    \captionsetup[subfigure]{farskip = -0.07in}
   \centering
   \subfloat[init $\tau$]{\includegraphics[width = 1.5in]{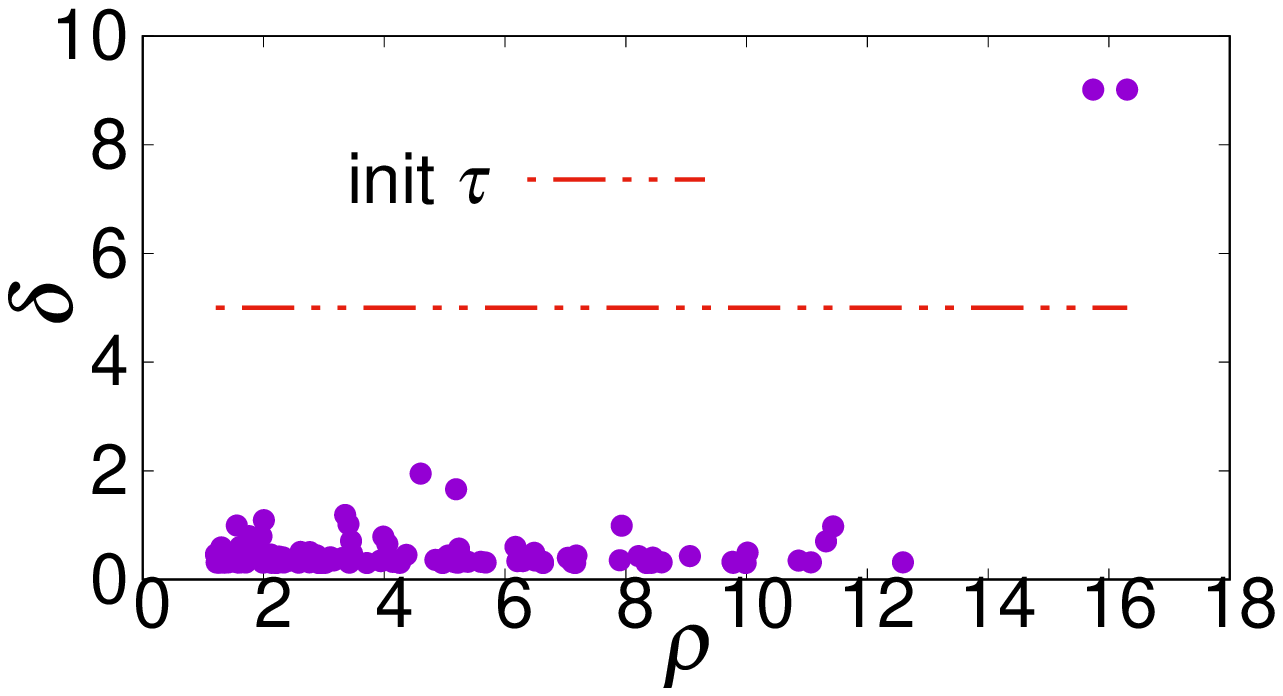}\label{fig:initGD}}\ \
   \vspace{-0.2in}
   \subfloat[$t=4s$]{\includegraphics[width = 1.5in]{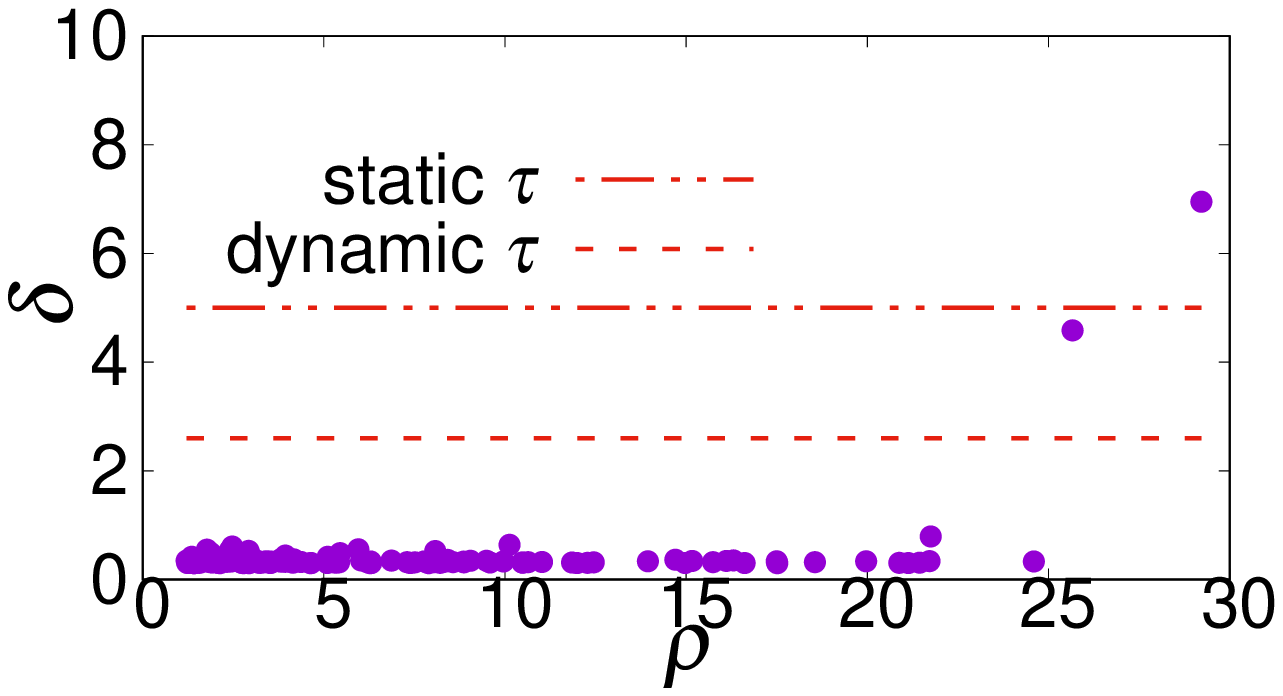}\label{fig:4DG}}\\
   \subfloat[$t=5s$]{\includegraphics[width = 1.5in]{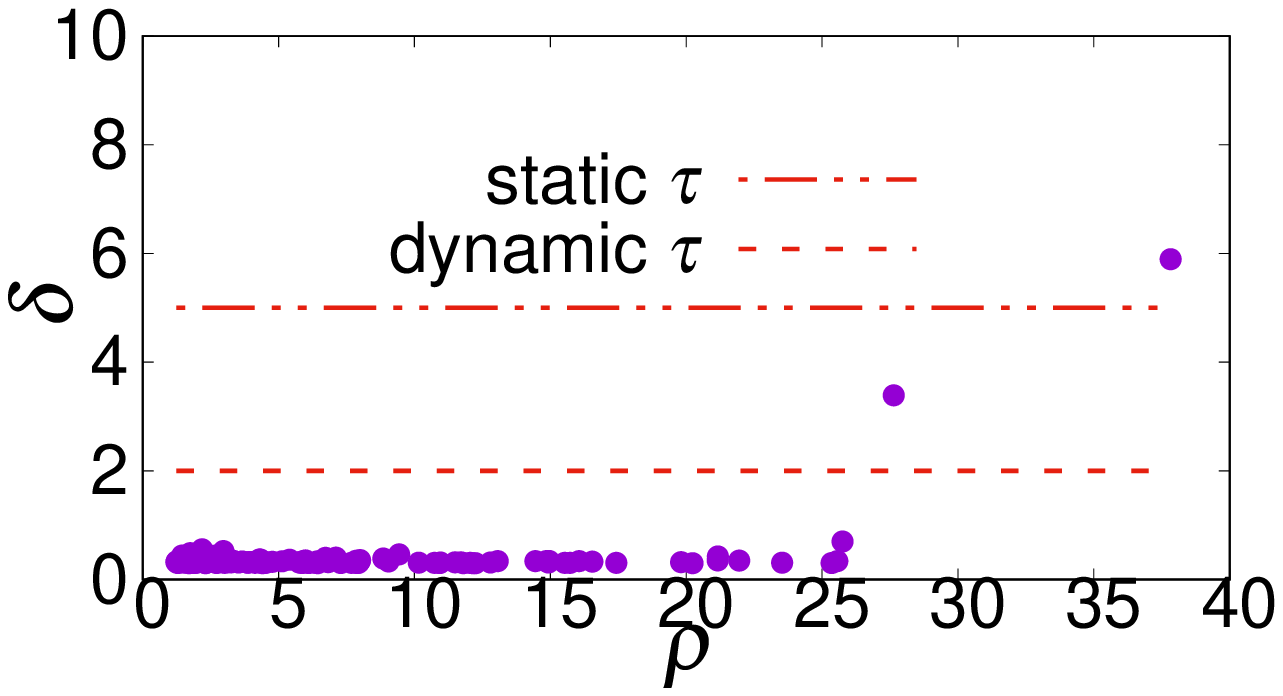}\label{fig:5DG}}\ \
   \subfloat[$t=6s$]{\includegraphics[width = 1.5in]{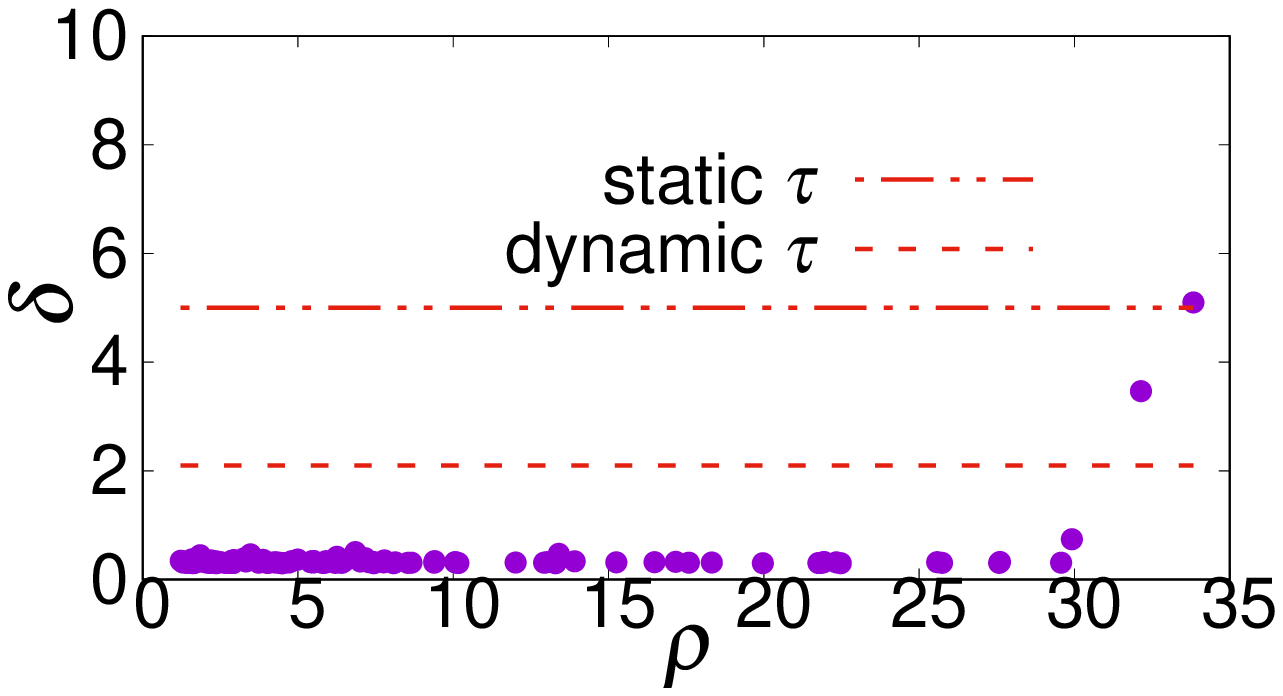}\label{fig:6DG}}\\
   \vspace{-0.15in}
    \caption{Decision graphs at different time points (SDS)}\label{fig:decgraphs}
    \vspace{-0.1in}
\end{figure}

We use the synthetic SDS to evaluate the \textit{dynamic} method and the \textit{static} method. Table \ref{tab:tauquality} shows the number of clusters in the first 10 seconds, where the 1st second result is the init result with user participation. The result of other time points is generated by algorithm automatically. We can see that the result is different at 4-6s. In order to identify the correct cluster result, we draw the decision graphs at init time and at these three time points as shown in Fig. \ref{fig:decgraphs}. In the init step, we choose $\tau=5$ since it can distinguish the density peaks from the other objects. Then the \textit{dynamic} method adjust $\tau$ value over time, while the \textit{static} method keeps using the same $\tau=5$.

As shown in Fig. \ref{fig:decgraphs}, at 4s-6s, two density peaks are higher than the dynamic $\tau$ line, but only one is higher than the static $\tau$ line. That is, two clusters exist at 4s-6s by dynamically adjusting $\tau$ while only one exists by using fixed $\tau$. Let us look at the original data distribution at 4s in Fig. \ref{t2}. It is obvious that the result of two clusters makes more sense. By using the static $\tau$, it will fail to identify the other density peak and as a result obtain a wrong clustering result. Therefore, our stream cluster algorithm has the self-adjustment ability, which is important to a stream clustering algorithm if running for a long period.
\vspace{-0.05in}
\subsection{Size of Outlier Reservoir}

\begin{figure}[t]
\vspace{0.05in}
    \centering
    \subfloat[CoverType]{\includegraphics[width = 1.5in]{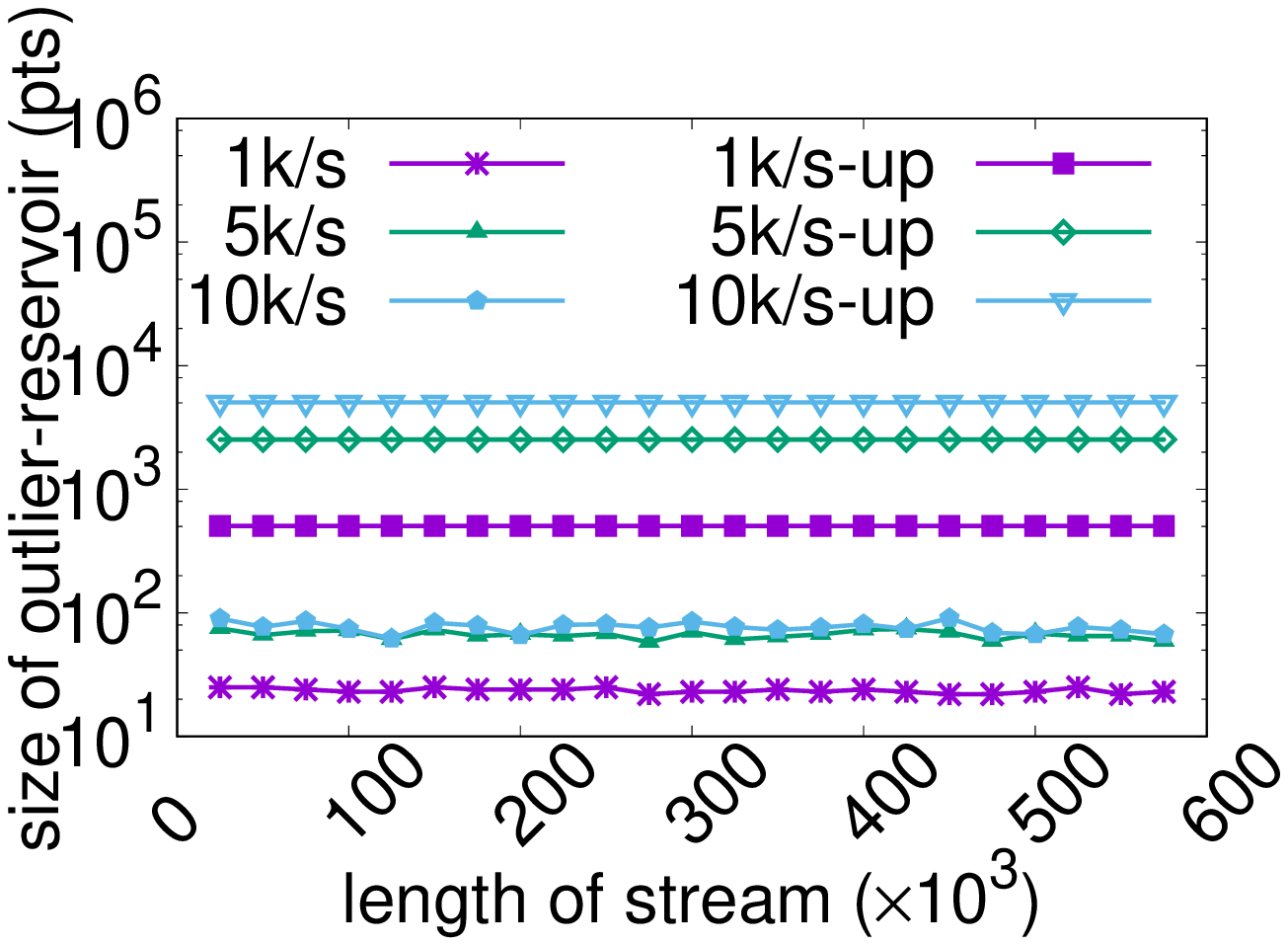}\label{fig:covtypeoutliers}}\ \
    \subfloat[PAMAP2]{\includegraphics[width = 1.5in]{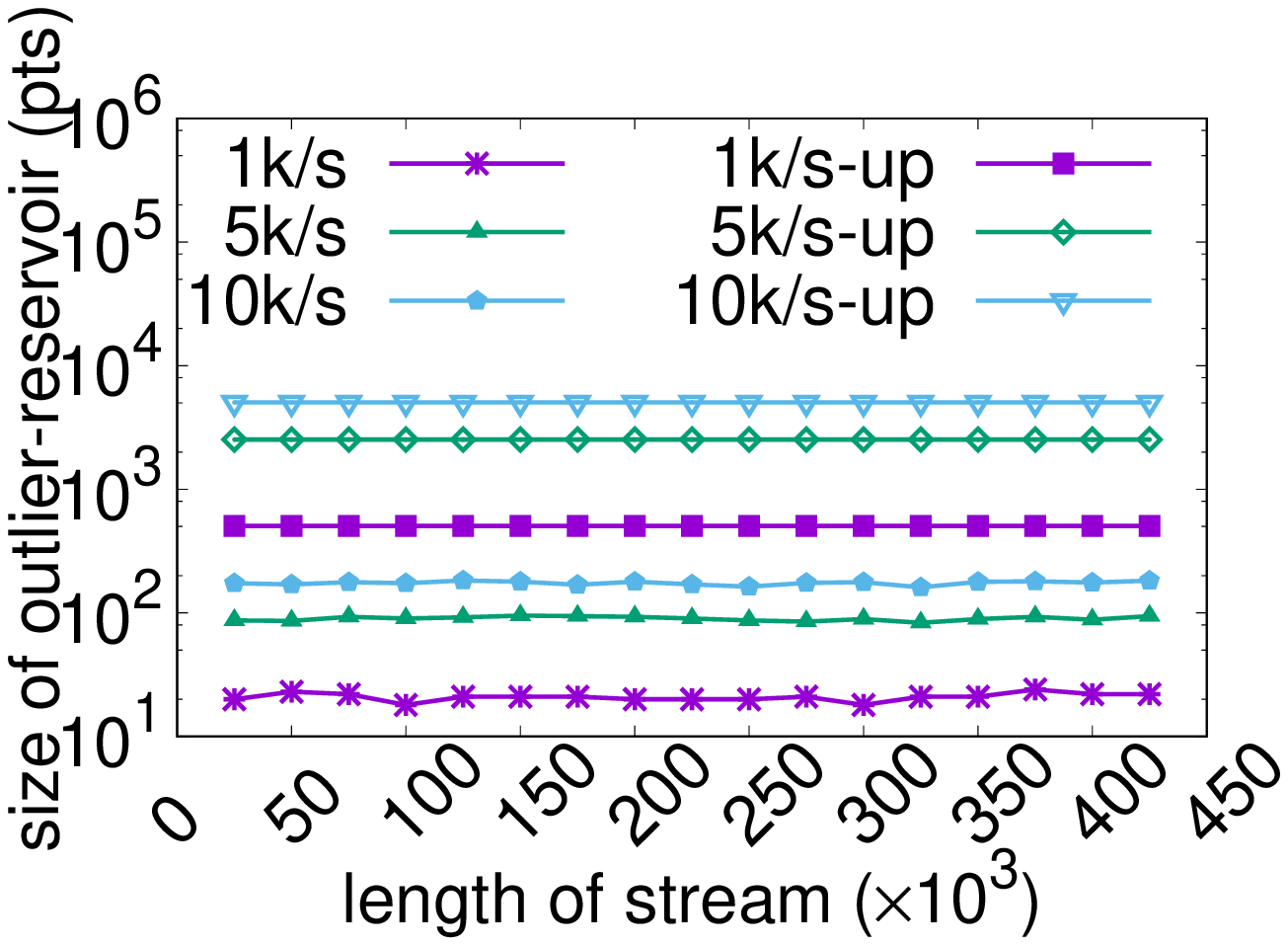}\label{fig:ppaoutliers}}
    \vspace{-0.1in}
    \caption{Size of outlier reservoir}\label{fig:outliers}
    \vspace{-0.1in}
\end{figure}

Tackling outliers is important for stream clustering, especially when massive outliers are mixed in stream. \emph{EDMStream} buffers the potential outliers (inactive cluster-cells) in the outlier reservoir. As discussed in Sec. \ref{subsec:remove}, given an average point arrival rate, the number of outliers has a theoretical upper bound in order to limit the size of outlier reservoir. In this experiment, we vary the point arrival rate (1K/s, 5K/s, 10K/s) and measure the outlier reservoir's size every X seconds. Fig. \ref{fig:outliers} shows the theoretical upper bounds of size (1K/s-up, 5K/s-up, 10K/s-up) and the measured sizes (1K/s, 5K/s, 10K/s). We see that the actual reservoir size is far less than the upper bound. The size of outlier reservoir can be predicted given the average point arrival rate. In order to reduce the size of outlier reservoir, users can accordingly adjust the decay model parameters to prolong point's freshness.

\vspace{-0.05in}
\subsection{Effect of Cluster-Cell Radius \large$r$}\label{subsec:radius}

The setting of cluster-cell radius $r$ has the effect on the cluster quality as well as processing speed. We test the clustering quality by varying $r$. As suggested in the original Density Peaks Clustering paper \cite{rodriguez2014clustering}, $r$ is chosen from 0.5\% to 2\% of the distance of all pairs of objects in the ascending order. Fig. \ref{fig:varyr} shows the cluster quality and response time when varying $r$. Smaller $r$ means more fine-grain cluster-cells. As result, it results in higher quality clusters but more computation overhead (i.e., longer response time). On the contrary, larger $r$ means less coarse-grain clusters so as to obtain lower quality clusters but return result faster.



\begin{figure}[t]
    \centering
    \subfloat[Cluster quality]{\includegraphics[width = 1.6in]{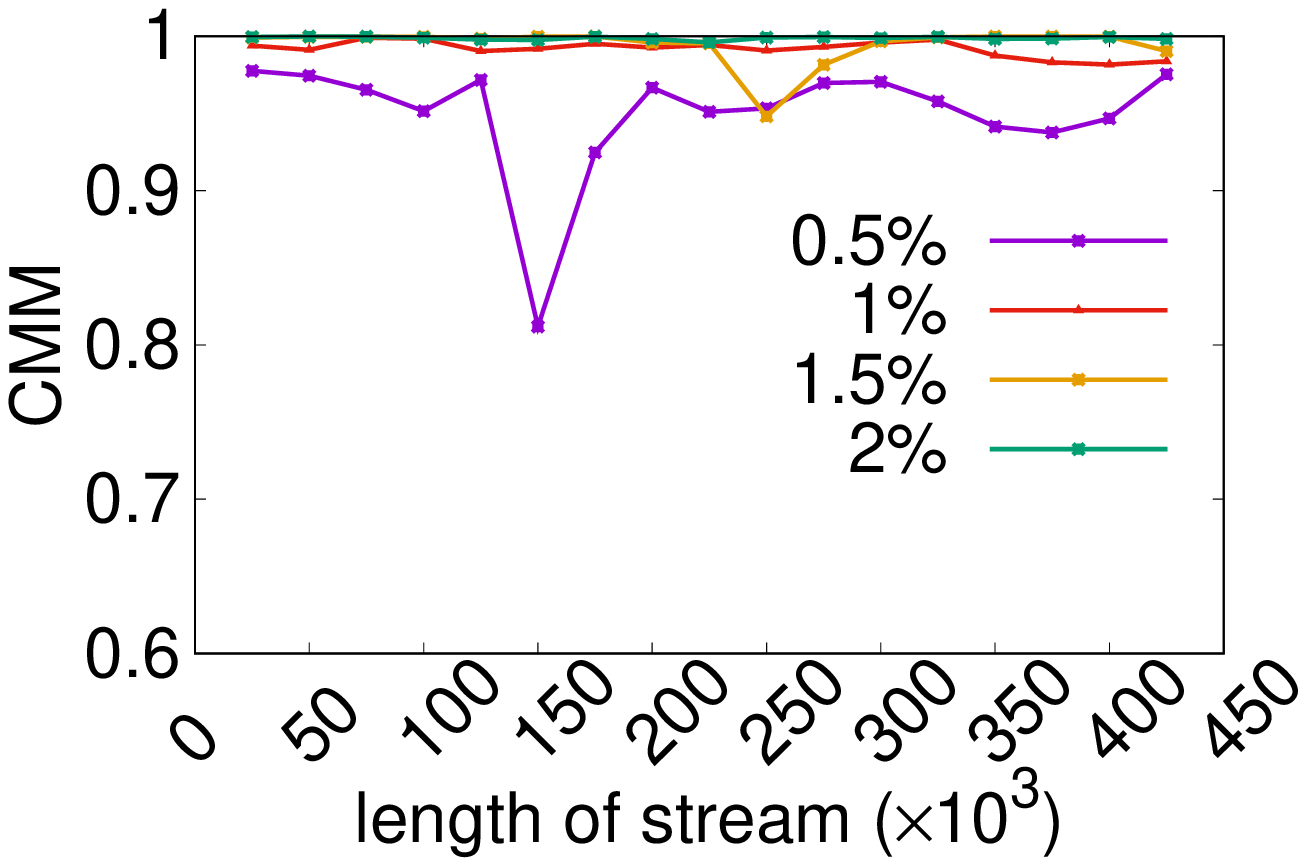}\label{fig:cmmvarRadius}}\ \
    \subfloat[Response time]{\includegraphics[width = 1.6in]{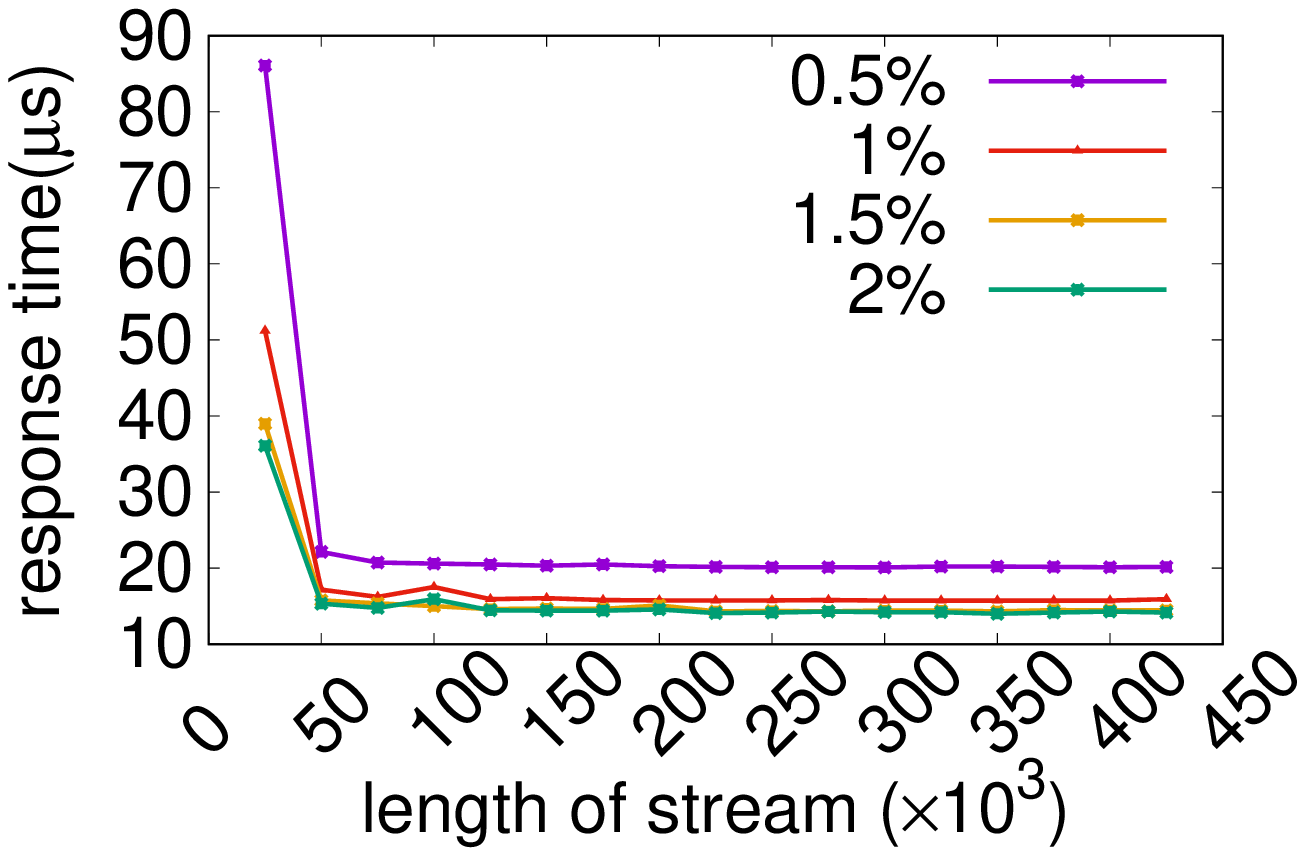}\label{fig:runtimevarRadius}}
    \vspace{-0.05in}
    \caption{Effect when varying $r$ (PAMAP2)}\label{fig:varyr}
    \vspace{-0.1in}
\end{figure}

\vspace{-0.05in}
\section{Related Work}
\label{sec:relate}

Clustering is one of the most important topics in data mining and has been very extensively studied. In recent years, real-time analysis and mining of stream data have attracted much attention from the research community. Many stream data clustering algorithms have also been proposed.

\vspace{-0.03in}
\Paragraph{Offline Clustering vs. Online Clustering} A class of stream clustering algorithms use a two-step approach, such as CluStream \cite{aggarwal2003framework}, DenStream \cite{Cao2006Density} and D-Stream \cite{Chen2007Density}, MR-Stream \cite{Wan2009Density} and DBSTREAM \cite{hahsler2016clustering}. The procedure of clustering contains an online data abstraction step and an offline clustering step. In the data abstraction step, the data stream is summarized using a specific data structure (e.g., micro-cluster and grid). The offline clustering step is triggered upon request, and a classical clustering method (e.g., k-means \cite{macqueen1967some} or DBSCAN \cite{ester1996density}) are used on these summarized data to obtain the cluster result. Our \textit{EDMStream} relies on online cluster result update, which can return the cluster result in real time.

\Paragraph{DBSCAN-based Clustering vs. DP-based Clustering}
A number of stream clustering algorithms are based on DBSCAN \cite{ester1996density} algorithm, e.g., DenStream \cite{Cao2006Density}, D-Stream \cite{Chen2007Density}, MR-Stream" \cite{Wan2009Density} and DBSTREAM \cite{hahsler2016clustering}. DBSCAN uses the density-connected information to build a density-connected graph. The clustering is to find all the \emph{maximal density-connected components} from the graph. DP clustering relies on the dependency relationship to build a DP-tree. The clustering is to find all the \emph{maximal strongly dependent subtrees} from the DP-Tree. As the tree (in DP) or graph (in DBSCAN) is continuously updating, we have to query the specific subtrees or subgraphs quickly according to the update. The \emph{hierarchical} tree structure is naturally more suitable for such a query than the \emph{flat} graph structure. This is because that it is much easier to identify the affected parts in tree than in graph. We only need to handle the affected successors of the update node in tree. However in graph structure, we have to re-evaluate the whole graph since there is no obvious affected part. This is why we rely on DP clustering and density mountain.
\vspace{-0.03in}
\vspace{-0.03in}
\Paragraph{CF-tree vs. DP-Tree} One of the earliest stream clustering algorithm is BIRCH \cite{zhang1996birch}. BIRCH uses \emph{cluster feature vector} (CF) to build CF-tree which abstract the agglomerative hierarchical clustering results. Though both \emph{EDMStream} and BIRCH use tree like structure to abstract clusters, i.e., DP-Tree and CF-Tree, these two trees are fundamentally different. DP-Tree is used to abstract the density mountains, where the nodes represents stream points or cluster-cells, and the links depict the dependency relationship between points. CF-Tree is used to manage the hierarchical clustering results, where the nodes represent certain grain level clusters, and the links depict the relationship between coarse-grain clusters and fine-grain clusters.

\vspace{-0.03in}
\Paragraph{Dynamic Clustering vs. Stream Clustering}
Recently, Gan and Tao \cite{gan2017dynamic} propose a new DBSCAN-based algorithm for dynamic clustering, which can return the updated clustering result very quickly (say in microseconds). The proposed dynamic clustering algorithm efficiently maintains data clusters along with point insertion/deletion occurring in the underlying dataset. While in stream clustering, we assume that data points are inserted continuously with timestamp information. The timestamp information is a key feature for streaming objects, which defines a freshness level. Data can also be decayed to meaningless levels. The time information reflects the emerging trends and is important for tracking cluster evolution. Although both dynamic clustering and stream clustering aim to return the updated clustering result in real time, stream clustering distinguishes fresh data from stale data and tends to give the fresh data more weight in clustering.


\vspace{-0.03in}
\Paragraph{Cluster Evolution Tracking}
All the above stream clustering algorithms fail to capture the cluster evolution activities. A few algorithms are proposed to monitor cluster evolution, e.g., MONIC \cite{Spiliopoulou2006MONIC} and MEC \cite{Oliveira2010MEC}. They trace the evolution of clusters by identifying the overlapping degree between two clusters. In MEC \cite{Oliveira2010MEC}, evolution tracking mechanism relies on the identification of the temporal relationship among them. In \textit{EDMStream}, we take full advantage of DP-Tree to track the evolutions of clusters. We can track the evolution according to the updates of DP-Tree structure.


\section{conclusion}\label{sec:conclu}
In this paper, we propose the \textit{EDMStream} algorithm, an effective and efficient method for clustering stream data. This method can not only track the evolution of stream data by monitoring the density mountain but also response cluster update in real time. By using the DP-Tree structure and a number of filtering schemes, the cluster result updates can be completed very quickly. \textit{EDMStream} also has the ability of adjusting itself to adapt to data evolution. This feature distinguishes it from most other stream clustering algorithms. Our experimental results demonstrate the effectiveness, efficiency, and adaptability of our algorithm. The quality of clustering is shown to be superior to the state-of-the-art stream clustering algorithms. \textit{EDMStream} can respond to a cluster update in 7-23 $\mu$s on average by using commodity PC and successfully adjust itself as data evolve.


\bibliographystyle{abbrv}
\bibliography{sigproc}

\end{document}